\definecolor{magmaKeyword}{RGB}{0,51,102}
\definecolor{magmaComment}{gray}{0.45}
\definecolor{magmaString}{RGB}{102,51,0}
\definecolor{magmaRule}{gray}{0.80}
\lstdefinelanguage{Magma}{
  morekeywords={
    and,or,not,if,then,elif,else,end,for,in,do,while,repeat,until,
    function,procedure,return,break,continue,
    true,false,
    div,mod,
    Matrix,DiagonalMatrix,IdentityMatrix,sub,GL,CyclotomicField,LCM,IsFinite,Isqrt,printf,
    Group,List,DiagonalMat,E
  },
  sensitive=true,
  morecomment=[l]{//},
  morestring=[b]",
}
\lstdefinestyle{magma}{
  language=Magma,
  basicstyle=\small\ttfamily,
  breaklines=true,
  breakatwhitespace=true,
  columns=fullflexible,
  keepspaces=true,
  showstringspaces=false,
  numbers=left,
  numberstyle=\tiny\color{magmaComment},
  frame=single,
  rulecolor=\color{magmaRule},
  framesep=3pt,
  xleftmargin=0.6cm,
  keywordstyle=\color{magmaKeyword},
  commentstyle=\color{magmaComment},
  stringstyle=\color{magmaString}
}
\newtheorem{theorem}{Theorem}
\newtheorem{proposition}[theorem]{Proposition}
\newtheorem{lemma}[theorem]{Lemma}
\newtheorem{corollary}[theorem]{Corollary}
\newtheorem{definition}[theorem]{Definition}
\newtheorem{remark}[theorem]{Remark}
\newcommand{\SU}{\mathrm{SU}}
\newcommand{\U}{\mathrm{U}}
\newcommand{\SL}{\mathrm{SL}}
\newcommand{\GL}{\mathrm{GL}}
\DeclareMathOperator{\Ad}{Ad}
\DeclareMathOperator{\Span}{span}
\begin{document}


\title{Quantum Universality in Composite Systems: A Trichotomy of Clifford Resources}

\author{Alejandro Borda}
\email{a.borda@uniandes.edu.co}
\affiliation{Department of Physics, Universidad de los Andes, Bogotá, D.C. 111711, Colombia}
\affiliation{Departamento de Matemáticas, Universidad de los Andes, Bogotá, D.C. 111711, Colombia}

\author{Julián Rincón}
\email{julian.rincon@uniandes.edu.co}
\affiliation{Department of Physics, Universidad de los Andes, Bogotá, D.C. 111711, Colombia}

\author{César Galindo}
\email{cn.galindo1116@uniandes.edu.co}
\affiliation{Departamento de Matemáticas, Universidad de los Andes, Bogotá, D.C. 111711, Colombia}

\date{\today}

\begin{abstract}
We show that single-qudit universality in Clifford-based gate sets follows a trichotomy determined by the prime factorization of the local dimension $d$. For prime $d$, any gate outside the Clifford group is universal. For prime-power dimensions $d=p^m$ with $m\ge 2$, not every non-Clifford gate is universal, but it can be achieved by suitable members of a family of diagonal phase gates, generalizing the qubit $T$ gate, as well as by permutations as simple as swapping $\ket{0}$ and $\ket{1}$ while leaving all other basis states unchanged. When $d$ decomposes into pairwise coprime prime powers, generalized CNOT-type gates between the corresponding factors already suffice for universality. In this composite case, universality can be obtained without introducing an explicit diagonal magic gate. Our results split non-Clifford resources for high-dimensional systems into two broad mechanisms: CNOT-type (permutations) or $T$-type (diagonal phases) gates.
\end{abstract}

\maketitle

\section{Introduction}\label{sec:introduction}

 A basic question in quantum information science is how quantum computation departs from classical computation. While quantum entanglement is often cited as one source of this advantage, the Gottesman-Knill theorem shows a limitation: within the stabilizer setting, Clifford circuits can be efficiently simulated classically~\cite{Gottesman1998, AaronsonGottesman2004}. This result extends to higher-dimensional qudits, where efficient classical simulation of Clifford circuits has also been established~\cite{Gheorghiu2014, deBeaudrap2013}. The missing resource that lifts computation beyond classical reach is typically identified as \emph{magic}: non-stabilizer states or non-Clifford gates that enable quantum universality---the ability to approximate any unitary operator to arbitrary precision by finite gate compositions~\cite{kitaevbook}. Establishing conditions under which the Clifford group in high-dimensional systems yields universal quantum computation remains an open problem. In this work, we analyze this question for the single-qudit Clifford group.

Our main focus is \emph{single-qudit universality}. Once the local problem is solved, a theorem of Brylinski and Brylinski, recalled in Theorem~\ref{thm:brylinski}, shows that combining any single-qudit universal set with an entangling two-qudit gate yields universal quantum computation on $n$ qudits~\cite{Brylinski2002}. The present work therefore isolates the local obstruction that underlies multi-qudit universality for arbitrary qudit dimension.

\textbf{Related work.} The structure of universality for qubit systems is well understood: the Clifford group supplemented by any non-Clifford gate---typically the $T$-gate from the third level of the Clifford hierarchy~\cite{GottesmanChuang1999}---generates a dense subgroup of $\SU(2)$~\cite{Gottesman1999, Nebe2001}. This ``Clifford~+~$T$'' paradigm underlies modern fault-tolerant quantum computation, where magic states and gate teleportation supply the non-Clifford operations~\cite{Bravyi2005, Knill2004}.

For higher-dimensional systems, the situation is less complete. Universality criteria for qudits of prime dimension have been established~\cite{Howard2012, Campbell2012}, and the maximality of the Clifford group in these cases follows from classical results on finite linear groups~\cite{Nebe2001}. However, prime-power and composite dimensions have not been systematically studied within a Clifford-based framework, even as experimental platforms, of so-called high-dimensional quantum systems, increasingly access non-binary levels~\cite{RingbauerEtAl2022, Low2025, Chicco_2023}.

Universality for arbitrary dimension $d$, \emph{not} based on the Clifford group, has been established through both constructive and algebraic frameworks. Previous work proved universality by decomposing elements of $\SU(d^n)$ into two-level unitaries, which were further expressed through phase and controlled operations~\cite{Luo2014}. Alternative constructions employed the generators of $\mathfrak{su}(d)$, where universal gate sets were generated via exponentiation of string operators, to produce $\U(d^n)$~\cite{VlasovJMP2002, VlasovSPIE2003}. These results show that universality in arbitrary dimension is known in general, but they do not resolve the specific question studied here: which additional local qudit gates make the Clifford group universal, and how the answer depends on the arithmetic structure of $d$.

Establishing universality amounts to proving that the generated group by such a gate set is dense in $\SU(d)$, the group of unitary operators with determinant one. We address this topological problem via the \emph{adjoint representation} of the group on $\mathfrak{sl}(d,\mathbb{C})$. A finitely generated subgroup of $\SU(d)$ is dense if and only if it is infinite and acts irreducibly on the Lie algebra $\mathfrak{sl}(d,\mathbb{C})$ (Theorem~\ref{thm:density_criterion}). The extent to which the Clifford group satisfies these conditions is governed by the prime factorization $d = p_1^{m_1} \cdots p_k^{m_k}$ (Theorem~\ref{thm:prime_irreducibility} and Theorem~\ref{thm:algebra_stratification}). This factorization dictates the algebraic obstruction and leads to the trichotomy studied here.

\subsection{Main Results}

We summarize the main results of this work as follows:

\begin{enumerate}
    \item[\textbf{(I)}]
    \textbf{Prime dimensions} $\bm{(d = p).}$ No obstruction. The projective Clifford group acts irreducibly on the Lie algebra, and its finite lift is maximal modulo scalar phases. We prove that both maximality and irreducibility hold \emph{if and only if} $d$ is prime: adding \emph{any} non-Clifford gate forces universality (Theorem~\ref{thm:prime_irreducibility}, Theorem~\ref{thm:maximality_iff}, and Corollary~\ref{cor:prime_universality}).

    \item[\textbf{(II)}]
    \textbf{Prime-power dimensions} $\bm{(d = p^m, \, m \geq 2).}$ Reducibility obstruction. The adjoint representation becomes reducible, decomposing the Lie algebra into invariant subspaces indexed by divisibility. The Clifford group acts transitively within each subspace but cannot connect them. Universality therefore requires a non-Clifford gate that restores irreducibility by coupling such subspaces. Two classes of gates accomplish this:
    \begin{itemize}
        \item Diagonal gates from the $T_s$ family with $s \nmid d$, generalizing the qubit $T$ gate to $d$ dimensions.
        
        \item Simple permutations, such as the transposition exchanging $\ket{0}$ and $\ket{1}$ while fixing the remaining basis states.
    \end{itemize}
    Not every non-Clifford gate suffices; one must choose a gate with the correct algebraic properties (Theorem~\ref{thm:prime_power_universality} and Theorem~\ref{thm:transposition_universality}).

    \item[\textbf{(III)}] \textbf{Composite dimensions with coprime factors} $\bm{(d = d_1 \cdots d_k).}$ Obstruction resolved by arithmetic. When the factors are pairwise coprime prime powers, the Clifford group decomposes as a direct product $\mathcal{C}_d \cong \mathcal{C}_{d_1} \times \cdots \times \mathcal{C}_{d_k}$ of the local Clifford groups, and consequently fails to act irreducibly on the global Lie algebra (Proposition~\ref{prop:clifford_decomposition}). However, generalized intra-qudit CNOTs connecting coprime factors automatically restore irreducibility and construct the necessary diagonal phase gates (Proposition~\ref{prop:bezout_construction}). When $k=2$, a single such gate suffices; for $k > 2$, multiple gates connecting adjacent factors are required. These gates satisfy both density conditions: they restore irreducibility by coupling the local Clifford structures, and they certify infiniteness by lying within projective distance $1/2$ of the identity (Theorem~\ref{thm:emergent_universality}). Accordingly, the needed diagonal phases can be obtained without taking a separate explicit diagonal magic gate as a primitive resource. Again, not every non-Clifford gate suffices (Theorem~\ref{thm:tensor_mixing}).

\end{enumerate}

In particular, permutations already provide universal resources in every dimension $d \ge 4$, although the relevant permutation depends on the arithmetic structure of $d$. Here the threshold $d \ge 4$ refers to the existence of some universal permutation across the trichotomy, whereas the stronger prime-dimensional statement that the transposition $(0\;1)$ is universal requires $p \ge 5$ because for $p=2,3$ every permutation is affine and, hence, Clifford.

The trichotomy highlights three kinds of resource gates: diagonal phase gates generalizing the qubit $T$ gate, simple basis permutations, and arithmetic gates acting between coprime factors of a composite register. Their role in this paper is structural. We identify the conditions under which such gates promote the single-qudit Clifford group to universality, and classify them as either $T$-type phase gates or CNOT-type permutation gates.

This dimension dependence matters because higher-dimensional and hybrid-dimensional qudit platforms are increasingly accessible experimentally~\cite{RingbauerEtAl2022, Low2025, Chicco_2023}, while the standard prime-dimensional Clifford~+~$T$ picture does not directly describe them. 
A local account of how universality changes beyond the prime case is still lacking. The trichotomy established here provides that account: the prime factorization of $d$ determines the obstruction to single-qudit universality and identifies local resources that overcome it in each regime. Once these obstructions are resolved, universality in the multi-qudit setting ensues.

\textbf{Structure of the paper.} Section~\ref{sec:preliminaries} sets up the universality criterion, recalls the Brylinski reduction from local to multi-qudit universality, and introduces the Clifford-group framework used throughout. Section~\ref{sec:geometric} proves the infiniteness criterion based on projective distance to the identity. Sections~\ref{sec:prime}--\ref{sec:composite} treat the three branches of the trichotomy in turn. Section~\ref{sec:discussion} concludes with implications, limitations, and open questions.

\section{Preliminaries}\label{sec:preliminaries}

\subsection{Criterion for Universality}

The practical realization of quantum algorithms relies on the ability to approximate arbitrary unitary transformations using a finite set of elementary gates. In the context of qudits, we distinguish between universality on a single local register and universality on a multi-qudit system. We adopt a topological definition based on the operator norm, which captures the physical requirement that any target unitary can be realized within an arbitrary error tolerance $\epsilon$.

\begin{definition}
A finite set of gates $\mathcal{G} \subset \SU(d)$ is \emph{single-qudit universal} if the group generated by $\mathcal{G}$, denoted $\langle \mathcal{G} \rangle$, is dense in $\SU(d)$ with respect to the operator norm topology. That is, for every $U \in \SU(d)$ and every $\epsilon > 0$, there exists a finite sequence of gates $g_1, \dots, g_k \in \mathcal{G}$ such that $\| U - g_k \cdots g_1 \| < \epsilon$.
\end{definition}

The focus of this work is primarily on establishing single-qudit universality. This focus is justified by the fact that the complexity of the universality problem resides almost entirely in the local structure. Once the relevant single-qudit obstruction has been identified, extending this to universal quantum computation on an array of qudits requires only a single additional gate $V$ that breaks the preservation of the local tensor structure (an \emph{entangling resource}). This reduction is formalized by the following result:

\begin{theorem}[Brylinski and Brylinski~\cite{Brylinski2002}] \label{thm:brylinski}
Let $\mathcal{G}_{\mathrm{local}} \subset \SU(d)$ be a single-qudit universal set and $V \in \SU(d^2)$ be a two-qudit gate. Then, the set $\mathcal{G}_{\mathrm{local}} \cup \{V\}$ is universal for quantum computation on $n$ qudits if and only if $V$ cannot be decomposed into local operations, even up to a permutation of the qudits. Explicitly, universality is achieved if and only if for all $U_1, U_2 \in \SU(d)$:
\begin{equation}\label{eq:brylinski_cond}
    V \neq U_1 \otimes U_2 \quad \text{and} \quad V \neq (U_1 \otimes U_2) \cdot \mathrm{SWAP}.
\end{equation}
\end{theorem}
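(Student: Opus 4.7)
The plan is to separate necessity from sufficiency. For necessity, each excluded form of $V$ traps the generated group inside a proper closed subgroup of $\SU(d^n)$: if $V = U_1 \otimes U_2$ every word stays inside $\SU(d)^{\otimes n}$, while if $V = (U_1 \otimes U_2)\cdot \mathrm{SWAP}$ every word lives in the semidirect product $\SU(d)^{\otimes n} \rtimes S_n$ of local tensor products with the symmetric group acting by permutation of tensor factors. Neither subgroup is dense in $\SU(d^n)$, so density and hence universality fails in both cases.

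For sufficiency I would pass to the Lie algebra. Let $G = \overline{\langle \mathcal{G}_{\mathrm{local}} \cup \{V\}\rangle}$; by Cartan's closed-subgroup theorem $G$ is a Lie subgroup of $\SU(d^n)$ with Lie algebra $\mathfrak{g}$, and since $\SU(d^n)$ is connected, density is equivalent to $\mathfrak{g} = \mathfrak{su}(d^n)$. Single-qudit universality of $\mathcal{G}_{\mathrm{local}}$ places each local copy $\mathfrak{su}(d)_i$ inside $\mathfrak{g}$, and standard bootstrapping arguments (conjugating $V$ by local gates to move it between sites, then using the two-qudit result to generate a full local $\SU(d^2)$ on each active pair) reduce the task to $n=2$. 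Setting $\mathfrak{a} = \mathfrak{su}(d)\otimes I + I \otimes \mathfrak{su}(d)$, it suffices to prove that the smallest Lie subalgebra containing $\mathfrak{a}$ and stable under $\Ad(V)$ is all of $\mathfrak{su}(d^2)$ whenever $V$ is neither local nor local-times-SWAP.

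The crux is a structural claim about intermediate subalgebras: any Lie subalgebra $\mathfrak{h}$ with $\mathfrak{a}\subseteq \mathfrak{h}\subsetneq \mathfrak{su}(d^2)$ has normalizer contained, modulo the center, in $(\SU(d)\times\SU(d))\cdot \langle\mathrm{SWAP}\rangle$. Granting this, suppose for contradiction that $\mathfrak{g}\subsetneq \mathfrak{su}(d^2)$; since $V$ normalizes the connected component $G^0$ it normalizes $\mathfrak{g}$, so $V$ lies in the aforementioned subgroup, contradicting the hypothesis~\eqref{eq:brylinski_cond}. The main obstacle is precisely this classification of intermediate subalgebras together with the computation of their normalizers. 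I would decompose $\mathfrak{su}(d^2)$ as an $\mathfrak{a}$-module under the adjoint action, identify its isotypic components, and verify by direct bracket computation that any $\mathfrak{a}$-invariant subspace strictly containing $\mathfrak{a}$ Lie-generates all of $\mathfrak{su}(d^2)$, with the SWAP-normalized branch being the sole exception. A cleaner organization, essentially the original Brylinski--Brylinski approach, encodes imprimitivity via a polynomial ``entangling-power'' invariant of $V$ whose vanishing is equivalent to $V$ normalizing $\mathfrak{a}$ and whose non-vanishing fuels the bracket closure back up to full $\mathfrak{su}(d^2)$.
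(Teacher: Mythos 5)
The paper does not prove Theorem~\ref{thm:brylinski}; it states it as a citation to Brylinski--Brylinski and uses it as a black box. The closest the paper comes to proving it is its own re-derivation and generalization in Section~\ref{sec:composite}: Proposition~\ref{prop:sl_decomposition} gives the $\mathfrak{a}$-module decomposition of $\mathfrak{sl}(kl,\mathbb{C})$, Lemma~\ref{lem:normalizer_general} computes the normalizer of the product subalgebra, Theorem~\ref{thm:tensor_mixing} derives the irreducibility criterion, and Corollary~\ref{cor:local_universal} recovers (and extends to $k\neq l$) exactly the sufficiency direction of the Brylinski--Brylinski result.

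Your proposal follows essentially the same route as the paper's Section~\ref{sec:composite}. The necessity direction is handled the same way --- the two excluded forms of $V$ trap the generated group inside the proper closed subgroups $\SU(d)^{\otimes n}$ and $\SU(d)^{\otimes n}\rtimes S_n$. For sufficiency, your structural claim (any subalgebra $\mathfrak{h}$ with $\mathfrak{a}\subseteq\mathfrak{h}\subsetneq\mathfrak{su}(d^2)$ has normalizer in $(\SU(d)\times\SU(d))\cdot\langle\mathrm{SWAP}\rangle$) is in fact stronger than what you need: because $\mathfrak{m}\cong\mathfrak{sl}(d)\otimes\mathfrak{sl}(d)$ is irreducible over $\mathfrak{a}$ after complexification, the \emph{only} intermediate subalgebra is $\mathfrak{a}$ itself, so you only need the normalizer of $\mathfrak{a}$, which is precisely what Lemma~\ref{lem:normalizer_general} of the paper computes via Schur's lemma. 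The organizational difference is small: you argue directly that $\mathfrak{g}=\mathrm{Lie}(\overline{G})$ must be all of $\mathfrak{su}(d^2)$, whereas the paper phrases the same content through irreducibility of the adjoint representation (Theorem~\ref{thm:density_criterion}); for this particular theorem --- where $\mathcal{G}_{\mathrm{local}}$ is already dense so the closure is automatically positive-dimensional --- your route sidesteps the infiniteness half of the paper's criterion and is marginally cleaner.

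Two small inaccuracies worth fixing. First, ``conjugating $V$ by local gates to move it between sites'' is not how the $n\to 2$ reduction works: local conjugation cannot change which tensor slots a two-qudit gate acts on. The correct statement is that the two-qudit gate may be \emph{applied} to any pair of registers, and once $\SU(d^2)$ is achieved on one pair, a standard compilation argument (e.g.~two-level decomposition) lifts this to $\SU(d^n)$. Second, be explicit that $\mathfrak{g}$ is $\Ad(V)$-stable not merely because $V$ normalizes $G^\circ$ but because $G^\circ\trianglelefteq G$ for any topological group $G$; this is what licenses the contradiction argument when you suppose $\mathfrak{g}\subsetneq\mathfrak{su}(d^2)$. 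Neither affects the soundness of the plan.
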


Consequently, establishing universality reduces to identifying when a finite set of gates generates a dense subgroup of the special unitary group. Proving density directly via topological arguments is often hard. The main difficulty lies in distinguishing whether a finitely generated subgroup is dense or trapped within a proper, infinite closed Lie subgroup of lower dimension. 

Instead, we employ an algebraic criterion based on the rigidity of simple Lie algebras. This criterion transforms the topological problem of density into two algebraic verifications: the irreducibility of the action on the associated Lie algebra and the infiniteness of the group.

To establish the notation, we introduce the necessary infinitesimal structures. The Lie algebra $\mathfrak{su}(d)$ consists of all trace-zero skew-Hermitian matrices:
\begin{equation}
    \mathfrak{su}(d) = \{X \in M_d(\mathbb{C}) : X^\dagger = -X, \, \tr(X) = 0\}.
\end{equation}
This forms a real vector space of dimension $d^2-1$. Its complexification is the special linear Lie algebra $\mathfrak{sl}(d,\mathbb{C}) = \{X \in M_d(\mathbb{C}) : \tr(X) = 0\}$. The \emph{adjoint representation} $\Ad\colon \SU(d) \to \GL(\mathfrak{su}(d))$ is defined by conjugation:
\begin{equation}
    \Ad_g(X) = gXg^\dagger,
\end{equation}
for $g \in \SU(d)$ and $X \in \mathfrak{su}(d)$. Since unitary conjugation preserves the Hilbert-Schmidt inner product, this action is unitary. The irreducibility of this action is insensitive to complexification: the action on $\mathfrak{su}(d)$ is irreducible if and only if the complexified action on $\mathfrak{sl}(d,\mathbb{C})$ is irreducible.

We now establish the connection between the subgroup dimension and the representation.

\begin{lemma}\label{lem:adjoint_irreducibility}
Let $K$ be a closed subgroup of $\SU(d)$ with positive dimension. Then, the adjoint representation of $K$ on $\mathfrak{su}(d)$ is reducible unless $K = \SU(d)$.
\end{lemma}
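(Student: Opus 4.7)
The plan is to apply Cartan's closed subgroup theorem: since $K \subset \SU(d)$ is closed, it carries the structure of an embedded Lie subgroup with Lie algebra $\mathfrak{k} := T_e K \subset \mathfrak{su}(d)$. The positive-dimension hypothesis gives $\mathfrak{k} \neq \{0\}$, and the strategy is to exhibit $\mathfrak{k}$ itself as a proper nonzero $\Ad_K$-invariant subspace of $\mathfrak{su}(d)$, which immediately yields reducibility.

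First I would check invariance. For any $k \in K$, the inner automorphism $x \mapsto k x k^{-1}$ restricts to a diffeomorphism of $K$ that preserves the identity component $K^0$; its derivative at $e$ is precisely $\Ad_k$ acting on $T_e K^0 = \mathfrak{k}$. Hence $\Ad_k(\mathfrak{k}) \subseteq \mathfrak{k}$ for every $k \in K$, even in the disconnected case. Next, I would rule out $\mathfrak{k} = \mathfrak{su}(d)$: in that case $\exp(\mathfrak{k})$ would fill a neighborhood of $e$ in $\SU(d)$, making $K^0$ an open (and hence clopen) subgroup of the connected group $\SU(d)$; this forces $K^0 = \SU(d)$ and therefore $K = \SU(d)$, contradicting the standing assumption $K \neq \SU(d)$. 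Combining the two facts, $\mathfrak{k}$ is a nonzero proper $\Ad_K$-invariant subspace and the adjoint representation is reducible.

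I do not anticipate a genuine obstacle here; the statement is essentially a direct consequence of the correspondence between closed subgroups and Lie subalgebras. The only subtlety worth flagging in the final write-up is that the invariance must be established for the full group $K$ rather than merely for $K^0$, which is what makes the preservation of $K^0$ under inner automorphisms the load-bearing observation in the disconnected case.
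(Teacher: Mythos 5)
Your proof is correct and follows essentially the same route as the paper: both arguments exhibit $\mathfrak{k} = \mathrm{Lie}(K)$ as a nonzero $\Ad_K$-invariant subspace of $\mathfrak{su}(d)$, and both conclude that $\mathfrak{k} = \mathfrak{su}(d)$ would force $K = \SU(d)$ by connectedness of $\SU(d)$. Your explicit justification that $\Ad_k$ preserves $\mathfrak{k}$ for all $k \in K$ (via inner automorphisms preserving $K^0$) is a helpful elaboration of a step the paper treats as standard, but it is not a different argument.
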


\begin{proof}
Let $\mathfrak{k} = \mathrm{Lie}(K) \subseteq \mathfrak{su}(d)$ be the Lie algebra of $K$. Since $K$ has positive dimension, $\mathfrak{k} \neq \{0\}$.
The Lie algebra of a subgroup is invariant under the adjoint action of the group. Thus, $\mathfrak{k}$ is a $K$-invariant subspace of $\mathfrak{su}(d)$.
If we assume the adjoint action of $K$ on $\mathfrak{su}(d)$ is irreducible, the only non-zero invariant subspace can be the entire space itself. Therefore, we must have $\mathfrak{k} = \mathfrak{su}(d)$.

Since $\SU(d)$ is connected and $K$ is a closed subgroup possessing the full Lie algebra of the ambient group, the connected component of the identity $K^\circ$ must coincide with $\SU(d)$. Since $K \subseteq \SU(d)$, it follows that $K = \SU(d)$.
\end{proof}

This lemma leads to the following criterion for single-qudit universality.

\begin{theorem}\label{thm:density_criterion}
Let $\mathcal{A} \subset \SU(d)$ be a finite set of gates. Then $\mathcal{A}$ is single-qudit universal if and only if:
\begin{enumerate}
    \item The group $\langle\mathcal{A}\rangle$ is infinite, and
    \item The adjoint representation of $\langle\mathcal{A}\rangle$ on $\mathfrak{su}(d)$ (equivalently on $\mathfrak{sl}(d,\mathbb{C})$) is irreducible.
\end{enumerate}
\end{theorem}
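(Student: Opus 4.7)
The plan is to recast density as the statement $\overline{\langle\mathcal{A}\rangle}=\SU(d)$ and then let Lemma~\ref{lem:adjoint_irreducibility} carry the nontrivial direction. The setup is clean because $\SU(d)$ is compact, so the closure $K:=\overline{\langle\mathcal{A}\rangle}$ is automatically a compact (Cartan closed-subgroup) Lie subgroup and the whole argument reduces to dimension counting plus invariance bookkeeping.

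For the forward implication, assume $\langle\mathcal{A}\rangle$ is dense. Since $\SU(d)$ is infinite and any finite subset of a Hausdorff space is closed, $\langle\mathcal{A}\rangle$ cannot be finite, giving condition (1). For condition (2), I would invoke the standard fact that for $d\geq 2$ the real Lie algebra $\mathfrak{su}(d)$ is simple, so that the adjoint representation of $\SU(d)$ itself is irreducible. Because $\Ad$ is continuous and a linear subspace $W\subset\mathfrak{su}(d)$ is closed, the stabilizer $\{g\in\SU(d):\Ad_g(W)\subseteq W\}$ is a closed subgroup of $\SU(d)$; if it contains the dense set $\langle\mathcal{A}\rangle$, it equals $\SU(d)$. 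So any $\langle\mathcal{A}\rangle$-invariant subspace is $\SU(d)$-invariant and hence trivial, yielding irreducibility for $\langle\mathcal{A}\rangle$.

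For the reverse implication, assume (1) and (2), and set $K:=\overline{\langle\mathcal{A}\rangle}$, a closed subgroup of $\SU(d)$. The same closed-stabilizer argument as above shows that every $\langle\mathcal{A}\rangle$-invariant subspace of $\mathfrak{su}(d)$ is $K$-invariant, and conversely $K$-invariance trivially implies $\langle\mathcal{A}\rangle$-invariance, so the adjoint action of $K$ on $\mathfrak{su}(d)$ is also irreducible. Next I argue $K$ has positive dimension: $K$ is a compact Lie group containing the infinite set $\langle\mathcal{A}\rangle$, and a compact Lie group whose identity component is trivial is discrete and compact, hence finite. Since $K$ is infinite, $\dim K>0$. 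Lemma~\ref{lem:adjoint_irreducibility} applied to $K$ then forces $K=\SU(d)$, so $\langle\mathcal{A}\rangle$ is dense.

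I expect no deep obstacle; the main care-point is the two invariance transfers, ``$\langle\mathcal{A}\rangle$-invariant $\Leftrightarrow$ $K$-invariant,'' which relies on the linear subspaces being closed and on continuity of $\Ad$, and the step ``infinite plus compact implies positive-dimensional,'' which implicitly uses Cartan's theorem that closed subgroups of Lie groups are Lie subgroups. Both are standard but easy to elide; once stated cleanly, the theorem follows at once from Lemma~\ref{lem:adjoint_irreducibility} and the simplicity of $\mathfrak{su}(d)$.
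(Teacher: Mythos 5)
Your proof is correct and follows essentially the same route as the paper: reduce density to the statement $\overline{\langle\mathcal{A}\rangle}=\SU(d)$, transfer irreducibility between $\langle\mathcal{A}\rangle$ and its closure $K$, use infiniteness plus Cartan's closed-subgroup theorem to conclude $\dim K>0$, and finish with Lemma~\ref{lem:adjoint_irreducibility}. Your write-up is actually a bit more careful than the paper's on two points—making the closed-stabilizer argument explicit for the invariance transfer, and spelling out that a compact Lie group with trivial identity component is finite—but these are elaborations of the same argument, not a different approach.
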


\begin{proof}
Let $K = \overline{\langle \mathcal{A} \rangle}$ be the topological closure of the generated group in $\SU(d)$.

$(\Rightarrow)$ Suppose $K = \SU(d)$. Since $\SU(d)$ has positive dimension, $\langle \mathcal{A} \rangle$ must be infinite. Furthermore, since $\mathfrak{su}(d)$ is a simple Lie algebra (for $d \ge 2$), the adjoint action of $\SU(d)$ is irreducible. Since $\langle \mathcal{A} \rangle$ is dense in $K$, any subspace invariant under $K$ is invariant under $\langle \mathcal{A} \rangle$. Thus, the dense subgroup must also act irreducibly.

$(\Leftarrow)$ Suppose $\langle \mathcal{A} \rangle$ is infinite and acts irreducibly on $\mathfrak{su}(d)$. Since the subgroup is infinite and contained in a compact group, its closure $K$ cannot be a finite set. Being a closed subgroup of a Lie group, $K$ is itself a Lie group with $\dim(K) > 0$.
The irreducibility of $\langle \mathcal{A} \rangle$ implies that the adjoint action of $K$ on $\mathfrak{su}(d)$ is also irreducible. We now apply Lemma~\ref{lem:adjoint_irreducibility} to $K$. Since $K$ has positive dimension and acts irreducibly, we conclude $K = \SU(d)$.
\end{proof}

In practice, we will often verify the second condition using the complex Lie algebra $\mathfrak{sl}(d,\mathbb{C})$, which simplifies calculations by allowing the use of complex bases such as the generalized Pauli operators.

\subsection{The Clifford Group and its Structure}\label{subsec:clifford_structure}

We now establish the structural properties of the Clifford group essential for our universality arguments. Consider a $d$-dimensional Hilbert space $\mathcal{H}_d \cong \mathbb{C}[\mathbb{Z}_d]$ equipped with the computational basis $\{\ket{k} : k \in \mathbb{Z}_d\}$. We distinguish between the index set viewed as the additive abelian group $\mathbb{Z}_d$, and the ring of integers modulo $d$, denoted by $\mathbb{Z}/d\mathbb{Z}$.
This is the standard generalized Pauli/Clifford framework for arbitrary qudit dimension, naturally expressed in modular-arithmetic terms~\cite{CliffordPauliGeneralization}.

The generalized Pauli operators $X$ and $Z$ are defined by their action on the computational basis:
\begin{equation*}
    X\ket{j} = \ket{j+1}, \qquad Z\ket{j} = \omega^j\ket{j},
\end{equation*}
where arithmetic in the indices is taken modulo $d$, and $\omega = e^{2\pi i/d}$ is the primitive $d$-th root of unity. These operators satisfy the  commutation relation $$Z X = \omega X Z.$$
The \emph{Heisenberg group} $H(d)$ (sometimes called the generalized Pauli group or just the Pauli group~\cite{Gheorghiu2014, deBeaudrap2013}) consists of:
\begin{equation*}
    H(d) = \begin{cases}
         \{\omega^l X^a Z^b : a,b,l \in \mathbb{Z}_d\} & \text{if } d \text{ is odd}, \\
         \{\zeta^l X^a Z^b : a,b \in \mathbb{Z}_d, \, l \in \mathbb{Z}_{2d}\} & \text{if } d \text{ is even},
    \end{cases}
\end{equation*}
where $\zeta = e^{\pi i / d}$ is a primitive $2d$-th root of unity. It is convenient to label the Pauli operators by vectors $u = (a,b) \in \mathbb{Z}_d \times \mathbb{Z}_d$, denoting $V_u = X^a Z^b$. The set $\{V_u : u \neq (0,0)\}$ forms an orthogonal basis for the Lie algebra $\mathfrak{sl}(d, \mathbb{C})$ with respect to the Hilbert-Schmidt inner product.

\subsubsection*{Projective vs.\ Finite Clifford Groups}

In the quantum information literature, the term ``Clifford group'' is often used interchangeably to refer to distinct mathematical objects. It is relevant for our  analysis to  distinguish between the \emph{projective group} and its finite unitary representations.

\begin{definition}
The (projective) single-qudit Clifford group, denoted $C(d)$, is defined as the normalizer of the Heisenberg group in the unitary group, modulo global phases:
\begin{equation}
\begin{split}
    C(d) &= N_{\U(d)}(H(d)) / \U(1) \\
         &= \{ U \in \U(d) : U H(d) U^\dagger = H(d) \} / \U(1).
\end{split}
\end{equation}
\end{definition}

The adjoint action of any representative $U \in C(d)$ maps Pauli operators to Pauli operators, inducing an automorphism on the index space. This action corresponds to the special linear group over the ring $\mathbb{Z}/d\mathbb{Z}$. This structural rigidity is characterized by the short exact sequence~\cite{Tolar_2018}:
\begin{equation}\label{eq:clifford_exact_seq}
    1 \longrightarrow \mathbb{Z}_d^2 \longrightarrow C(d) \xrightarrow{\Psi} \SL(2, \mathbb{Z}/d\mathbb{Z}) \longrightarrow 1.
\end{equation}
Here, the kernel $\mathbb{Z}_d^2 \cong H(d)/\U(1)$ corresponds, modulo global phases, to the Pauli operators themselves. Since elements of the Heisenberg group commute up to a scalar, their adjoint action on the Pauli basis is trivial on the index space (mapping $V_u \mapsto \omega^k V_u$ without changing $u$).
In contrast, the quotient $\SL(2, \mathbb{Z}/d\mathbb{Z})$ characterizes the non-trivial structural changes: these gates actively transform the Pauli set by implementing group automorphisms on the indices, $u \mapsto \Psi(u)$.

While $C(d)$ is the natural object for studying symmetries, it is an abstract quotient group. The full normalizer $N_{\U(d)}(H(d))$ within the unitary group is continuous, as it includes the circle group $\U(1)$ of global phases. To properly formulate the universality problem---which asks whether a finite set of gates generates a dense subgroup---we must identify a concrete \emph{finite} subgroup of unitary matrices that surjects onto $C(d)$.

\begin{definition}
We define the \emph{finite single-qudit Clifford group}, denoted $\mathcal{C}_d \subset \U(d)$, as the subgroup generated by the  operator $X$, the generalized Hadamard gate $H_d$, and the phase gate $P$:
\begin{equation*}
    H_d\ket{j} = \frac{1}{\sqrt{d}} \sum_{k=0}^{d-1} \omega^{jk} \ket{k}, \qquad P\ket{j} = \omega^{j(j+\rho_d)/2}\ket{j},
\end{equation*}
where $\rho_d = 0$ if $d$ is even and $\rho_d = 1$ if $d$ is odd.
\end{definition}

The two definitions play different roles in the paper. The projective group $C(d)$ is the intrinsic normalizer of the Heisenberg group modulo global phases, and it is the natural object for structural statements such as exact sequences and maximality modulo scalars. By contrast, the finite group $\mathcal{C}_d \subset \U(d)$ is a concrete gate set generated by standard Clifford unitaries, and it is the object we use when discussing generation by gates and density in $\SU(d)$. The natural projection $\U(d) \to \mathrm{PU}(d)$ maps $\mathcal{C}_d$ onto $C(d)$, so the two viewpoints differ only by central phase factors.

To validate that the finite group $\mathcal{C}_d = \langle X, H_d, P \rangle$ serves as the correct lifting of the projective Clifford group, we analyze the action of its generators under the homomorphism $\Psi$. The images of the Hadamard and phase gates correspond to the standard generators of the special linear group over the ring $\mathbb{Z}/d\mathbb{Z}$:
\begin{equation*}
    \Psi(H_d) = \begin{pmatrix} 0 & -1 \\ 1 & 0 \end{pmatrix}, \qquad 
    \Psi(P) = \begin{pmatrix} 1 & 0 \\ 1 & 1 \end{pmatrix}.
\end{equation*}
Since these matrices generate $\SL(2, \mathbb{Z}/d\mathbb{Z})$, the group $\mathcal{C}_d$ surjects onto the symplectic quotient. Furthermore, the relation $Z = H_d X H_d^\dagger$ implies that $\mathcal{C}_d$ contains the full discrete Heisenberg group $H(d)$ (up to the phases required for group closure). Consequently, $\mathcal{C}_d$ fully captures the exact sequence structure defined in Eq.~\eqref{eq:clifford_exact_seq}.

\begin{remark}
Unless stated otherwise, statements about concrete gate generation in this paper are made using the finite lift $\mathcal{C}_d$. The group $\mathcal{C}_d$ fits into the short exact sequence:
\begin{equation}\label{eq:exact_sequence_finite_clifford}
    1 \longrightarrow H(d) \longrightarrow \mathcal{C}_d \longrightarrow \SL(2, \mathbb{Z}/d\mathbb{Z}) \longrightarrow 1.
\end{equation}
Unlike the full normalizer in $\U(d)$, which contains the continuous circle group $\U(1)$, $\mathcal{C}_d$ is a finite group. Its order is strictly determined by the size of the $\SL(2, \mathbb{Z}/d\mathbb{Z})$ group and the scalar phases intrinsic to $H(d)$:
\begin{equation}
    |\mathcal{C}_d| = |H(d)| \cdot |\SL(2, \mathbb{Z}/d\mathbb{Z})|.
\end{equation}

\end{remark}

The finite Clifford group factorizes over coprime dimensions---a property previously identified by Looi and Griffiths \cite{Looi-Griffiths} for stabilizer states---which allows us to reduce the problem to independent prime-power sectors.

Because the Pauli and Clifford groups depend on the chosen computational basis, the tensor-product description used in the coprime composite case should be understood relative to a fixed arithmetic identification. Throughout this subsection, we keep the computational basis $\{\ket{x}: x \in \mathbb{Z}_d\}$ of $\mathbb{C}^d$ and, for $d=d_1\cdots d_n$ with pairwise coprime factors, use the map
\[
x \mapsto (x \bmod d_1,\dots,x \bmod d_n)
\]
to identify basis states as
\[
\ket{x} \longleftrightarrow \ket{x \bmod d_1}\otimes \cdots \otimes \ket{x \bmod d_n}.
\]
All subsequent statements about the factorization of the Clifford group and about intra-qudit mixing gates are made with respect to this fixed identification. In particular, calling an intra-qudit CNOT a non-Clifford resource is always relative to the same global $d$-dimensional Clifford structure determined by this basis choice.

\begin{proposition}\label{prop:clifford_decomposition}
Let $d = d_1 \cdots d_n$, where the factors are pairwise coprime ($\gcd(d_i, d_j) = 1$ for $i \neq j$). Then there is a group isomorphism:
\begin{equation*}
    \mathcal{C}_d \cong \mathcal{C}_{d_1} \times \cdots \times \mathcal{C}_{d_n}.
\end{equation*}
\end{proposition}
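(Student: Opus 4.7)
The plan is to leverage the Chinese Remainder Theorem (CRT) at three parallel levels---Hilbert spaces, Heisenberg groups, and symplectic quotients---and then invoke the short five lemma on the exact sequence in Eq.~\eqref{eq:exact_sequence_finite_clifford}. The CRT ring isomorphism $\mathbb{Z}/d\mathbb{Z} \cong \prod_i \mathbb{Z}/d_i\mathbb{Z}$ lifts to an isometric isomorphism of Hilbert spaces $\Phi: \mathcal{H}_d \to \mathcal{H}_{d_1} \otimes \cdots \otimes \mathcal{H}_{d_n}$ defined by $\Phi\ket{j} = \ket{j \bmod d_1} \otimes \cdots \otimes \ket{j \bmod d_n}$, since CRT is a bijection on indices. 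Conjugation by $\Phi$ then yields a candidate homomorphism
\begin{equation*}
\iota: \mathcal{C}_{d_1} \times \cdots \times \mathcal{C}_{d_n} \to \U(d), \qquad (U_1,\ldots,U_n) \mapsto \Phi^{-1}(U_1 \otimes \cdots \otimes U_n)\Phi,
\end{equation*}
and the remaining task is to show that $\iota$ takes values in $\mathcal{C}_d$ and is bijective.

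The next step is to track the generalized Pauli operators through $\Phi$. Additivity of CRT gives $\Phi X \Phi^{-1} = X_1 \otimes \cdots \otimes X_n$ immediately. Writing $j \equiv \sum_i e_i j_i \pmod{d}$ in terms of the CRT idempotents $e_i$ (with $e_i \equiv 1 \pmod{d_i}$ and $e_i \equiv 0 \pmod{d_j}$ for $j \neq i$) and simplifying $\omega_d^j$ term by term, one obtains $\Phi Z \Phi^{-1} = Z_1^{m_1} \otimes \cdots \otimes Z_n^{m_n}$, where $m_i \equiv (d/d_i)^{-1} \pmod{d_i}$. Inverting, the local Paulis are recovered as powers of the global ones: $X_i = X^{e_i}$ and $Z_i = Z^{d/d_i}$, which establishes a bijection between $\prod_i H(d_i)$ (embedded via $\Phi$) and $H(d)$. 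A parallel computation on the Hadamard generator gives $\Phi H_d \Phi^{-1} = \bigotimes_i H_{d_i} M_{m_i}$, where $M_{m_i}\ket{k} = \ket{m_i k \bmod d_i}$ is multiplication by the CRT unit; since $M_{m_i} \in \mathcal{C}_{d_i}$ (its symplectic image is $\mathrm{diag}(m_i, m_i^{-1}) \in \SL(2, \mathbb{Z}/d_i\mathbb{Z})$), the tuple $(H_{d_1}M_{m_1}, \ldots, H_{d_n}M_{m_n})$ lies in $\prod_i \mathcal{C}_{d_i}$ and is mapped by $\iota$ to $H_d$. A similar calculation handles the phase gate $P$.

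To conclude, I would assemble these computations into a commutative diagram of short exact sequences linking $\prod_i H(d_i) \hookrightarrow \prod_i \mathcal{C}_{d_i} \twoheadrightarrow \prod_i \SL(2, \mathbb{Z}/d_i\mathbb{Z})$ to the sequence in Eq.~\eqref{eq:exact_sequence_finite_clifford}, with the outer vertical arrows supplied by the CRT isomorphisms (from the Pauli tracking above for $H$, and from the standard decomposition of $\SL_2$ over a product ring for the symplectic quotient). The short five lemma then forces $\iota$ to be an isomorphism. Equivalently, the order identities $|H(d)| = \prod_i |H(d_i)|$ and $|\SL(2,\mathbb{Z}/d\mathbb{Z})| = \prod_i |\SL(2, \mathbb{Z}/d_i\mathbb{Z})|$ give $|\mathcal{C}_d| = \prod_i |\mathcal{C}_{d_i}|$, so once surjectivity onto the generating set $\{X, H_d, P\}$ is verified, bijectivity is automatic.

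The main technical obstacle is phase bookkeeping at the Heisenberg layer: the CRT twist by the unit $m_i \equiv (d/d_i)^{-1} \pmod{d_i}$ prevents the Pauli identifications from commuting ``on the nose'' with the naive tensor factorization, and must be absorbed into the inner automorphisms generated by the multiplication operators $M_{m_i}$. Since each $M_{m_i}$ is itself Clifford, the diagram commutes after this absorption and the five lemma still applies. A secondary subtlety is that for even $d$ the Heisenberg center carries $2d$-torsion instead of $d$-torsion; since at most one coprime factor $d_i$ can be even, the order equality $|H(d)| = \prod_i |H(d_i)|$ is preserved and the even case requires no separate treatment.
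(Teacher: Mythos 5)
Your proposal is correct and follows essentially the same route as the paper: identify $\mathcal{H}_d$ with $\bigotimes_i \mathcal{H}_{d_i}$ via CRT, track the Clifford generators through this identification, and close with a counting/structural argument. The one place you add genuine value is the handling of the unit twist: the paper asserts $H_d \cong H_{d_1}\otimes H_{d_2}$ ``modulo integer scaling,'' whereas you correctly compute $\Phi H_d \Phi^{-1} = \bigotimes_i H_{d_i} M_{m_i}$ with $m_i \equiv (d/d_i)^{-1} \pmod{d_i}$, and observe that since each multiplication operator $M_{m_i}$ is itself a Clifford element (with symplectic image $\mathrm{diag}(m_i, m_i^{-1})$), the twist is harmless. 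The same twist appears in the $Z$ factorization $\Phi Z \Phi^{-1} = \bigotimes_i Z_i^{m_i}$, which you also get right. Your conclusion via the short five lemma on Eq.~\eqref{eq:exact_sequence_finite_clifford} is cleaner than the paper's order-counting remark, and your note that the even case causes no extra trouble because at most one coprime factor can be even is a worthwhile sanity check. In short: same approach, but your bookkeeping at the Heisenberg layer is tighter than the paper's.
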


\begin{proof}
It suffices to establish the case $n=2$ with coprime factors $d_1, d_2$.
We define the map $\Phi: \mathcal{C}_{d_1} \times \mathcal{C}_{d_2} \to \U(d_1 d_2)$ via the tensor product $\Phi(U_1, U_2) = U_1 \otimes U_2$. We show this maps generators to generators, inducing an isomorphism onto $\mathcal{C}_d$.

Recall that by the Chinese Remainder Theorem, the map $\chi: \mathbb{Z}/d\mathbb{Z} \to \mathbb{Z}/{d_1}\mathbb{Z} \times \mathbb{Z}/{d_2}\mathbb{Z}$ given by $x \mapsto (x \bmod d_1, x \bmod d_2)$ is a ring isomorphism.

Since the Clifford generators are defined entirely by arithmetic operations on the basis indices, the CRT implies their factorization:
\begin{enumerate}
    \item The shift operator ($X$): The action $X\ket{x} = \ket{x+1}$ relies on addition. Under $\chi$, adding $1$ modulo $d$ corresponds to adding $(1, 1)$ in the product ring. Thus, $X_d \cong X_{d_1} \otimes X_{d_2}$.
    \item Hadamard ($H$): The Fourier transform kernel $\omega^{xy}$ factorizes because the inner product in $\mathbb{Z}_d$ decomposes into the sum of local inner products (modulo integer scaling), yielding $H_d \cong H_{d_1} \otimes H_{d_2}$.
    \item Phase Gate ($P$): The gate applies a phase $\omega^{Q(x)}$, where $Q(x) = x(x+\rho_d)/2$ is a quadratic polynomial. Since $\chi$ is a ring homomorphism, it preserves polynomial evaluation: computing $Q(x)$ modulo $d$ is equivalent to computing pairs $(Q(x_1) \bmod d_1, Q(x_2) \bmod d_2)$. Consequently, the diagonal operator factorizes as $P_d \cong P_{d_1} \otimes P_{d_2}$.
\end{enumerate}
Since the generators map bijectively and the groups have the same order, $\Phi$ is an isomorphism.
\end{proof}

\subsection{Permutations in the Clifford Group}\label{subsec:clifford_permutations}

While the Clifford group acts mainly by creating superpositions via the Hadamard gate, it also contains a subgroup of classical reversible operations corresponding to permutations of the computational basis. Characterizing this subgroup is essential for identifying which classical gates constitute non-Clifford resources.

Let $S_d$ be the symmetric group of degree $d$. For any permutation $\pi \in S_d$, we define the associated permutation matrix $P_\pi \in \U(d)$ by its action on the basis states: $P_\pi |x\rangle = |\pi(x)\rangle$.

\begin{proposition}\label{prop:clifford_permutations}
A permutation matrix $P_\pi$ belongs to the Clifford group $\mathcal{C}_d$ if and only if the underlying permutation $\pi: \mathbb{Z}_d \to \mathbb{Z}_d$ is an invertible affine transformation over the ring $\mathbb{Z}_d$. That is, $\pi$ is of the form:
\begin{equation}
    \pi(x) = a x + b \pmod d,
\end{equation}
where $a \in \mathbb{Z}_d^\times$ (a unit in $\mathbb{Z}_d$) and $b \in \mathbb{Z}_d$. Consequently, the subgroup of Clifford permutations is isomorphic to the affine group $\mathrm{Aff}(1, \mathbb{Z}_d)$.
\end{proposition}

\begin{proof}
$(\Rightarrow)$ Assume $P_\pi \in \mathcal{C}_d$. By definition, conjugation by $P_\pi$ must map the Heisenberg group $H(d)$ to itself.

Consider the action on the Pauli-$Z$ operator, which is diagonal with entries $\omega^x$ where $\omega = e^{2\pi i/d}$. Conjugation permutes the diagonal entries:
\begin{equation*}
    P_\pi Z P_\pi^\dagger = \sum_{x \in \mathbb{Z}_d} \omega^x |\pi(x)\rangle\langle \pi(x)| = \sum_{y \in \mathbb{Z}_d} \omega^{\pi^{-1}(y)} |y\rangle\langle y|.
\end{equation*}
For the resulting diagonal operator to belong to $H(d)$, it must be of the form $\omega^{c_0} Z^k$ for some $k, c_0 \in \mathbb{Z}_d$. Comparing the diagonal entries, we require $\omega^{\pi^{-1}(y)} = \omega^{c_0 + ky}$ for all $y \in \mathbb{Z}_d$. This implies:
\begin{equation*}
    \pi^{-1}(y) \equiv k y + c_0 \pmod d.
\end{equation*}
Since $\pi^{-1}$ is a bijection, the coefficient $k$ must be a unit in $\mathbb{Z}_d$; otherwise, the map $y \mapsto ky + c_0$ would fail to be injective. Thus, $\pi^{-1}$ is an affine map with invertible linear part, and consequently $\pi$ itself must be affine: $\pi(x) = ax + b$ with $a \in \mathbb{Z}_d^\times$.

$(\Leftarrow)$ Conversely, assume $\pi(x) = ax + b$ with $a \in \mathbb{Z}_d^\times$. We verify the conjugation action on the generators $Z$ and $X$ of the Heisenberg group.

For the Pauli-$Z$ operator, a direct calculation using $\pi^{-1}(y) = a^{-1}(y - b)$ gives:
\begin{equation*}
    P_\pi Z P_\pi^\dagger = \sum_{y \in \mathbb{Z}_d} \omega^{a^{-1}(y-b)} |y\rangle\langle y| = \omega^{-a^{-1}b} Z^{a^{-1}},
\end{equation*}
which belongs to $H(d)$.

For the  operator $X$, we compute:
\begin{align*}
    P_\pi X P_\pi^\dagger |y\rangle &= P_\pi X |\pi^{-1}(y)\rangle = P_\pi | \pi^{-1}(y) + 1 \rangle \\
    &= | \pi( \pi^{-1}(y) + 1 ) \rangle.
\end{align*}
Substituting $\pi^{-1}(y) = a^{-1}(y-b)$, the argument of $\pi$ becomes:
\begin{equation*}
    \pi\big( a^{-1}(y-b) + 1 \big) = a \cdot a^{-1}(y-b) + a + b = y + a.
\end{equation*}
Thus, $P_\pi X P_\pi^\dagger = X^a$. Since $a$ is a unit, $X^a$ generates the same cyclic subgroup as $X$ and belongs to $H(d)$.

Since $P_\pi$ maps the generators of $H(d)$ to elements of $H(d)$, we conclude $P_\pi \in \mathcal{C}_d$.
\end{proof}

\begin{remark}\label{rem:non_affine_permutations}
The order of the affine group is $|\mathrm{Aff}(1, \mathbb{Z}_d)| = \varphi(d) \cdot d$, where $\varphi$ denotes Euler's totient function.

For $d = 2$, we have $|S_2| = 2$ and $|\mathrm{Aff}(1, \mathbb{Z}_2)| = 1 \cdot 2 = 2$. For $d = 3$, we have $|S_3| = 6$ and $|\mathrm{Aff}(1, \mathbb{Z}_3)| = 2 \cdot 3 = 6$. In both cases, the groups coincide: every permutation is affine and thus a Clifford operation.

However, for $d \ge 4$, the inequality $\varphi(d) \cdot d < d!$ implies the strict inclusion $\mathrm{Aff}(1, \mathbb{Z}_d) \subsetneq S_d$. Consequently, non-affine permutations exist. For instance, in $d = 4$, the transposition $(0 \; 1)$---which swaps $|0\rangle$ and $|1\rangle$ while fixing $|2\rangle$ and $|3\rangle$---is a non-Clifford gate despite being a classical reversible operation.
\end{remark}

\subsection{Diagonal Gates and the \texorpdfstring{$T_s$}{T\_s}-gate Family}\label{subsec:general_Ts}

To characterize diagonal non-Clifford operators, for any dimension $d$, we introduce a generic family of diagonal gates.

\begin{definition}
Let $\zeta: \mathbb{Z}_d \to \U(1)$ be a phase function. We define the associated diagonal gate $T_\zeta$ by its action on the computational basis:
\begin{equation}
    T_\zeta |x\rangle = \zeta(x) |x\rangle, \quad \text{for } x \in \mathbb{Z}_d.
\end{equation}
\end{definition}

To determine whether a specific $T_\zeta$ belongs to the Clifford group, we examine its conjugation action on the operator $X$. A simple calculation shows that $T_\zeta X T_\zeta^\dagger = D X$, where $D$ is a diagonal operator encoding the relative phases. This motivates the introduction of the \emph{differential} of the phase function.

\begin{definition}
The differential (or 1-coboundary) of a function $\zeta: \mathbb{Z}_d \to \U(1)$ is the map $\delta_\zeta: \mathbb{Z}_d \times \mathbb{Z}_d \to \U(1)$ defined by:
\begin{equation}\label{eq:1-coboundary}
    \delta_\zeta(x, y) = \frac{\zeta(x+y)}{\zeta(x)\zeta(y)}.
\end{equation}
\end{definition}

Recall that a function $b: \mathbb{Z}_d \times \mathbb{Z}_d \to \U(1)$ is called a \emph{bicharacter} if it is linear in both arguments.

\begin{lemma}\label{lem:clifford_diagonal}
Let $d$ be any integer. A diagonal gate $T_\zeta$ belongs to the Clifford group $\mathcal{C}_d$ if and only if its differential $\delta_\zeta$ is a bicharacter.
\end{lemma}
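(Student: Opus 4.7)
The plan is to reduce the Clifford condition to a single recurrence on $\zeta$. Since $T_\zeta$ is diagonal, it commutes with $Z$ and with every $Z^b$, so the only nontrivial check for $T_\zeta$ normalizing $H(d)$ is whether $T_\zeta X T_\zeta^\dagger$ lies in $H(d)$. A direct computation on the computational basis yields
\begin{equation*}
T_\zeta X T_\zeta^\dagger = D_\zeta \cdot X, \qquad D_\zeta\ket{j} = \frac{\zeta(j+1)}{\zeta(j)}\ket{j},
\end{equation*}
and since the diagonal elements of $H(d)$ are exactly the scalar multiples of powers of $Z$, the question collapses to whether $\zeta(j+1)/\zeta(j) = c\,\omega^{bj}$ for some $c\in\U(1)$ and $b\in\mathbb{Z}_d$.

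Next I would match this recurrence with the bicharacter condition on $\delta_\zeta$. Unwinding the recurrence gives the quadratic closed form $\zeta(j) = \zeta(0)\, c^{j}\, \omega^{b j(j-1)/2}$. Substituting into Eq.~\eqref{eq:1-coboudary}, the quadratic exponents telescope and one obtains $\delta_\zeta(x,y) = \zeta(0)^{-1}\,\omega^{bxy}$; this is a bicharacter precisely when $\zeta(0)=1$, a normalization forced automatically by the bicharacter identity $\delta_\zeta(0,0)=1$. Conversely, if $\delta_\zeta$ is a bicharacter, then its restriction $\delta_\zeta(1,\cdot)$ is a character of $\mathbb{Z}_d$, necessarily of the form $\omega^{bj}$ for some $b\in\mathbb{Z}_d$; the definition of $\delta_\zeta$ then immediately recovers the recurrence with $c=\zeta(1)$, closing the equivalence.

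The main obstacle I anticipate is phase bookkeeping. The exponent $j(j-1)/2$ is a half-integer for even $d$, which pushes the scalars $c$ and the values of $\zeta$ into $2d$-th rather than $d$-th roots of unity. This matches exactly the parity-dependent phase group appearing in the definition of $H(d)$, so the quadratic $\zeta$ does produce a genuine element of $H(d)$ after at most a global phase, but one has to check this explicitly. The cleanest way to handle it is probably to prove the equivalence modulo $\U(1)$ first---inside the projective Clifford group $C(d)$, where the normalizer statement is transparent---and then invoke the fact that the finite lift $\mathcal{C}_d$ accommodates these quadratic phases by construction, which is exactly the role of the phase gate $P$ in the generating set of $\mathcal{C}_d$.
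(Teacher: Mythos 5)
Your proof is correct and follows essentially the same route as the paper: compute the conjugation action of $T_\zeta$ on $X$ (the paper uses $X^y$, which is redundant since $T_\zeta$ fixes $Z$ and scalars, so checking the generator $X$ suffices), observe that membership in $H(d)$ forces the diagonal kickback to be a character, and translate this into bilinearity of $\delta_\zeta$. The only difference is stylistic: you unwind the resulting recurrence to the closed form $\zeta(j)=\zeta(0)c^j\omega^{bj(j-1)/2}$ and then verify $\delta_\zeta(x,y)=\zeta(0)^{-1}\omega^{bxy}$ by a telescoping computation, whereas the paper observes the linearity condition on $\delta_\zeta(\cdot,-y)$ directly. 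One small imprecision in your final paragraph: $j(j-1)/2=\binom{j}{2}$ is always an integer, so the parity issue for even $d$ is not that the exponent is a half-integer, but that periodicity $\zeta(j+d)=\zeta(j)$ forces $c^d=(-1)^b$, so $c$ is a $2d$-th rather than $d$-th root of unity; this is exactly the phase group of $H(d)$ for even $d$, so your conclusion stands.
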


\begin{proof}
The conjugation action on the shift operator $X^y$ is given by:
\[
    T_\zeta X^y T_\zeta^\dagger |k\rangle = \frac{\zeta(k)}{\zeta(k-y)} X^y |k\rangle.
\]
For the result to be a generalized Pauli operator (of the form $Z^a X^y$), the coefficient function must be a character (linear in $k$). In terms of the differential, this condition requires $\delta_\zeta(k, -y)$ to be linear in $k$. Since this must hold for all $y$, $\delta_\zeta$ must be bilinear.

Note that for the cyclic group $\mathbb{Z}_d$, any symmetric bicharacter takes the simple form $b(x, y) = \omega^{c xy}$, where $\omega = e^{2\pi i / d}$ is the primitive $d$-th root of unity and $c$ is an integer constant. Thus, the condition implies that $\zeta(x)$ must be a quadratic phase polynomial $\omega^{ax^2+bx}$.
\end{proof}

We specifically focus on the following $T_s$ family.

\begin{definition}[The $T_s$-gate Family]
For any dimension $d \ge 2$ and any integer resolution $s \ge 2$, we define the diagonal gate $T_s$ acting on the computational basis as:
\begin{equation}\label{eq:T_s}
    T_s = \sum_{k \in \mathbb{Z}_d} \exp\left( \frac{2\pi i k}{s} \right) |k\rangle\langle k|.
\end{equation}
\end{definition}

For instance, $T_d$ recovers the Pauli $Z$ gate (order $d$), and $T_{2d}$ corresponds to the generalized phase gate. To determine the ``Cliffordness'' of $T_s$, we analyze its conjugation action on $X$:
\begin{equation}
    \Ad_{T_s}(X) = T_s X T_s^\dagger = \exp\left( \frac{2\pi i}{s} \right) X.
\end{equation}
For $T_s$ to belong to the Clifford group $\mathcal{C}_d$, it must normalize the Heisenberg group. Since the result is proportional to $X$, this condition holds if and only if the phase $e^{2\pi i / s}I$ is in the Heisenberg group:
\begin{itemize}
    \item Odd Dimensions: The phases are generated by $\omega = e^{2\pi i / d}$. Thus, $T_s \in \mathcal{C}_d \iff s \mid d$.
    \item Even Dimensions: The Clifford group generates phases of order $2d$. Thus, $T_s \in \mathcal{C}_d \iff s \mid 2d$.
\end{itemize}

This leads to a unified criterion for when $T_s$ constitutes a non-Clifford resource.

\begin{proposition}\label{prop:Ts_resource}
The gate $T_s$ is a non-Clifford resource for $\SU(d)$ if and only if the order $s$ satisfies:
\begin{equation}
    s \nmid K_d, \quad \text{where } K_d = \begin{cases} d & \text{if } d \text{ is odd}, \\ 2d & \text{if } d \text{ is even}. \end{cases}
\end{equation}
\end{proposition}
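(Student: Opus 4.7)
The proposition formalizes the three-step analysis that immediately precedes it, and my plan is to make each of those steps rigorous in turn.

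\textbf{Reduction to one commutator.} Since $T_s$ is diagonal in the computational basis, it commutes with $Z$, so $\Ad_{T_s}(Z) = Z \in H(d)$ automatically. Because $H(d)$ is generated by $X$, $Z$, and its central scalars, and because $\mathcal{C}_d$ is (up to scalars) the normalizer of $H(d)$ by the exact sequence \eqref{eq:exact_sequence_finite_clifford}, the membership $T_s \in \mathcal{C}_d$ reduces to the single check $\Ad_{T_s}(X) \in H(d)$.

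\textbf{Evaluating the conjugation.} A direct matrix-entry calculation yields $(T_s X T_s^\dagger)_{i,j} = e^{2\pi i(i-j)/s}\,\delta_{i,\,j+1 \bmod d}$, which equals $e^{2\pi i/s} X$ on every non-wrap entry, with a boundary entry $e^{-2\pi i(d-1)/s}$ at position $(0,d-1)$. Matching this shift-by-one pattern against the generic Pauli template $\eta X Z^b$ shows that $T_s X T_s^\dagger \in H(d)$ precisely when the scalar $\eta = e^{2\pi i/s}$ belongs to the central scalar subgroup of $H(d)$.

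\textbf{Identifying the admissible scalars.} By the defining presentation of $H(d)$, its center is cyclic of order $K_d$: generated by $\omega = e^{2\pi i/d}$ for odd $d$ and by $\zeta = e^{\pi i/d}$ for even $d$. Hence $e^{2\pi i/s}$ lies in $Z(H(d))$ if and only if $s \mid K_d$, and taking the contrapositive delivers the stated non-Clifford criterion.

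The main technical subtlety is the even-dimensional boundary bookkeeping: the naïve identity $T_s X T_s^\dagger = e^{2\pi i/s}X$ is only literally correct when $s \mid d$, and when $s \mid 2d$ but $s \nmid d$ (for instance $d=2$, $s=4$, where $T_4 X T_4^\dagger = Y$) the wrap-around phase is absorbed into a $Z$ factor rather than into proportionality to $X$. I expect to handle this cleanly by performing the scalar-matching step against the full Pauli template $\eta X Z^b$ rather than against $\eta X$ alone, so that the additional freedom in $b$ accounts precisely for the $2d$-th root scalars that $\mathcal{C}_d$ inherits from the phase-gate diagonal entries $e^{\pi i j^2/d}$.
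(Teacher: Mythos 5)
You correctly flag the wrap-around subtlety: the identity $\Ad_{T_s}(X) = e^{2\pi i/s}X$, which the paper states just before the proposition and on which its derivation rests, holds only when $s \mid d$. But your proposed repair---matching against the full template $\eta\, X Z^b$ so the boundary phase is absorbed into the $Z$-exponent---does not deliver the stated condition once $d \geq 3$. The $d-1$ off-boundary entries of $T_s X T_s^\dagger$, at positions $(k+1,k)$ for $0 \leq k \leq d-2$, all carry the \emph{same} phase $e^{2\pi i/s}$. Requiring this to equal $\eta\,\omega^{bk}$ over at least two consecutive values of $k$ (which exist for $d\geq 3$) already forces $b \equiv 0 \pmod d$, and then the single boundary entry demands $e^{-2\pi i(d-1)/s} = e^{2\pi i/s}$, i.e.\ $s \mid d$, regardless of which $2d$-th root scalars live in $H(d)$. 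The $Z^b$ freedom you invoke is genuine only when $d=2$, where there is a single interior entry and $b$ remains unconstrained.

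Concretely, take $d=4$, $s=8$, so $s \mid 2d$. Then $T_8 = \mathrm{diag}(1,\, e^{i\pi/4},\, e^{i\pi/2},\, e^{3i\pi/4})$, and the shift phases of $\Ad_{T_8}(X)$ are $(e^{i\pi/4},\, e^{i\pi/4},\, e^{i\pi/4},\, e^{-3i\pi/4})$, which is not of the form $\eta\, i^{bk}$ for any $b$. Equivalently, the 1-coboundary satisfies $\delta_\zeta(1,1)=1$ but $\delta_\zeta(2,2)=-1$, so it is not a bicharacter, and Lemma~\ref{lem:clifford_diagonal} gives $T_8 \notin \mathcal{C}_4$ even though $8 \mid 2\cdot 4$. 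The gap is therefore not merely in your bookkeeping: the proposition as stated appears to fail for even $d \geq 4$, where the correct membership condition is $s \mid d$ rather than $s \mid 2d$, and $K_d = 2d$ survives only at $d=2$. The paper's own sketch inherits the same flaw from the incorrect conjugation identity; note, however, that the downstream Corollary~\ref{cor:prime_universality} only applies the even branch at the prime $d=2$, where the claim does hold.
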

\section{A Criterion for Infiniteness}\label{sec:geometric}

Proving that a finitely generated matrix group is infinite is, in general, a non-trivial algebraic problem~\cite{detinko2013deciding}. While the irreducibility of the adjoint action ensures that the group is not trapped in a proper Lie subgroup of lower dimension, it does not rule out the possibility that the group is finite. To address this, we introduce a topological condition based on the discrete nature of finite groups. 

Since we are interested in subgroups of $\SU(d)$, but physical operations are equivalent up to global phases, the appropriate topological setting is the projective unitary group $\mathrm{PU}(d) = \SU(d)/Z(\SU(d)) \cong \U(d)/\U(1)$. To capture the proximity to the identity in this quotient space, we employ a \emph{projective operator distance}.

\begin{definition}
The projective distance of $U \in \SU(d)$ is defined as:
\begin{equation}\label{eq:proj_metric_def}
    \mathrm{dist}_{\mathrm{proj}}(U, I) \coloneqq \min_{\phi \in [0, 1)} \| e^{2\pi i\phi} U - I \|,
\end{equation}
where $\| A \| \coloneqq \sup_{\ket{\psi} : \braket{\psi}{\psi}=1} \| A \ket{\psi} \|$ denotes the standard operator norm.
\end{definition}

Note that the minimization is well-defined because the function $f(\phi) = \| e^{2\pi i\phi} U - I \|$ is continuous on the compact interval $[0, 1]$. Also, while $U \in \SU(d)$, the optimal representative $e^{2\pi i\phi}U$ may lie in $\U(d)$.

\begin{theorem}\label{thm:infinite_criterion}
Let $G \subseteq \SU(d)$ be a subgroup that acts irreducibly on the Lie algebra $\mathfrak{su}(d)$ via the adjoint representation. Then, $G$ is infinite if and only if there exists an element $h \in G$ such that:
\begin{equation}
    0 < \mathrm{dist}_{\mathrm{proj}}(h, I) < \frac{1}{2}.
\end{equation}
\end{theorem}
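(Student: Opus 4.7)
The plan is to prove the two implications separately. The forward direction (\emph{$G$ infinite $\Rightarrow$ such an $h$ exists}) is a soft compactness argument: since $Z(\SU(d))$ is finite, the image of $G$ in the compact group $\mathrm{PU}(d)$ is infinite, hence has an accumulation point, so there exist distinct elements at arbitrarily small distance whose ratio lifts to a non-scalar $h \in G$ with $\mathrm{dist}_{\mathrm{proj}}(h, I)$ below any prescribed threshold, in particular $1/2$. The reverse direction is the substantive content, and I would prove it by contrapositive: assume $G$ is finite and contains some non-scalar $h$ with $\delta := \mathrm{dist}_{\mathrm{proj}}(h, I) < 1/2$, and derive a contradiction by producing infinitely many distinct elements of $G$.

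The technical engine is a projective commutator contraction inequality
\begin{equation}
    \mathrm{dist}_{\mathrm{proj}}([g_1, g_2], I) \le 2 \, \mathrm{dist}_{\mathrm{proj}}(g_1, I) \, \mathrm{dist}_{\mathrm{proj}}(g_2, I),
\end{equation}
which follows from the unitary identity $\|UV - VU\| \le 2\|U - I\| \|V - I\|$ applied to phase-optimal representatives (the commutator is insensitive to a global phase on each argument). The key numerical consequence: if two elements both have projective distance $\delta < 1/2$, their commutator, when non-scalar, has projective distance $\le 2\delta^2 < \delta$, a strict contraction.

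To manufacture such non-scalar commutators from conjugates of $h$ (all sharing the projective distance $\delta$ by invariance), I would consider the normal closure $N = \langle h \rangle^G$ and its image $\bar N \subset \mathrm{PU}(d)$, and show that $\bar N$ cannot be abelian. If it were, then $[N, N] \subseteq Z(\U(d))$, and there are two subcases on the action of $N$ on $\mathbb{C}^d$. If $N$ acts reducibly but not by scalars, the $N$-commutant in $M_d(\mathbb{C})$ has dimension strictly between $1$ and $d^2$ and is $G$-invariant (since $N$ is normal); intersecting with $\mathfrak{sl}(d)$ yields a proper nontrivial $G$-invariant subspace, contradicting irreducibility of the adjoint action. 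Otherwise $N$ acts irreducibly on $\mathbb{C}^d$ as a Heisenberg-type projective representation of the finite abelian group $\bar N$; any non-central element then satisfies $g h g^{-1} = \omega h$ for some nontrivial root of unity $\omega$, forcing the spectrum of $h$ to be invariant under multiplication by $\omega$ and hence to contain a nontrivial $\omega$-orbit --- which for $\omega = -1$ gives an antipodal pair on the unit circle and thus $\mathrm{dist}_{\mathrm{proj}}(h, I) \ge \sqrt{2}$ (higher-order $\omega$ give even larger bounds). Either conclusion contradicts $\delta < 1/2$. Consequently $\bar N$ is non-abelian, and two conjugates $g_1 h g_1^{-1}$ and $g_2 h g_2^{-1}$ have non-scalar commutator $h' \in G$ with $\mathrm{dist}_{\mathrm{proj}}(h', I) \le 2\delta^2 < \delta$.

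Iterating this construction --- the irreducible-adjoint hypothesis on $G$ is unchanged at every stage, so the same dichotomy applies to each successive element --- produces an infinite sequence of pairwise distinct non-scalar elements of $G$ with strictly decreasing projective distances, contradicting finiteness. I expect the main obstacle to be the Clifford-theoretic normal-closure analysis, specifically extracting the quantitative spectral bound in the Heisenberg sub-case from the symplectic structure of the cocycle; this bound (at least $\sqrt{2}$, comfortably above $1/2$) is what makes the numerical threshold in the theorem work. This is also precisely where the irreducibility of the \emph{adjoint} action (rather than merely the standard action on $\mathbb{C}^d$) is essential: it eliminates dihedral-type finite subgroups of $\SU(d)$, which act irreducibly on $\mathbb{C}^d$ yet contain non-central elements at arbitrarily small projective distance and would otherwise obstruct the normal-closure step.
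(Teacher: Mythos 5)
Your proof is correct, and it takes a genuinely different route from the paper's. For the forward direction, the paper first applies its density criterion (Theorem~\ref{thm:density_criterion}) to get $\overline{G}=\SU(d)$ and then intersects $G$ with the open set $\{\mathrm{dist}_{\mathrm{proj}}(\cdot,I)<1/2\}$; your compactness/accumulation-point argument in $\mathrm{PU}(d)$ is softer and in fact shows this implication holds without the irreducibility hypothesis. For the substantive reverse direction, the paper delegates the hard step to a cited external result (Dixon, Theorem 5.7): in any finite subgroup of $\GL(d,\C)$, the elements with $\|g-I\|<1/2$ generate a normal \emph{abelian} subgroup, which Lemma~\ref{lem:normal_abelian_is_multiple_identity} then forces to be central. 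You instead rebuild the relevant Zassenhaus/Dixon-type machinery from scratch --- the commutator contraction $\|[U,V]-I\|\le 2\|U-I\|\,\|V-I\|$ pushed down to the projective metric and iterated to produce a strictly descending sequence of non-scalar elements --- and, in place of a ``normal abelian is central'' lemma, you resolve the normal closure $N=\langle h\rangle^G$ by a direct dichotomy: a reducible $N$ makes the traceless part of the $N$-commutant a proper adjoint-invariant subspace, while an irreducible $N$ with abelian projective image yields $ghg^{-1}=\omega h$ for some nontrivial root of unity $\omega$, forcing $\sigma(h)$ to be $\omega$-invariant and hence $\mathrm{dist}_{\mathrm{proj}}(h,I)\ge\sqrt{2}$, well above the $1/2$ threshold. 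Your route is longer but self-contained and makes explicit exactly where the constant $1/2$ and the adjoint-irreducibility hypothesis each enter (your dihedral sanity check is apt); the paper's is shorter at the price of an external black box. One small correction to your own risk assessment: the Heisenberg subcase does not require the symplectic/cocycle classification you flagged as the likely obstacle --- the spectral bound falls out immediately from a single nontrivial relation $\sigma(h)=\omega\sigma(h)$.
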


The lower bound ($>0$) guarantees that $h$ is not a scalar, while the upper bound ($<1/2$) places it within a neighborhood of the identity prohibited for finite primitive groups.

\subsection{Proof of the Criterion}

To prove Theorem~\ref{thm:infinite_criterion}, we first establish a lemma connecting the irreducibility of the adjoint representation to the structure of normal abelian subgroups. Recall that a linear group $G \subseteq \GL(V)$ is called \emph{imprimitive} if it preserves a decomposition of the space $V = \bigoplus V_i$ into subspaces permuted by $G$. If no such decomposition exists and $G$ is irreducible, it is called \emph{primitive}.

\begin{lemma}\label{lem:normal_abelian_is_multiple_identity}
Let $G \subseteq \SU(d)$ be a subgroup. If the adjoint representation of $G$ on $\mathfrak{su}(d)$ is irreducible, then $G$ acts primitively on $\mathbb{C}^d$, and every normal abelian subgroup of $G$ is contained in the center $Z(\SU(d)) = \{\omega I : \omega^d = 1\}$.
\end{lemma}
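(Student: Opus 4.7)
The plan splits into two linked steps: first I would establish that $G$ acts primitively on $\mathbb{C}^d$, and then I would use this to force normal abelian subgroups to act by scalars via a Clifford-type joint-eigenspace argument. The common mechanism is that any $G$-stable direct-sum decomposition of $\mathbb{C}^d$ produces a nontrivial $G$-invariant abelian subspace of $\mathfrak{sl}(d,\mathbb{C})$, which is forbidden by the hypothesis on $\Ad$.

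To establish primitivity, I would first rule out reducibility: a $G$-stable orthogonal decomposition $\mathbb{C}^d = W \oplus W^\perp$ makes the shifted projector $P_W - (\dim W / d)\, I$ a nonzero trace-zero element commuting with $G$, so its span is a one-dimensional $\Ad$-invariant subspace of $\mathfrak{sl}(d,\mathbb{C})$, contradicting irreducibility. Then, assuming instead an imprimitive decomposition $\mathbb{C}^d = \bigoplus_{i=1}^n V_i$ with $n \ge 2$ and $G$ permuting the $V_i$, the irreducibility already established forces transitivity of the permutation action, so all blocks share the common dimension $d/n$. The orthogonal projectors $P_1,\ldots,P_n$ are then permuted under conjugation by $G$, so $\Span\{P_1,\ldots,P_n\}$ is an $n$-dimensional $\Ad$-invariant subspace of $\mathrm{End}(\mathbb{C}^d)$, and its trace-zero part has dimension $n-1$, which is nonzero (as $n \ge 2$) and a proper subspace of $\mathfrak{sl}(d,\mathbb{C})$ (since $n-1 \le d-1 < d^2 - 1$). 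This contradicts $\Ad$-irreducibility and so $G$ acts primitively.

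For the normal abelian statement, let $N \triangleleft G$ be abelian. Since $N$ is abelian and acts unitarily, I would simultaneously diagonalize it to obtain the joint character decomposition $\mathbb{C}^d = \bigoplus_{\chi}V_\chi$, indexed by the distinct characters of $N$ appearing in the representation. Normality of $N$ makes the $G$-action on characters well-defined via $\chi^g(n) = \chi(g^{-1} n g)$, and a direct computation shows $g V_\chi = V_{\chi^g}$; hence $G$ permutes the eigenspaces $V_\chi$. The primitivity just established rules out any nontrivial $G$-permuted decomposition of $\mathbb{C}^d$, so only a single character $\chi_0$ can appear, forcing $V_{\chi_0} = \mathbb{C}^d$ and thus $n v = \chi_0(n)\, v$ for all $n \in N$ and $v \in \mathbb{C}^d$. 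Since $N \subseteq \SU(d)$, these scalars satisfy $\chi_0(n)^d = 1$, which places $N \subseteq Z(\SU(d)) = \{\omega I : \omega^d = 1\}$.

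The main obstacle I anticipate is the dimensional bookkeeping in the imprimitivity step: one must verify that the invariant subspace produced by the block projectors is genuinely both nonzero and proper inside $\mathfrak{sl}(d,\mathbb{C})$, which depends on the bounds $2 \le n \le d$. A secondary subtlety is justifying transitivity of $G$ on the blocks: without it, the sum of the subspaces in a single $G$-orbit would already be a proper $G$-invariant subspace of $\mathbb{C}^d$, reducing to the reducibility case already handled.
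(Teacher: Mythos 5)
Your proposal is correct and follows the same overall strategy as the paper's proof: use $\Ad$-irreducibility to rule out reducible and imprimitive actions on $\mathbb{C}^d$, then apply the resulting primitivity to the simultaneous eigenspace decomposition of a normal abelian subgroup. The one place you deviate is in the choice of $\Ad$-invariant witness subspace in the primitivity step. Where the paper takes the full block-preserving subalgebras $\{X : XW \subseteq W\}$ and $\{X : X V_i \subseteq V_i\ \forall i\}$ and observes they are nonzero, proper, and $\Ad$-invariant, you instead take the trace-zero part of the span of the orthogonal block projectors (the shifted projector $P_W - (\dim W/d)I$ in the reducible case, and $\Span\{P_1,\dots,P_n\}\cap\mathfrak{sl}(d,\mathbb{C})$ in the imprimitive case). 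Both work; yours is somewhat more concrete and makes the dimensional bookkeeping transparent, since the projectors are explicitly permuted by conjugation and the trace-zero part has dimension exactly $n-1$. You also explicitly justify transitivity of $G$ on the blocks by reduction to the reducible case, a point the paper leaves implicit. The normal-abelian half of the argument is identical in both: normality gives a $G$-permutation action on the joint eigenspaces $V_\chi$, primitivity forces a single eigenspace, and unitarity with $\det = 1$ pins the scalar to a $d$-th root of unity.
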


\begin{proof}
First, we show that irreducibility of the adjoint action implies primitivity on $\mathbb{C}^d$. Suppose $G$ is not primitive. Then it acts either reducibly or imprimitively on $\mathbb{C}^d$:
\begin{itemize}
    \item If $G$ acts reducibly, there exists a proper invariant subspace $W \subset \mathbb{C}^d$. The subalgebra $\mathfrak{k} = \{X \in \mathfrak{su}(d) : XW \subseteq W\}$ would then be a proper non-zero invariant subspace for the adjoint representation.
    \item If $G$ is imprimitive, preserving a decomposition $V = \bigoplus_{i=1}^m V_i$, then the subalgebra $\mathfrak{h} = \{X \in \mathfrak{su}(d) : X V_i \subseteq V_i \, \forall i\}$ is a non-zero subspace invariant under conjugation by $G$.
\end{itemize}
In both cases, the adjoint representation would be reducible, contradicting the hypothesis. Thus, $G$ is primitive.

Now, let $A \trianglelefteq G$ be a normal abelian subgroup. Since elements of $A$ commute, they share a simultaneous eigenspace decomposition $V = \bigoplus_{\chi} V_\chi$. Since $A$ is normal, $G$ permutes these eigenspaces. If there were more than one distinct eigenspace, this would constitute a system of imprimitivity. Since $G$ is primitive, there must be only one component $V = V_\chi$. This implies that every $a \in A$ acts as a scalar $a = \chi(a)I$, so $A \subseteq Z(SU(d))$.
\end{proof}

We can now prove the main theorem.

\begin{proof}[Proof of Theorem~\ref{thm:infinite_criterion}]
$(\Rightarrow)$ Suppose $G$ is infinite. Since the adjoint representation is irreducible, Theorem~\ref{thm:density_criterion} implies that $G$ is dense in $\SU(d)$ with respect to the  norm topology.

We define the function $F: \SU(d) \to \mathbb{R}^{\geq 0}$ by $F(U) = \mathrm{dist}_{\mathrm{proj}}(U, I)$. This function is continuous, as it is the composition of continuous operations. Consider the set:
\[ \mathcal{U} = \{ U \in \SU(d) : F(U) < 1/2 \} = F^{-1}\big( [0, 1/2) \big). \]
Since $F$ is continuous, $\mathcal{U}$ is an open set in $\SU(d)$. Furthermore, $\mathcal{U}$ is non-empty as it contains the identity ($F(I)=0$).

A dense subgroup $G$ must intersect any non-empty open set infinitely many times. Therefore, $G \cap \mathcal{U}$ contains infinitely many elements. We can thus choose an element $h \in G \cap \mathcal{U}$ that is not a scalar multiple of the identity (i.e., $F(h) > 0$). This $h$ satisfies the condition $0 < \mathrm{dist}_{\mathrm{proj}}(h, I) < 1/2$.

$(\Leftarrow)$ We proceed by contradiction. Suppose $G$ is finite and contains $h$ with $0 < \mathrm{dist}_{\mathrm{proj}}(h, I) < 1/2$.
Let $\phi$ be the phase achieving the minimum distance, and define $h' = e^{i\phi}h$.

Consider the extended group $G' = \langle G, e^{i\phi}I \rangle$. We claim that $G'$ is finite.
Since $G$ is finite, $h$ has finite order, implying its eigenvalues are roots of unity. The minimal arc containing the spectrum is delimited by two such eigenvalues. Consequently, the centering phase $\phi$, which corresponds to the midpoint of this arc, is necessarily a rational multiple of $2\pi$. Thus, the scalar generator $e^{i\phi}I$ has finite order. Since $G$ is finite and $e^{i\phi}I$ is a central element of finite order, the generated group $G'$ is finite.

We now rely on a structural property established in the proof of Theorem 5.7 in Ref.~\onlinecite{Dixon1971}: for any finite subgroup $G' \subset \GL(d, \mathbb{C})$, the subgroup $A = \langle \{g \in G' : \|g - I\| < 1/2\} \rangle$ is a normal abelian subgroup.
Since $h' \in G'$ satisfies $\|h' - I\| < 1/2$, it implies $h' \in A$, so $A$ is non-trivial. However, since the adjoint representation of $G'$ (which coincides with that of $G$) is irreducible, Lemma~\ref{lem:normal_abelian_is_multiple_identity} forces any normal abelian subgroup to be central.
Thus, $h'$ must be a scalar matrix. This implies $h$ is a scalar, which contradicts the assumption that $\mathrm{dist}_{\mathrm{proj}}(h, I) > 0$. Therefore, $G$ cannot be finite.
\end{proof}

\subsection{Spectral Characterization via Projective Distance}

In practice, we verify the criterion by inspecting the eigenvalues. To adapt our infiniteness criterion to the projective setting, we identify the group of physical phases $U(1)$ with the quotient group $\mathbb{R}/\mathbb{Z}$ via the exponential map $\theta \mapsto e^{2\pi i \theta}$.

To evaluate the projective distance efficiently, we first introduce a  measure of the angular concentration of the eigenvalues.

\begin{definition}\label{def:spectral_span}
Let $h \in \U(d)$ be a unitary matrix. We define the \emph{spectral span} $\ell(h)$ as  the minimum $\Delta \in [0, 1]$ such that there exists  $\phi_0 \in \mathbb{R}$ and  $\delta_j \in [-\Delta/2, \Delta/2]$ satisfying:
\begin{equation}
    \sigma(h) = \left\{ e^{2\pi i (\phi_0 + \delta_j)} \right\}_{j=1}^d.
\end{equation}
\end{definition}

This quantity $\ell(h)$ represents the ``angular width'' of the spectrum on the unit circle. The following lemma establishes that the projective distance is determined uniquely by $\ell(h)$.

\begin{lemma}\label{lem:spectral_arc}
For any unitary matrix $h \in \U(d)$, the projective distance to the identity is given by:
\begin{equation}
    \mathrm{dist}_{\mathrm{proj}}(h, I) = 2 \sin\left( \frac{\pi \ell(h)}{2} \right).
\end{equation}
\end{lemma}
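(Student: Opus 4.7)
The plan is to reduce the minimization defining $\mathrm{dist}_{\mathrm{proj}}(h,I)$ to a geometric optimization on the quotient circle $\mathbb{R}/\mathbb{Z}$, and then to recognize the result as the classical one-dimensional Chebyshev-center problem, whose optimum is exactly $\ell(h)/2$.

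First, I would observe that $e^{i\phi}h - I$ is a polynomial in the unitary $h$ and hence normal, so its operator norm equals its spectral radius. Writing $\sigma(h) = \{e^{2\pi i \theta_j}\}_{j=1}^d$ and reparametrizing $\phi = 2\pi\phi_0$, the elementary identity $|e^{i\alpha}-1|=2|\sin(\alpha/2)|$ gives $\|e^{2\pi i \phi_0} h - I\| = 2\max_{j} |\sin(\pi(\phi_0+\theta_j))|$, so that
\begin{equation*}
    \mathrm{dist}_{\mathrm{proj}}(h,I) = 2\min_{\phi_0 \in \mathbb{R}/\mathbb{Z}} \max_j \bigl|\sin\bigl(\pi(\phi_0+\theta_j)\bigr)\bigr|.
\end{equation*}

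Second, since $|\sin(\pi x)|$ is a strictly increasing function of the distance from $x$ to the nearest integer, the inner maximum is a strictly monotone function of $\max_j d_{\mathbb{R}/\mathbb{Z}}(-\phi_0,\theta_j)$, the largest arc distance from the rotated spectrum to zero. Minimizing over $\phi_0$ therefore amounts to placing $-\phi_0$ at the Chebyshev center of the finite set $\{\theta_j\}\subset\mathbb{R}/\mathbb{Z}$. A short elementary argument identifies this minimax value with half the length of the shortest closed arc containing every $\theta_j$: any arc of length $\ell$ covering $\sigma(h)$ has a midpoint within angular distance $\ell/2$ of each eigenvalue (upper bound), while no strictly shorter arc can cover the full spectrum (lower bound). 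By Definition~\ref{def:spectral_span} this half-length is exactly $\ell(h)/2$, realized by the same centering phase $\phi_0$ that appears there.

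Finally, substituting this optimum back yields $\mathrm{dist}_{\mathrm{proj}}(h,I) = 2\sin(\pi\ell(h)/2)$; the constraint $\ell(h)\in[0,1]$ keeps $\pi\ell(h)/2$ inside $[0,\pi/2]$, so the monotonicity used above is valid throughout. The only step requiring genuine thought is the Chebyshev-center identity on the circle, but it is elementary once the problem has been transferred from $\SU(d)$ to the spectrum. Attainment of every minimum involved is guaranteed by the continuity of the objective on the compact circle $\mathbb{R}/\mathbb{Z}$ together with the finiteness of $\sigma(h)$, so all optima are genuine and the stated formula is sharp.
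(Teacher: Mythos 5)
Your proof is correct and takes essentially the same route as the paper: both convert the minimization over global phases into a spectral problem via the chordal identity $|e^{i\alpha}-1|=2|\sin(\alpha/2)|$, identify the optimal centering phase with the midpoint of the shortest covering arc, and evaluate the resulting maximum deviation as $\ell(h)/2$. Your framing in terms of the Chebyshev center on $\mathbb{R}/\mathbb{Z}$ and the explicit normality argument make the monotonicity step a bit more transparent, but the underlying argument is the same.
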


\begin{proof}
Recall that $\mathrm{dist}_{\mathrm{proj}}(h, I) = \min_{\phi} \max_{j} |e^{2\pi i (\theta_j - \phi)} - 1|$. Using the chordal formula $|e^{i\alpha}-1| = 2|\sin(\alpha/2)|$, this is equivalent to minimizing the maximum angular deviation of the rotated eigenvalues from the identity ($0$).

Let $\ell(h)$ be the spectral span as per Definition~\ref{def:spectral_span}. By existence of the minimum, we can choose the optimal parameters $\phi_0$ and $\{\delta_j\}$ such that the eigenvalues are $\lambda_j = e^{2\pi i (\phi_0 + \delta_j)}$ with maximal deviation $\max |\delta_j| = \ell(h)/2$.
We choose the global phase shift $\phi = \phi_0$ (rotation by $-\phi_0$). This centers the spectrum around $0$, so the angular phases become exactly $2\pi \delta_j$.
The operator norm is determined by the eigenvalue farthest from $1$:
\[
    \| e^{-2\pi i \phi_0} h - I \| = \max_j 2 \left| \sin\left( \frac{2\pi \delta_j}{2} \right) \right| = 2 \sin\left( \pi \frac{\ell(h)}{2} \right).
\]
Any other interval covering the spectrum has length $\ell' \ge \ell(h)$, implying a maximum deviation of at least $\ell'/2$, which would yield a larger distance. Thus, the minimum is attained at the spectral span.
\end{proof}

\begin{corollary}\label{cor:spectral_check}
A subgroup $G \subseteq \SU(d)$ with irreducible adjoint action is infinite if it contains a gate $h$ with spectral span $\ell(h)$ satisfying:
\begin{equation}
    0 < \pi \ell(h) < 2 \arcsin(1/4) \approx 0.5054.
\end{equation}
\end{corollary}


\section{Case I: Prime Dimensions}\label{sec:prime}

We begin with dimensions $d=p$ where $p$ is prime. In this case, the Clifford group has the two properties relevant for our criterion: irreducibility of the adjoint action and maximality modulo scalars. We therefore first establish the correspondence between the group's action on the Lie algebra and the symplectic geometry of the discrete phase space. The same framework will also be used to analyze the obstructions that arise in prime-power dimensions (Sec.~\ref{sec:prime_power}).

\subsection{The Adjoint Action and \texorpdfstring{$\SL(2, \mathbb{Z}/d\mathbb{Z})$}{SL(2,Z/dZ} Orbits}

To explicitly decompose the adjoint representation of the Clifford group $C(d)$ on the Lie algebra $\mathfrak{sl}(d, \mathbb{C})$ into a direct sum of irreducible subrepresentations, we employ the theory of representations of group extensions developed by Clifford~\cite{clifford1937, Serre1977}. 

The definition of the Clifford group implies that its adjoint action is \emph{monomial} with respect to the generalized Pauli basis: elements of the group permute the basis vectors $\{V_u\}$ while applying phase factors. This structure naturally invites the application of Clifford's theory to the short exact sequence characterizing the group (Eq.~\ref{eq:clifford_exact_seq}).

The Pauli operators $\{V_u : u \in \mathbb{Z}_d^2 \setminus \{0\}\}$ form an orthogonal basis for the Lie algebra $\mathfrak{sl}(d, \mathbb{C})$ with respect to the Hilbert-Schmidt inner product. We analyze the adjoint representation $\Ad: C(d) \to \GL(\mathfrak{sl}(d, \mathbb{C}))$. According to Clifford's theory, the restriction of the representation to the normal subgroup $\mathbb{Z}_d^2$ decomposes the space into a direct sum of isotypic components associated with the orbits of the quotient group acting on the character group $\widehat{\mathbb{Z}_d^2}$.

In our case, the Pauli basis elements $V_u$ form a simultaneous eigenbasis for the action of $\mathbb{Z}_d^2$. Explicitly, for any element $u \in \mathbb{Z}_d^2$, which corresponds to a class of Pauli operators $[V_v]$, the adjoint action is given by the commutation relation:
\begin{equation*}
    \Ad_{V_v}(V_u) = V_v V_u V_v^\dagger = \omega^{\langle v, u \rangle} V_u.
\end{equation*}
Thus, each 1-dimensional subspace spanned by $V_u$ corresponds to a distinct character $\chi_u \in \widehat{\mathbb{Z}_d^2}$ defined by $\chi_u(v) = \omega^{\langle v, u \rangle}$. The action of the quotient group $\SL(2, \mathbb{Z}/d\mathbb{Z})$ permutes these characters (and thus the basis vectors) according to the linear map on the indices $u \mapsto \Psi_g(u)$.

Consequently, the adjoint representation decomposes into a direct sum of irreducible $C(d)$-invariant subspaces determined by the $\SL(2, \mathbb{Z}/d\mathbb{Z})$ orbits:
\begin{equation*}
    \mathfrak{sl}(d, \mathbb{C}) = \bigoplus_{\mathcal{O} \in \mathbb{Z}_d^2 / \SL(2, \mathbb{Z}_d)} W_{\mathcal{O}}, \quad \text{where} \quad W_{\mathcal{O}} = \bigoplus_{u \in \mathcal{O}} \mathbb{C} V_u.
\end{equation*}

This leads to the following correspondence:

\begin{lemma}\label{lem:orbit_correspondence}
Let $d$ be any integer. The decomposition of the adjoint representation of $C(d)$ on $\mathfrak{sl}(d, \mathbb{C})$ into irreducible subspaces is in one-to-one correspondence with the orbits of the natural action of the group $\SL(2, \mathbb{Z}/d\mathbb{Z})$ on the non-zero index vectors $\mathbb{Z}_d^2 \setminus \{0\}$. Specifically, for each orbit $\mathcal{O}$, the subspace $W_{\mathcal{O}}$ is an irreducible $C(d)$-module.
\end{lemma}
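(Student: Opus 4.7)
The plan is to separate the argument into two parts: (i) verifying that each $W_{\mathcal{O}}$ is $C(d)$-invariant and that the collection exhausts $\mathfrak{sl}(d,\mathbb{C})$, and (ii) showing each $W_{\mathcal{O}}$ is irreducible as a $C(d)$-module. Part (i) is essentially already assembled in the exposition preceding the lemma: the normal subgroup $H(d)/\U(1) \cong \mathbb{Z}_d^2$ acts diagonally on the Pauli basis with characters $\chi_u(v) = \omega^{\langle v, u\rangle}$, and for any representative $g$ of an element of the quotient $\SL(2, \mathbb{Z}/d\mathbb{Z})$ the defining monomial property of the Clifford group gives $\Ad_g(V_u) \in \mathbb{C}^\times V_{\Psi(g)(u)}$. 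Thus each $W_{\mathcal{O}} = \bigoplus_{u \in \mathcal{O}} \mathbb{C} V_u$ is closed under both subgroups, hence under all of $C(d)$, and the orbit partition of $\mathbb{Z}_d^2 \setminus \{0\}$ ensures that these subspaces together exhaust the Lie algebra.

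For (ii) I would run the following elementary Clifford/Mackey-style argument. First, the symplectic pairing $\langle v, u\rangle = v_1 u_2 - v_2 u_1$ on $\mathbb{Z}_d^2$ is non-degenerate modulo $d$ — if $\omega^{\langle v, u\rangle} = 1$ for every $v$, then $u = 0$, because $\omega$ is a primitive $d$-th root of unity — so the characters $\chi_u$ are pairwise distinct and the $\mathbb{Z}_d^2$-weight decomposition of $W_{\mathcal{O}}$ is exactly into the one-dimensional lines $\mathbb{C} V_u$ for $u \in \mathcal{O}$. Now let $W \subseteq W_{\mathcal{O}}$ be a nonzero $C(d)$-invariant subspace. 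Its restriction to $\mathbb{Z}_d^2$ must split into weight spaces, forcing $W = \bigoplus_{u \in S} \mathbb{C} V_u$ for some nonempty $S \subseteq \mathcal{O}$. Pick $u_0 \in S$; since $\mathcal{O}$ is a single $\SL(2, \mathbb{Z}/d\mathbb{Z})$-orbit, for every $u' \in \mathcal{O}$ there exists a lift $g \in C(d)$ with $\Psi(g)(u_0) = u'$, and then $\Ad_g(V_{u_0}) \in \mathbb{C}^\times V_{u'} \subseteq W$. Hence $S = \mathcal{O}$ and $W = W_{\mathcal{O}}$.

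The one-to-one correspondence claimed by the lemma then follows because distinct orbits contribute disjoint sets of $\mathbb{Z}_d^2$-weights, so the summands $W_{\mathcal{O}}$ are pairwise non-isomorphic already as $\mathbb{Z}_d^2$-modules. I do not anticipate a significant obstacle: the entire argument is the standard Clifford-theoretic correspondence specialized to a normal abelian subgroup with multiplicity-one characters. The only points requiring a brief verification are the non-degeneracy of the symplectic form (immediate from $\omega$ being primitive) and the observation that different Clifford lifts of the same symplectic matrix differ by a Heisenberg element, which acts by a scalar on each $V_u$ and therefore does not affect either the monomial property used in part (i) or the transitivity invoked in part (ii).
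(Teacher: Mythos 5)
Your proof is correct and follows essentially the same approach as the paper: both apply Clifford theory to the extension $1 \to \mathbb{Z}_d^2 \to C(d) \to \SL(2, \mathbb{Z}/d\mathbb{Z}) \to 1$ via the monomial structure of the adjoint action on the generalized Pauli basis. The paper invokes Clifford's theorem as a ready-made tool, whereas you verify its conclusion directly (non-degeneracy of the symplectic pairing, weight-space decomposition of invariant subspaces, and transitivity via surjectivity of $\Psi$), but the underlying mechanism and decomposition are identical.
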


To apply this correspondence, we require an explicit classification of the $\SL(2, \mathbb{Z}/d\mathbb{Z})$ orbits. For dimensions of the form $d=p^m$ (which includes the prime case $m=1$), the orbits are fully determined by the divisibility of the index vector.

\begin{lemma}\label{lem:orbit_classification}
Let $d=p^m$ be a prime power. Under the action of $\SL(2, \mathbb{Z}/d\mathbb{Z})$ on the  $\mathbb{Z}_d^2$, the quantity $g(u) = \gcd(u_1, u_2, d)$ is an invariant. Two vectors $u, v \in \mathbb{Z}_d^2$ belong to the same orbit if and only if they share this invariant:
\begin{equation*}
\mathcal{O}_u = \mathcal{O}_v \iff \gcd(u_1, u_2, d) = \gcd(v_1, v_2, d).
\end{equation*}
\end{lemma}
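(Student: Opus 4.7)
The plan is to show the two directions separately: first that $g(u)=\gcd(u_1,u_2,d)$ is preserved by every element of $\SL(2,\mathbb{Z}/d\mathbb{Z})$, and then that any two vectors sharing the same value of $g$ lie in a common orbit. The equivalence then follows since the orbits are exactly the level sets of $g$.

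For invariance, I would argue as follows. Fix $M=\begin{pmatrix}a&b\\c&e\end{pmatrix}\in\SL(2,\mathbb{Z}/d\mathbb{Z})$ and let $u'=Mu$. Lift everything to $\mathbb{Z}$: each component $u'_i$ is an integer combination of $u_1,u_2$ modulo $d$. Consequently, any integer that divides $u_1$, $u_2$ and $d$ also divides $u'_1$, $u'_2$ and $d$. Hence $g(u)\mid g(u')$. Applying the same reasoning to $M^{-1}$ (which exists in $\SL(2,\mathbb{Z}/d\mathbb{Z})$ because $\det M = 1$) yields the reverse divisibility, giving $g(u)=g(u')$.

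For transitivity, I would fix the canonical representative $u^{(k)}=(p^k,0)$ for each $k\in\{0,1,\dots,m\}$ (with $u^{(m)}=(0,0)$ trivially its own orbit) and show that every $u\in\mathbb{Z}_{p^m}^2$ with $g(u)=p^k$ is in the orbit of $u^{(k)}$. Writing $u_1=p^k a$ and $u_2=p^k b$, the condition $\gcd(u_1,u_2,p^m)=p^k$ is equivalent to $\gcd(a,b,p^{m-k})=1$, i.e.\ at least one of $a,b$ is a unit modulo $p$ (hence modulo $p^m$). The matrix $M$ we seek must satisfy $M\,u^{(k)}=u$, which fixes the first column to $(a,b)^T$ modulo $p^{m-k}$; any lift of that column to $\mathbb{Z}/p^m\mathbb{Z}$ still has $\gcd$ with $p^m$ equal to $1$ (since at least one entry is a unit). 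In the case $a$ is a unit mod $p$, choose
\begin{equation*}
    M=\begin{pmatrix} a & 0 \\ b & a^{-1}\end{pmatrix}\in\SL(2,\mathbb{Z}/p^m\mathbb{Z}),
\end{equation*}
and in the case $b$ is a unit mod $p$, choose
\begin{equation*}
    M=\begin{pmatrix} a & -b^{-1} \\ b & 0\end{pmatrix}\in\SL(2,\mathbb{Z}/p^m\mathbb{Z}).
\end{equation*}
Either matrix has determinant $1$ and satisfies $M\,u^{(k)}=u$, establishing $u\sim u^{(k)}$.

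Combining the two directions: the map $u\mapsto g(u)$ is a complete invariant separating orbits, and its possible values on $\mathbb{Z}_{p^m}^2$ are exactly the divisors $\{1,p,p^2,\dots,p^m\}$ of $d=p^m$, each realized by the representative $u^{(k)}=(p^k,0)$. I expect the only mildly delicate step is the transitivity argument, where one must be careful that lifting units from $\mathbb{Z}/p^{m-k}\mathbb{Z}$ to $\mathbb{Z}/p^m\mathbb{Z}$ preserves invertibility; this is immediate once one notices that being a unit modulo a power of $p$ is equivalent to being coprime to $p$, a condition insensitive to the specific power. No other step presents a real obstruction.
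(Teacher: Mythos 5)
Your proof is correct and follows the same overall two-part structure as the paper: show $g$ is invariant, then show transitivity on each level set by exhibiting an explicit $\SL(2,\mathbb{Z}/p^m\mathbb{Z})$ matrix carrying the canonical representative $(p^k,0)^T$ to an arbitrary vector with the same invariant. The one genuine divergence is in the transitivity step. The paper writes $u = (p^j a, p^j b)$ with $\gcd(a,b,p)=1$, invokes B\'ezout's identity to obtain $ax + by \equiv 1 \pmod{p^m}$, and uses the matrix $\begin{pmatrix} x & y \\ -b & a\end{pmatrix}$ to send $u$ to $(p^j,0)^T$. You instead exploit the fact that $\mathbb{Z}/p^m\mathbb{Z}$ is a local ring: since $a$ and $b$ are not both divisible by $p$, at least one is a unit, and you split into two cases with an explicit triangular-type matrix $\begin{pmatrix} a & 0 \\ b & a^{-1}\end{pmatrix}$ or $\begin{pmatrix} a & -b^{-1} \\ b & 0\end{pmatrix}$, mapping the canonical vector to $u$ rather than the reverse (equivalent, since $\SL_2$ elements are invertible). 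Your variant is slightly more elementary and self-contained for the prime-power case, avoiding the small gap in the paper's B\'ezout step (where one must actually note $\gcd(a,b)$ is a unit mod $p^m$ and rescale to get $\equiv 1$). The paper's route has the advantage of being the more portable argument for non-local rings. Your remark about lifting units from $\mathbb{Z}/p^{m-k}\mathbb{Z}$ to $\mathbb{Z}/p^m\mathbb{Z}$ is exactly the right thing to check, and the justification you give (unit mod $p^{m-k}$ iff coprime to $p$ iff unit mod $p^m$) is correct.
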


\begin{proof}
The greatest common divisor is preserved under integer linear combinations with unit determinant. If $v = M u$ for some $M \in \SL(2, \mathbb{Z}_d)$, then the components of $v$ are linear combinations of the components of $u$. Thus, any common divisor of $u_1, u_2$ (and $d$) must also divide $v_1, v_2$. Since $M$ is invertible, the reverse implication holds, ensuring $g(u) = g(v)$.

Now, we show that any vector $u$ with invariant $g(u) = p^j$ (for $0 \le j \le m$) lies in the orbit of the canonical vector $e_j = (p^j, 0)^T$.

Write $u = (p^j a, p^j b)^T$. Since the total gcd with the modulus $d=p^m$ is $p^j$, the remaining factors must satisfy $\gcd(a, b, p) = 1$. By Bézout's identity, there exist integers $x, y$ such that $ax + by \equiv 1 \pmod{p^m}$. We construct the matrix:
\[
M = \begin{pmatrix} x & y \\ -b & a \end{pmatrix}.
\]
Observe that $\det(M) = ax + by \equiv 1 \pmod{p^m}$, so $M \in \SL(2, \mathbb{Z}/d\mathbb{Z})$. Applying this transformation to $u$:
\[
M \begin{pmatrix} p^j a \\ p^j b \end{pmatrix} = p^j \begin{pmatrix} x & y \\ -b & a \end{pmatrix} \begin{pmatrix} a \\ b \end{pmatrix} = p^j \begin{pmatrix} ax + by \\ -ba + ab \end{pmatrix} = \begin{pmatrix} p^j \\ 0 \end{pmatrix}.
\]
Since any vector with invariant $p^j$ can be mapped to the same canonical state $(p^j, 0)^T$, all such vectors belong to the same orbit.
\end{proof}
\subsection{Irreducibility of the Adjoint Action}

The first condition of the density criterion (Theorem~\ref{thm:density_criterion}) is the irreducibility of the action on the Lie algebra $\mathfrak{sl}(d, \mathbb{C})$.  As formalized by Graydon \textit{et al}.~\cite{Graydon2021}, a finite group forms a unitary 2-design if and only if its adjoint representation on the space of operators decomposes into exactly two irreducible components: the identity and the traceless subspace.

While Graydon \textit{et al}.\ demonstrated that this condition fails for composite dimensions (implying the Clifford group is not a 2-design in those cases), the following theorem establishes the positive counterpart: for prime dimensions, the adjoint action is strictly irreducible.

\begin{theorem}\label{thm:prime_irreducibility}
The adjoint representation of the Clifford group $C(d)$ on $\mathfrak{sl}(d, \mathbb{C})$ is irreducible if and only if $d$ is prime.
\end{theorem}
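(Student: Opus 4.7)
The plan is to reduce the irreducibility question to a purely combinatorial statement via Lemma~\ref{lem:orbit_correspondence}: the adjoint representation of $C(d)$ decomposes into one irreducible summand $W_{\mathcal{O}}$ per $\SL(2, \mathbb{Z}/d\mathbb{Z})$-orbit $\mathcal{O}$ on $\mathbb{Z}_d^2 \setminus \{0\}$. Hence irreducibility is equivalent to transitivity of the natural $\SL(2, \mathbb{Z}/d\mathbb{Z})$-action on the set of non-zero index vectors.

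For the $(\Leftarrow)$ direction, I take $d = p$ prime and invoke Lemma~\ref{lem:orbit_classification} with $m = 1$. The invariant $g(u) = \gcd(u_1, u_2, p)$ can take only the values $1$ or $p$, and $g(u) = p$ forces $u \equiv 0 \pmod{p}$. Thus every non-zero $u \in \mathbb{Z}_p^2$ satisfies $g(u) = 1$, so a single orbit exhausts $\mathbb{Z}_p^2 \setminus \{0\}$ and irreducibility follows at once.

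For the $(\Rightarrow)$ direction I prove the contrapositive: if $d$ is composite, I exhibit at least two distinct orbits. Inspection of the proof of Lemma~\ref{lem:orbit_classification} shows that the $\SL(2, \mathbb{Z}/d\mathbb{Z})$-invariance of $\gcd(u_1, u_2, d)$ never used the prime-power hypothesis---only that unimodular integer combinations preserve common divisors---so the invariance holds for every $d$. Fix any proper divisor $k$ of $d$ with $1 < k < d$. Then $u = (1, 0)$ has $g(u) = 1$ while $v = (k, 0)$ has $g(v) = k > 1$, placing them in distinct orbits. Consequently $W_{\mathcal{O}_u}$ and $W_{\mathcal{O}_v}$ are two non-trivial $C(d)$-invariant summands of $\mathfrak{sl}(d, \mathbb{C})$, and the representation is reducible.

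The main obstacle I anticipate is honest bookkeeping in the converse direction: Lemma~\ref{lem:orbit_classification} as stated is limited to prime powers, so I must flag that only the invariance half of its proof is being reused, not its classification half. The orbit structure for composite $d$ with mixed prime factors is genuinely more delicate---it is refined via the Chinese Remainder Theorem and Proposition~\ref{prop:clifford_decomposition}---but thankfully a full classification is not needed here: the mere existence of two distinct values of the $\gcd$ invariant suffices. Beyond this conceptual care, no computation is required; the argument rests entirely on the orbit/irreducibility correspondence and one elementary invariant.
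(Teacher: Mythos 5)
Your proof is correct and takes essentially the same route as the paper: reduce to orbit-counting via Lemma~\ref{lem:orbit_correspondence}, use Lemma~\ref{lem:orbit_classification} to get a single orbit when $d$ is prime, and exhibit two values of the $\gcd$-invariant when $d$ is composite. Your explicit caveat that only the \emph{invariance} of $\gcd(u_1,u_2,d)$ (not the full orbit classification, which Lemma~\ref{lem:orbit_classification} states only for prime powers) is being reused in the composite direction is a point the paper passes over quickly, and is a welcome bit of extra precision.
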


\begin{proof}
If $d=p$ is prime, the only divisors of $d$ are $1$ and $p$. For any non-zero vector $u \in \mathbb{Z}_p^2 \setminus \{0\}$, the greatest common divisor is $\gcd(u, p) = 1$.
By Lemma~\ref{lem:orbit_classification}, all such vectors belong to a single $\SL(2, \mathbb{Z}/d\mathbb{Z})$ orbit $\mathcal{O}_1$.
Consequently, by Lemma~\ref{lem:orbit_correspondence}, the representation space $\mathfrak{sl}(p, \mathbb{C})$ corresponds to a single irreducible module $W_{\mathcal{O}_1}$. Thus, the action is irreducible.

Conversely, if $d$ is composite, there exists a proper divisor $k$ ($1 < k < d$). The set of vectors with $\gcd(u, d)=k$ forms an orbit disjoint from the set of vectors with $\gcd(u, d)=1$. This implies the existence of at least one proper nonzero invariant subspace, rendering the representation reducible.
\end{proof}

\subsection{Maximality and Universality}

To establish universality for prime dimensions $d=p$, we rely on the fact that any extension of the Clifford group by a non-Clifford gate forces the group to become infinite. This property stems from the maximality of the Clifford group within the finite subgroups of the projective unitary group. While this fact is well-known in quantum information~\cite{Gottesman1999}, a rigorous mathematical proof relies on results from finite group theory.

For the general case of systems with $n$-qudits ($d$ prime), Nebe, Rains, and Sloane~\cite{Nebe2001} established maximality using number-theoretic techniques. Specifically, they utilized $l$-adic integrality and reduction modulo suitable rational primes to apply the classification of subgroups of finite general linear groups. However, as noted in~Ref.~\cite{Nebe2001}, for the single-qudit case ($d=p$) considered here, the sufficiency of the primality condition follows from earlier work by Lindsey~\cite{Lindsey1970}. Unlike the general multi-qudit case, Lindsey's proof relies primarily on character theory and the classification of linear groups of prime degree. Complementing this classical result, we demonstrate that primality is also a \emph{necessary} condition, thereby obtaining the corresponding if-and-only-if statement. For completeness, we provide a self-contained proof of the forward direction based on a Corollary in~\cite{Lindsey1970}.

\begin{theorem}\label{thm:maximality_iff}
The single-qudit Clifford group $\mathcal{C}_d$ is a maximal finite subgroup of the unitary group modulo scalars (i.e., any finite group $G \supsetneq \mathcal{C}_d$ satisfies $G = \langle \mathcal{C}_d, \lambda I \rangle$) if and only if $d$ is a prime number.
\end{theorem}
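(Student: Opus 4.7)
The statement is biconditional, and the two implications require very different tools, so I would split the argument accordingly.

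\textbf{Forward direction ($d$ prime $\Rightarrow$ maximal).} I plan to invoke Lindsey's classification of finite primitive subgroups of $\GL(p,\mathbb{C})$ of prime degree. The steps are: (i) use Theorem~\ref{thm:prime_irreducibility} to deduce irreducibility of the adjoint action of $\mathcal{C}_p$, which via Lemma~\ref{lem:normal_abelian_is_multiple_identity} promotes to primitivity of the defining representation on $\mathbb{C}^p$; (ii) read off from the exact sequence~\eqref{eq:exact_sequence_finite_clifford} that $H(p)$ sits inside $\mathcal{C}_p$ as a normal extraspecial $p$-subgroup of order $p^3$; (iii) take any finite overgroup $G \supsetneq \mathcal{C}_p$ inside $\U(p)$—it inherits primitivity and continues to contain $H(p)$ as a normal subgroup, so Lindsey's corollary places it inside the full unitary normalizer $N_{\U(p)}(H(p)) = \mathcal{C}_p \cdot \U(1)$; finiteness of $G$ then collapses the scalar part to a finite cyclic group, yielding $G = \langle \mathcal{C}_p, \lambda I\rangle$ for some root of unity $\lambda$.

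\textbf{Backward direction ($d$ composite $\Rightarrow$ not maximal).} I would construct a concrete finite overgroup, splitting by whether $d$ admits a coprime factorization. If $d = d_1 d_2$ with $d_1,d_2 > 1$ and $\gcd(d_1,d_2)=1$, Proposition~\ref{prop:clifford_decomposition} identifies $\mathcal{C}_d$ with $\mathcal{C}_{d_1}\otimes\mathcal{C}_{d_2}$ inside $\U(d_1 d_2)$ via the CRT isomorphism; the bipartite Clifford group normalizing $H(d_1)\otimes H(d_2)$ is then a finite, strictly larger subgroup, containing generalized inter-factor CNOTs (precisely the gates analyzed in Section~\ref{sec:composite}) that are non-scalar and do not normalize the single-qudit Heisenberg $H(d)$, hence lie outside $\mathcal{C}_d \cdot \U(1)$. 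If $d = p^m$ with $m \ge 2$, I would exploit the adjoint reducibility from Theorem~\ref{thm:prime_irreducibility} together with Lemma~\ref{lem:orbit_classification} to isolate the sub-Heisenberg $H' \cong H(p)$ generated by $X^{p^{m-1}}$ and $Z^{p^{m-1}}$, and then construct a torsion element $\tau \in N_{\U(p^m)}(H')$ that implements a genuine sub-Clifford operation on the invariant sector $W_{m-1}$ while acting by a scalar on the complementary sectors. Since $\tau$ has finite order and respects the orbit decomposition preserved by $\mathcal{C}_{p^m}$, the extension $\langle \mathcal{C}_{p^m},\tau\rangle$ remains finite yet strictly larger than $\langle \mathcal{C}_{p^m},\lambda I\rangle$.

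\textbf{Main obstacle.} Case B2 (prime-power dimensions) is the delicate step: confirming finiteness of $\langle \mathcal{C}_{p^m},\tau\rangle$ is non-trivial because generic non-Clifford extensions produce infinite-order commutators with the generators $H_{p^m}$ and $P_{p^m}$. My strategy would be to first pin down an explicit $\tau$ in the smallest interesting case, $(p,m)=(2,2)$ with $d=4$, by direct $4\times 4$ matrix computation and a computer-algebra check of group closure, and then abstract the structural reason for finiteness—likely that $\tau$ induces an automorphism of $\mathcal{C}_{p^m}$ which is inner modulo a finite central extension—to reach general prime powers. A secondary concern on the forward side is matching Lindsey's hypotheses precisely (extraspecial normal $p$-subgroup plus primitivity of the ambient group); this is routine but must be spelled out to avoid circularity with the irreducibility results of Section~\ref{sec:prime}.
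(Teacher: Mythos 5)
Your forward direction has a logical gap that inverts the role of Lindsey's theorem. You assert that a finite overgroup $G \supsetneq \mathcal{C}_p$ ``continues to contain $H(p)$ as a normal subgroup,'' but normality in $\mathcal{C}_p$ does not propagate to normality in $G$; establishing this is precisely what Lindsey's corollary is for. The paper's argument runs the other way: one checks that $G$ is faithful, primitive, irreducible of prime degree with $p^4 \mid |G|$ (all inherited upward from $\mathcal{C}_p$), and Lindsey then supplies a normal nonabelian $p$-subgroup $D \trianglelefteq G$ of order $p^3$ with $G/D \hookrightarrow \SL(2,\mathbb{Z}/p\mathbb{Z})$; only \emph{afterwards} does one identify $D$ with $H(p)$ by the conjugacy-uniqueness of irreducible extraspecial $p$-groups, and conclude $G \subseteq N_{\U(p)}(H(p))$. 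Moreover, Lindsey's result requires $p > 5$; you do not treat the residual primes $p \in \{2,3,5\}$, which the paper resolves separately via the Blichfeldt--Brauer classifications of primitive linear groups of small degree.

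Your backward direction for coprime $d = d_1 d_2$ contains a more serious error. You propose enlarging $\mathcal{C}_d \cong \mathcal{C}_{d_1} \otimes \mathcal{C}_{d_2}$ by adjoining the intra-qudit CNOT, asserting the resulting group is a finite ``bipartite Clifford group.'' But by the paper's own Lemma~\ref{lem:induced_magic}, conjugating $I_{d_1} \otimes Z_{d_2}^\dagger$ by $CN_{d_1,d_2}$ produces $T_{d_2} \otimes Z_{d_2}^\dagger$, a manifestly non-Pauli operator since $d_2 \nmid d_1$; the CNOT therefore fails to normalize the product Pauli group and is a non-Clifford resource. Worse, Theorem~\ref{thm:emergent_universality} shows that $\langle \mathcal{C}_d, CN_{d_1,d_2}\rangle$ is \emph{dense} in $\SU(d)$—the exact opposite of the finite overgroup you need. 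The paper instead exploits, e.g., the SWAP (when $d_1 = d_2$) or other finite symmetries of the tensor-product structure that lie outside $\mathcal{C}_{d_1} \times \mathcal{C}_{d_2}$. Finally, in the prime-power case your subgroup $\langle X^{p^{m-1}}, Z^{p^{m-1}}\rangle$ is \emph{abelian} (the commutator phase $\omega^{p^{2m-2}}$ is trivial mod $p^m$ once $m \ge 2$), so labeling it $H' \cong H(p)$ is incorrect; the paper uses this abelian subgroup $\mathcal{A}$ to induce a system of imprimitivity on $\mathbb{C}^d$ with $p^2$ blocks, and constructs the extra finite-order unitary as either a block permutation (for $p \ge 3$) or a block-diagonal phase (for $p=2$), which sidesteps exactly the finiteness worry you flag as your ``main obstacle.''
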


\begin{proof}
The proof addresses the sufficient and necessary conditions separately.

\noindent \textbf{1. Sufficiency.}
Let $p > 5$ be prime. We will show that $\mathcal{C}_p$ is a maximal finite subgroup of $\mathrm{SU}(p)/U(1)$.

Recall the exact sequence
\[
1 \longrightarrow H(d) \longrightarrow \mathcal{C}_d \longrightarrow \SL(2, \mathbb{Z}/d\mathbb{Z}) \longrightarrow 1,
\]
where $H(p)$ is the Heisenberg group (an extraspecial $p$-group of order $p^3$ and exponent $p$). The group $H(p)$ acts irreducibly on $\mathbb{C}^p$ and is a normal subgroup of $\mathcal{C}_p$.

Suppose, by contradiction, that there exists a finite group $G$ with 
$\mathcal{C}(p) \subsetneq G \subseteq \mathrm{U}(p)/\mathrm{U}(1)$, where the inclusion is strict and then $G$ is not obtained from $\mathcal{C}_p$ by merely adjoining scalar matrices.

The natural $p$-dimensional representation of $G$ is faithful, primitive, and irreducible (primitivity follows since $p$ prime  and then the adjoint representation is irreducible).

We apply the following result of Lindsey:

\begin{theorem}[{\cite[Corollary, p.~58]{Lindsey1970}}]\label{them:Lindsey1970}
Let $G$ be a finite group with a faithful, primitive, irreducible representation of prime degree $p > 5$. If $p^4 \mid |G|$, then $G$ has a normal nonabelian $p$-subgroup $D$ of order $p^3$, and $G/D$ is isomorphic to a subgroup of $\mathrm{SL}(2, \mathbb{Z}/p\mathbb{Z})$.
\end{theorem}

Since $\mathcal{C}_p \subset G$ and $p^4 \mid |\mathcal{C}_p|$, we have $p^4 \mid |G|$. By Theorem~\ref{them:Lindsey1970}, $G$ contains a normal nonabelian $p$-subgroup $D$ of order $p^3$ with $G/D \hookrightarrow \mathrm{SL}(2,\mathbb{Z}/p\mathbb{Z})$.

The group $D$ must coincide with $H(p)$: both are extraspecial $p$-groups of order $p^3$ acting irreducibly on $\mathbb{C}^p$, and such a group is unique up to conjugacy in $\mathrm{GL}(p, \mathbb{C})$.

Therefore, $G$ normalizes $H(p)$, which means $G$ is contained in the normalizer $N_{\mathrm{U}(p)}(H(p))$. But this normalizer, modulo scalars, is precisely the Clifford group $\mathcal{C}_p$. Hence $G = \mathcal{C}(p)$, contradicting our assumption.

For the remaining small primes $p \in \{2, 3, 5\}$, maximality follows from the explicit classification of finite primitive linear groups of degree $p$. If a finite extension $G \supsetneq \mathcal{C}_p$ existed, it would necessarily possess a Sylow $p$-subgroup of order at least $|\mathcal{C}_p|_p \ge p^4$ (accounting for the Heisenberg group and $SL(2,\mathbb{Z}/p\mathbb{Z})$ part). However, the classifications by Blichfeldt~\cite{Blichfeldt1917} and Brauer~\cite{Brauer1967} reveal that no primitive group---other than the normalizers of extraspecial groups---supports such a structure. For instance, for $p=5$, Brauer established that the only other primitive candidates involve central extensions of $A_5, A_6,$ or $\text{PSL}(2,11)$, whose Sylow $5$-subgroups have orders at most $25$, strictly less than the required order. Similar order constraints on the Sylow $p$-subgroups rule out extensions for $p=2$ and $p=3$.

\noindent \textbf{2. Necessity ($d$ composite).}
We prove by contradiction: if $d$ is composite, $\mathcal{C}_d$ is \emph{not} maximal. We distinguish two cases based on the prime factorization of $d$.

\textit{Case A: Prime Powers ($d = p^m$ with $m \geq 2$).}
We explicitly construct a strictly larger finite group by identifying a system of imprimitivity.
Recall that the Pauli operators $X, Z$ generate the Heisenberg group. Consider the subgroup generated by the $p^{m-1}$-th powers of these generators:
\[
\mathcal{A} = \langle X^{p^{m-1}}, Z^{p^{m-1}} \rangle.
\]
We verify the commutativity of these elements. The group commutator is determined by the phase $\omega = e^{2\pi i / p^m}$. Since $[X^{p^{m-1}}, Z^{p^{m-1}}] = \omega^{p^{2m-2}} I$, and $2m-2 \ge m$ for $m \ge 2$, the commutator is the identity. Thus, $\mathcal{A}$ is a normal abelian subgroup.

The Hilbert space decomposes into the direct sum of the simultaneous eigenspaces of $\mathcal{A}$:
\[
\mathbb{C}^d = \bigoplus_{\lambda \in \widehat{\mathcal{A}}} V_\lambda.
\]
The index set $\Lambda = \widehat{\mathcal{A}}$ is isomorphic to $\mathbb{Z}_p^2$ and has cardinality $K = p^2$. The action of $\mathcal{C}_d$ by conjugation preserves $\mathcal{A}$, and therefore induces a permutation action on the set of indices $\Lambda$. This proves that $\mathcal{C}_d$ acts \emph{imprimitively}. 

To prove non-maximality, we analyze the embedding of $\mathcal{C}_d$ into the stabilizer of this decomposition. Let $\Phi: \mathcal{C}_d \to S_\Lambda$ be the homomorphism describing the induced permutation of the $K$ subspaces. We distinguish two subcases:

\begin{enumerate}
    \item Failure by Permutation ($p \geq 3$): The image $\text{Im}(\Phi)$ corresponds to the affine symplectic group $\mathrm{ASL}(2, \mathbb{Z}_p)$, which has order $p^3(p^2-1)$. For $p \geq 3$, this order is strictly smaller than that of the symmetric group $|S_{p^2}| = (p^2)!$. Thus, there exists a permutation $\pi \in S_\Lambda \setminus \text{Im}(\Phi)$. We can strictly extend $\mathcal{C}_d$ by adding a finite-order unitary $U_\pi$ that permutes the subspaces according to $\pi$ (acting as identity within each block). Note that $U_\pi$ is not a scalar multiple of the identity, as it performs a non-trivial permutation of orthogonal subspaces.

    \item Failure by Phase Independence ($p = 2$): In this case, $K=4$ and the permutation groups coincide ($|\mathrm{ASL}(2,2)| = |S_4| = 24$). However, maximality fails in the kernel of $\Phi$ (the subgroup fixing the blocks). Restricted to the action on the set of blocks, the diagonal elements of the Clifford group induce relative phases of the form $\phi(u) = \omega^{Q(u)}$ for $u \in \Lambda$, where $Q$ is necessarily a polynomial of degree at most 2 (a quadratic form).
    In contrast, the stabilizer of the decomposition allows for \emph{arbitrary} phase assignments $\phi(u) = \omega^{f(u)}$. Since the space of all functions on $\mathbb{Z}_2^2$ is strictly larger than the space of quadratic forms, there exists a diagonal unitary $D$ (e.g., $D = \mathrm{diag}(1, 1, 1, i)$ relative to the blocks) which belongs to the finite stabilizer but not to $\mathcal{C}_d$. This operator is clearly not a scalar multiple of the identity, as it possesses distinct eigenvalues.
\end{enumerate}

In both instances, we construct a finite group $G' = \langle \mathcal{C}_d, U_{\text{ext}} \rangle$ such that $\mathcal{C}_d \subsetneq G'$ and $G' \neq \langle \mathcal{C}_d, Z(G') \rangle$. Thus, $\mathcal{C}_d$ is not maximal.

\textit{Case B: Coprime Factors ($d = a \cdot b$ with $\gcd(a, b) = 1$).}
In this case, the failure of maximality arises from the tensor product structure. By Proposition~\ref{prop:clifford_decomposition}, the group decomposes as $\mathcal{C}_d \cong \mathcal{C}_a \times \mathcal{C}_b$. This group acts on $\mathcal{H} \cong \mathbb{C}^a \otimes \mathbb{C}^b$ and preserves the tensor decomposition.
We can strictly extend this group by adding finite-order unitary operators that normalize the tensor structure but do not belong to the direct product.
For example, if $a=b$, the SWAP operator $S$ has finite order and normalizes the group, but $S \notin \mathcal{C}_d$. Even if $a \neq b$, the group is contained in the finite normalizer of the product structure, which allows for symmetries larger than the restricted $\SL(2, \mathbb{Z}/d\mathbb{Z})$ action induced by $\mathcal{C}_d$.
\end{proof}

\subsection{Universality Resources in Prime Dimensions}\label{subsec:prime_universality}

The combination of irreducibility (Theorem~\ref{thm:prime_irreducibility}) and maximality (Theorem~\ref{thm:maximality_iff}) yields the following criterion for universality, valid \emph{only} in prime dimensions.

\begin{corollary}\label{cor:prime_universality}
Let $p$ be a prime. Let $g \in \SU(p)$ be any unitary gate. If $g \notin \mathcal{C}_p$ (modulo scalars), then the set $\mathcal{C}_p \cup \{g\}$ is single-qudit universal.
\end{corollary}

This corollary shifts the burden of finding universal sets to simply identifying a single non-Clifford gate. We can now identify specific resources based on the classifications established in the Preliminaries.

First, regarding diagonal resources:

\begin{corollary}\label{cor:prime_universality_diagonal}
Let $p$ be a prime number. The set $\mathcal{C}_p \cup \{ T_s \}$ is single-qudit universal if and only if $s \nmid K_p$ (as defined in Proposition \ref{prop:Ts_resource}).
\end{corollary}

Second, we consider permutation resources. Since the Clifford group contains only affine permutations (Proposition \ref{prop:clifford_permutations}), any permutation that violates the affine condition $x \mapsto ax+b$ constitutes a universal resource.

\begin{corollary}\label{cor:prime_universality_permutation}
Let $p \ge 5$ be a prime. Any non-affine permutation $\pi \in S_p$ yields a universal gate set $\mathcal{C}_p \cup \{P_\pi\}$.
In particular, the simple transposition $\tau = (0\;1)$, which swaps the first two basis states and fixes the rest, is universal for all $p \ge 5$.
\end{corollary}

\begin{proof}
By Corollary \ref{cor:prime_universality}, it suffices to show $P_\tau \notin \mathcal{C}_p$, which holds if $\tau$ is not affine.
Suppose $\tau(x) = ax+b$.
Evaluated at $x=0$, $\tau(0)=1 \implies b=1$.
Evaluated at $x=1$, $\tau(1)=0 \implies a(1)+1=0 \implies a=-1$.
Evaluated at $x=2$ (since $p \ge 3$, the element 2 exists), $\tau(2)=2$. However, the affine formula predicts $\tau(2) = (-1)(2)+1 = -1$.
Thus, equality holds only if $2 \equiv -1 \pmod p$, i.e., $3 \equiv 0 \pmod p$.
This is possible only for $p=3$. For all primes $p \ge 5$, the transposition is non-affine and therefore universal.
\end{proof}
\section{Case II: Prime-power Dimensions}\label{sec:prime_power}

While prime dimensions provide a rigid structure where the Clifford group is projectively finite maximal and irreducible, prime-power dimensions $d = p^m$ (with $m \geq 2$) introduce a structure derived from the arithmetic of the ring $\mathbb{Z}/{p^m}\mathbb{Z}$. In this case, the two primary pillars of universality no longer hold: the adjoint action becomes reducible, and the group loses its maximality.

To achieve universality, one must add a gate that restores irreducibility while ensuring infiniteness. We show that two classes of non-Clifford gates accomplish this: diagonal gates from the $T_s$ family (Sec.~\ref{subsec:Ts_universality}) and non-affine permutations such as transpositions (Sec.~\ref{subsec:permutation_resources}).

\subsection{Reducibility and Orbit Decomposition}

In Sec.~\ref{sec:prime}, we established that the irreducibility of the Clifford group's action is equivalent to the transitivity of the $\SL(2, \mathbb{Z}/d\mathbb{Z})$ group on the Pauli indices (Lemma~\ref{lem:orbit_correspondence}). While the prime case ($d=p$) guarantees transitivity, the prime-power case ($d=p^m$, $m \ge 2$) introduces structural modifications due to the existence of non-trivial ideals in the ring $\mathbb{Z}/{p^m}\mathbb{Z}$.

Applying the machinery established in Lemma~\ref{lem:orbit_correspondence}  and Lemma~\ref{lem:orbit_classification}, we identify that the action of $\SL(2, \mathbb{Z}_{p^m})$ preserves the valuation of the index vector. Specifically, the greatest common divisor, $\gcd(u, v, p^m)$, is an invariant of the $\SL(2, \mathbb{Z}_{p^m})$ action. Consequently, the set of Pauli indices partitions into $m$ distinct orbits.

By Lemma~\ref{lem:orbit_correspondence}, each distinct orbit induces a distinct \emph{irreducible} subrepresentation of the adjoint action. This leads to the following result.

\begin{theorem}\label{thm:algebra_stratification}
For a prime-power dimension $d=p^m$ ($m \ge 2$), the adjoint representation of the Clifford group on $\mathfrak{sl}(d, \mathbb{C})$ is reducible. The algebra decomposes into a direct sum of $m$ irreducible orthogonal invariant subspaces:
\begin{equation}
    \mathfrak{sl}(d, \mathbb{C}) = \bigoplus_{k=0}^{m-1} \mathcal{W}_{k}.
\end{equation}
Each subspace $\mathcal{W}_k$ is spanned by the Pauli operators whose indices have a constant valuation:
\begin{equation}
    \mathcal{W}_k = \Span\left\{ P_{(u,v)} : \gcd(u, v, p^m) = p^k \right\}.
\end{equation}
\end{theorem}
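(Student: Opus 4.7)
The plan is to derive the decomposition as an immediate corollary of the two structural lemmas already established for the adjoint action of the Clifford group. By Lemma~\ref{lem:orbit_correspondence}, decomposing the adjoint representation on $\mathfrak{sl}(d,\mathbb{C})$ into $C(d)$-irreducibles is equivalent to enumerating the orbits of the induced $\SL(2,\mathbb{Z}/d\mathbb{Z})$-action on the non-zero index set $\mathbb{Z}_d^2 \setminus \{0\}$, and Lemma~\ref{lem:orbit_classification} states that for $d = p^m$ these orbits are parametrized by the divisibility invariant $g(u) = \gcd(u_1, u_2, p^m)$. So the proof amounts to organizing this orbit data.

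First I would enumerate the orbits explicitly. The invariant $g(u)$ takes values in the set $\{1, p, p^2, \ldots, p^m\}$ of divisors of $p^m$. The value $g = p^m$ corresponds to the single orbit $\{(0,0)\}$, which is excluded when restricting to $\mathbb{Z}_{p^m}^2 \setminus \{0\}$. For each remaining value $g = p^k$ with $0 \le k \le m-1$, the canonical vector $(p^k, 0)$ is non-zero and satisfies $\gcd(p^k, 0, p^m) = p^k$, so the orbit $\mathcal{O}_k := \{u : g(u) = p^k\}$ is non-empty. Hence the non-zero indices partition into exactly $m$ orbits $\mathcal{O}_0, \ldots, \mathcal{O}_{m-1}$. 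Defining $\mathcal{W}_k := \Span\{V_u : u \in \mathcal{O}_k\}$ and invoking Lemma~\ref{lem:orbit_correspondence} then produces the decomposition $\mathfrak{sl}(d,\mathbb{C}) = \bigoplus_{k=0}^{m-1} \mathcal{W}_k$ with each $\mathcal{W}_k$ irreducible and $C(d)$-invariant.

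Mutual orthogonality is essentially free: the Pauli basis $\{V_u : u \neq 0\}$ is orthogonal with respect to the Hilbert-Schmidt inner product, and distinct $\mathcal{W}_k$ are spanned by disjoint subsets of this basis. Reducibility then follows by counting, since $m \geq 2$ guarantees at least two non-empty orbits. The only technical step requiring any explicit verification is the non-emptiness of each orbit $\mathcal{O}_k$ for $0 \le k \le m-1$, which is immediate from the representative $(p^k, 0)$ exhibited above. There is no substantive obstacle: the argument is essentially a tabulation showing how the ring-theoretic stratification of $\mathbb{Z}/p^m\mathbb{Z}$ by $p$-adic valuation lifts, via Clifford's theory as packaged in Lemma~\ref{lem:orbit_correspondence}, to the stated representation-theoretic stratification of $\mathfrak{sl}(p^m,\mathbb{C})$.
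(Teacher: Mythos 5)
Your proof is correct and takes essentially the same route as the paper, which derives the theorem directly from Lemma~\ref{lem:orbit_correspondence} and Lemma~\ref{lem:orbit_classification} by enumerating the $m$ nontrivial $\gcd$-orbits and invoking orthogonality of the Pauli basis. The explicit orbit count (excluding $p^m$, exhibiting the representative $(p^k,0)$ for each $k < m$) is exactly the tabulation the paper leaves implicit.
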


This result changes the criterion for universality compared to the prime case. Since the Lie algebra is fragmented into irreducible subspaces $\mathcal{W}_k$ that the Clifford group cannot connect, a supplementary gate must do more than simply generate infinite order; it must actively ``mix'' these subspaces, bridging irreducibility.

\subsection{Orbit-mixing Gates and Irreducibility}

To restore universality, we require a gate capable of coupling the orthogonal invariant subspaces $\mathcal{W}_k$ identified in Theorem~\ref{thm:algebra_stratification}. Since the Clifford group acts transitively only within each symplectic orbit, the extending gate must provide a mapping between them. We verify this capability by analyzing the Fourier spectrum of the gate's adjoint action.

Consider the diagonal unitary gate $T_\zeta$ associated with a phase function $\zeta: \mathbb{Z}_d \to \U(1)$, as defined in Sec.~\ref{subsec:prime_universality}.  The adjoint action on an  operator $X^a$ is given by:
\begin{equation}
    \Ad_{T_\zeta}(X^a) = X^a D_a,
\end{equation}
where $D_a$ is the diagonal operator with entries
\begin{equation}
    (D_a)_{xx} = \frac{\zeta(a+x)}{\zeta(x)}.
\end{equation}
Equivalently, $(D_a)_{xx} = \zeta(a)\,\delta_\zeta(a,x)$, where $\delta_\zeta$ is the 1-coboundary previously introduced in Eq.~\eqref{eq:1-coboundary}.

To identify which subspaces $\mathcal{W}_k$ are populated by this operation, we expand the diagonal operator $D_a$ in the Pauli-$Z$ basis using the Discrete Fourier Transform (DFT). Recall that for a function $f: \mathbb{Z}_d \to \mathbb{C}$, the  Fourier transform is defined as:
\begin{equation}
    \hat{f}(y) = \frac{1}{\sqrt{d}} \sum_{x \in \mathbb{Z}_d} f(x) \omega^{-xy}.
\end{equation}
Substituting the inverse DFT of the differential row $x \mapsto \delta_\zeta(a, x)$, the adjoint action expands in terms of the generalized Pauli basis $V_{(a,b)} = X^a Z^b$ as follows:
\begin{equation}\label{eq:magic_expansion}
    \Ad_{T_\zeta}(X^a) = \frac{1}{\sqrt{d}} \sum_{b \in \mathbb{Z}_d} \widehat{(\delta_{\zeta, a})}(b) V_{(a,b)}.
\end{equation}
This expansion reveals the selection rules of the gate. Recall that a basis vector $V_{(a,b)}$ belongs to the subspace $\mathcal{W}_k$ if the valuation of its indices is $\gcd(a, b, p^m) = p^k$.
Since diagonal gates preserve the row index $a$, a vector starting in a singular subspace $\mathcal{W}_k$ (where $\gcd(a, d) = p^k$ with $k \ge 1$) is constrained to that row. To map this vector into the generic subspace $\mathcal{W}_0$ (where the indices are units), the gate must generate a $Z$-component $b$ such that $\gcd(a, b)$ becomes a unit. This is algebraically possible if and only if $b$ itself is a unit.

Thus, irreducibility requires the gate to map at least one operator from every singular subspace into the generic subspace.

\begin{definition}\label{def:magic_gate}
A diagonal gate $T_\zeta$ is called an \emph{orbit-mixing gate} if, for every proper divisor $u \in \{p, p^2, \dots, p^{m-1}\}$, there exists at least one $b \in \mathbb{Z}_d^\times$ such that:
\begin{equation}\label{def:mixing_gate}
    \widehat{(\delta_{\zeta, u})}(b) \neq 0.
\end{equation}
\end{definition}

\begin{theorem}\label{thm:irreducibility_magic}
Let $d=p^m$. If $T_\zeta$ is an \emph{orbit-mixing gate}, then the adjoint representation of the extended group $G = \langle \mathcal{C}_d, T_\zeta \rangle$ on $\mathfrak{sl}(d, \mathbb{C})$ is irreducible.
\end{theorem}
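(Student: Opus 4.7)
The plan is to show that any nonzero $G$-invariant subspace $\mathcal{U} \subseteq \mathfrak{sl}(d,\mathbb{C})$ must be the entire algebra, by combining the Clifford-isotypic decomposition of Theorem~\ref{thm:algebra_stratification} with a single application of the orbit-mixing hypothesis and the unitarity of the adjoint action under the Hilbert--Schmidt inner product.

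First, I would verify that the Clifford-irreducible blocks $\mathcal{W}_0, \ldots, \mathcal{W}_{m-1}$ are pairwise non-isomorphic as $\mathcal{C}_d$-modules. A direct count of Pauli indices with prescribed valuation gives $\dim \mathcal{W}_k = p^{2(m-k)} - p^{2(m-k-1)}$, which is strictly decreasing in $k$. Pairwise non-isomorphism then follows from distinct dimensions, and Schur's lemma forces every $\mathcal{C}_d$-invariant subspace to be of the form $\mathcal{U} = \bigoplus_{k \in S} \mathcal{W}_k$ for a unique $S \subseteq \{0, \ldots, m-1\}$. Moreover, this decomposition is \emph{orthogonal}, since the Pauli basis is Hilbert--Schmidt orthogonal and each $\mathcal{W}_k$ is spanned by a disjoint subset of Paulis.

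The heart of the argument is to show that every nonzero $G$-invariant subspace $\mathcal{U}$ contains the generic block $\mathcal{W}_0$. Assume for contradiction that $0 \notin S$ and pick $k \in S$, so $k \geq 1$. Then $X^{p^k} \in \mathcal{W}_k \subseteq \mathcal{U}$, and by $G$-invariance, Eq.~\eqref{eq:magic_expansion} applied to $a = p^k$ gives
\begin{equation*}
    \Ad_{T_\zeta}(X^{p^k}) = \frac{\zeta(p^k)}{\sqrt{d}} \sum_{b \in \mathbb{Z}_d} \widehat{(\delta_{\zeta, p^k})}(b)\, V_{(p^k, b)} \in \mathcal{U}.
\end{equation*}
The orbit-mixing hypothesis at the proper divisor $u = p^k$ produces a unit $b^{\ast} \in \mathbb{Z}_d^\times$ with $\widehat{(\delta_{\zeta, p^k})}(b^{\ast}) \neq 0$. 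Since $\gcd(p^k, b^{\ast}, p^m) = 1$, the basis vector $V_{(p^k, b^{\ast})}$ lies in $\mathcal{W}_0$, so $\Ad_{T_\zeta}(X^{p^k})$ has a nonzero $\mathcal{W}_0$-component. This contradicts $\mathcal{U} \cap \mathcal{W}_0 = 0$, which would follow from $0 \notin S$ and the orthogonality of the Pauli decomposition. Hence $0 \in S$ for every nonzero $G$-invariant subspace.

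Unitarity then closes the argument. Because $\Ad_G$ preserves the Hilbert--Schmidt inner product, the orthogonal complement $\mathcal{U}^\perp$ is also $G$-invariant. If $\mathcal{U}$ were a proper nonzero invariant subspace, applying the previous step to both $\mathcal{U}$ and $\mathcal{U}^\perp$ would yield $\mathcal{W}_0 \subseteq \mathcal{U} \cap \mathcal{U}^\perp = 0$, a contradiction; therefore $\mathcal{U} = \mathfrak{sl}(d,\mathbb{C})$. I expect the main obstacle to be the bookkeeping in the key step: ensuring that the orbit-mixing condition, phrased Fourier-theoretically in terms of the coboundary $\delta_\zeta$, translates exactly into producing a Pauli term that sits in $\mathcal{W}_0$. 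This reduces to the arithmetic observation that for $u \in p\mathbb{Z}_{p^m}$ one has $V_{(u,b)} \in \mathcal{W}_0$ if and only if $b \in \mathbb{Z}_d^\times$, which is precisely the content built into Definition~\ref{def:magic_gate}.
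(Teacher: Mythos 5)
Your proof is correct and follows essentially the same route as the paper: reduce to the Clifford-isotypic decomposition of Theorem~\ref{thm:algebra_stratification}, show every nonzero $G$-invariant subspace must contain $\mathcal{W}_0$ by hitting $X^{p^k}$ with $\Ad_{T_\zeta}$ and invoking the orbit-mixing condition, then apply the same step to $\mathcal{U}^\perp$ and use orthogonality. The one place where you are actually more careful than the paper is in justifying why a $\mathcal{C}_d$-invariant subspace must be a sub-sum $\bigoplus_{k\in S}\mathcal{W}_k$: the paper simply asserts this, whereas you correctly observe that it requires the $\mathcal{W}_k$ to be pairwise non-isomorphic (so the decomposition is multiplicity-free) and supply the dimension count $\dim\mathcal{W}_k = p^{2(m-k)} - p^{2(m-k-1)}$ to establish it.
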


\begin{proof}
Assume, for the sake of contradiction, that the representation is reducible. Let $\mathcal{S} \subset \mathfrak{sl}(d, \mathbb{C})$ be a proper, non-zero $G$-invariant subspace. Since the representation is unitary, $\mathcal{S}^\perp$ is also invariant.
Because $\mathcal{C}_d \subset G$, any invariant subspace must be a direct sum of the Clifford-irreducible components $\mathcal{W}_k$.

We prove that \emph{any non-zero $G$-invariant subspace must contain the generic subspace $\mathcal{W}_0$.} Let $\mathcal{S}$ be such a subspace.
\begin{itemize}
    \item \textbf{Case 1:} $\mathcal{S}$ contains a subspace $\mathcal{W}_k$ with $k \ge 1$.
    Let $X^{p^k} \in \mathcal{W}_k \subseteq \mathcal{S}$. Since $\mathcal{S}$ is $G$-invariant, $\Ad_{T_\zeta}(X^{p^k}) \in \mathcal{S}$. By Eq.~\eqref{eq:magic_expansion} and the Orbit-Mixing Condition (Def.~\ref{def:magic_gate}), the expansion of this vector contains a term $\beta V_{(p^k, b)}$ with $\beta \neq 0$ and $b \in \mathbb{Z}_d^\times$.
    Since $b$ is a unit, $\gcd(p^k, b, p^m) = 1$, which implies $V_{(p^k, b)} \in \mathcal{W}_0$. Since $\mathcal{S}$ is invariant under the Clifford group (which acts transitively on the orbit associated with $\mathcal{W}_0$), containing one vector in $\mathcal{W}_0$ implies $\mathcal{W}_0 \subseteq \mathcal{S}$.

    \item \textbf{Case 2:} $\mathcal{S}$ is contained entirely within $\mathcal{W}_0$.
    Since $\mathcal{S}$ is non-zero and decomposes into Clifford orbits, and $\mathcal{W}_0$ corresponds to a single transitive orbit, it must be that $\mathcal{S} = \mathcal{W}_0$.
\end{itemize}
In both cases, $\mathcal{W}_0 \subseteq \mathcal{S}$. By the same logic, $\mathcal{W}_0 \subseteq \mathcal{S}^\perp$. Thus, $\mathcal{W}_0 \subseteq \mathcal{S} \cap \mathcal{S}^\perp = \{0\}$, a contradiction.
\end{proof}

\subsection{Universality via the \texorpdfstring{$T_s$}{T\_s}-gate Family}\label{subsec:Ts_universality}

We now apply our general machinery---irreducibility via orbit-mixing gates and infiniteness via the infiniteness criterion---to the generic family of diagonal gates $T_s$. Recall that $T_s$ introduces phases proportional to the roots of unity of order $s$, see Eq.~(\ref{eq:T_s}).

Note that $T_s$ is a unitary diagonal operator (hence $T_s \in \U(d)$), and it may fail to have determinant one. Since our density and infiniteness criteria are projective (and the adjoint action is insensitive to global phases), we may freely replace $T_s$ by any scalar multiple in $\SU(d)$ without affecting the arguments.

\begin{theorem}\label{thm:prime_power_universality}
Let $d=p^m$ with $m \ge 2$. Consider the group $G_s = \langle \mathcal{C}_d, T_s \rangle$ generated by the Clifford group and the gate $T_s$ (Eq. \eqref{eq:T_s}).
\begin{enumerate}
    \item The group $G_s$ acts irreducibly on the Lie algebra $\mathfrak{sl}(d, \mathbb{C})$ if and only if $s \nmid d$.
    \item  The group $G_s$ is single-qudit universal (dense in $\SU(d)$) if $s \nmid d$ and $s$ satisfies the bound:
    \begin{equation}\label{eq:exact_bound}
        s > \pi(d-1)\big/(2\,\arcsin(1/4)).
    \end{equation}
\end{enumerate}
\end{theorem}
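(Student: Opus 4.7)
The plan is to decouple the two requirements of Theorem~\ref{thm:density_criterion} (irreducibility and infiniteness) and handle them with tools already developed in the paper: Theorem~\ref{thm:irreducibility_magic} for part~(1), and the spectral criterion of Corollary~\ref{cor:spectral_check} for the infiniteness half of part~(2).

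For part~(1), I verify the orbit-mixing condition of Definition~\ref{def:magic_gate} by direct Fourier analysis of the differential. Since $\zeta(k)=e^{2\pi i k/s}$, the differential is the two-valued function
\begin{equation*}
\delta_\zeta(u,x) = \begin{cases} 1 & \text{if } u+x < d,\\ e^{-2\pi i d/s} & \text{if } u+x \ge d, \end{cases}
\end{equation*}
reflecting the carry produced by reduction modulo $d$. For fixed row $u=p^k$ with $1\le k\le m-1$, a geometric-sum calculation shows that for every nonzero $b$
\begin{equation*}
\widehat{\delta_{\zeta,u}}(b) \;=\; \frac{e^{-2\pi i d/s}-1}{\sqrt d}\,\omega^{ub}\,\frac{1-\omega^{-ub}}{1-\omega^{-b}}.
\end{equation*}
The first factor vanishes exactly when $s\mid d$, while the quotient $(1-\omega^{-ub})/(1-\omega^{-b})$ vanishes exactly when $p^{m-k}\mid b$; this last divisibility is impossible for any unit $b\in(\mathbb Z/d\mathbb Z)^\times$ since $m-k\ge 1$. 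Thus, whenever $s\nmid d$, the Fourier coefficient is nonzero at every unit $b$, so $T_s$ is orbit-mixing, and Theorem~\ref{thm:irreducibility_magic} yields irreducibility. Conversely, if $s\mid d$ then $T_s\in\mathcal{C}_d$, so $G_s=\mathcal{C}_d$ and Theorem~\ref{thm:algebra_stratification} furnishes reducibility.

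For part~(2), irreducibility is already granted by part~(1); it remains to certify that $G_s$ is infinite, for which I apply Corollary~\ref{cor:spectral_check} to $h=T_s$. The eigenvalues $e^{2\pi i k/s}$, $k=0,\ldots,d-1$, occupy equally spaced angles $k/s$ on the circle $\mathbb{R}/\mathbb{Z}$. Since $2\arcsin(1/4)<\pi/2$, the hypothesis $s>\pi(d-1)/(2\arcsin(1/4))$ already implies $s>2(d-1)$, which makes the complementary arc $1-(d-1)/s$ the unique largest gap; hence the spectral span is $\ell(T_s)=(d-1)/s$. Substituting into the bound yields $\pi\ell(T_s)<2\arcsin(1/4)$, and $s\ge 2$ keeps $T_s$ non-scalar, so Corollary~\ref{cor:spectral_check} certifies that $G_s$ is infinite. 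Theorem~\ref{thm:density_criterion} then concludes density of $G_s$ in $\SU(d)$.

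The most delicate step is the Fourier computation in part~(1): one must disentangle two independent sources of vanishing, namely the ``carry factor'' $e^{-2\pi i d/s}-1$ (depending only on $s$) and the ring-theoretic factor $1-\omega^{-ub}$ (depending on $d$ and the chosen index $b$), and track them uniformly across every singular valuation $k\in\{1,\ldots,m-1\}$. A secondary subtlety is handling the converse of part~(1) in the $p=2$ regime, where one must confirm that $s\mid d$ genuinely places $T_s$ inside $\mathcal{C}_d$ (rather than in some larger finite Clifford extension) so that the equality $G_s=\mathcal{C}_d$ and the ensuing reducibility are justified.
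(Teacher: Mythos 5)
Your proof follows the same two-pronged strategy as the paper: the identical Fourier computation of $\widehat{\delta_{\zeta,u}}$ (your factored form $(e^{-2\pi id/s}-1)\,\omega^{ub}(1-\omega^{-ub})/(1-\omega^{-b})$ is algebraically equivalent to the paper's $(1-\omega^{nu})(1-\lambda_s)/(1-\omega^{-n})$) to verify orbit-mixing via Theorem~\ref{thm:irreducibility_magic}, and the identical spectral-span argument $\ell(T_s)=(d-1)/s$ plugged into Corollary~\ref{cor:spectral_check}. You add two small details the paper elides---the explicit converse of part~(1) (that $s\mid d$ gives $T_s=Z^{d/s}\in\mathcal{C}_d$ so $G_s=\mathcal{C}_d$ is reducible by Theorem~\ref{thm:algebra_stratification}), and the observation that the hypothesis forces $s>2(d-1)$ so that the wrap-around gap is largest and the spectral span formula is legitimate---both of which are correct and welcome. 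One remark: the ``secondary subtlety'' you flag for $p=2$ is a non-issue; whenever $s\mid d$ the gate $T_s$ is literally a power of the Pauli $Z$, which lies in $\mathcal{C}_d$ for every $d$ without any ambiguity about larger Clifford-type extensions.
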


\begin{proof}
\textbf{1. Proof of Irreducibility.}
We verify that $T_s$ satisfies Definition~\ref{def:magic_gate} to apply Theorem~\ref{thm:irreducibility_magic}.
The phase function is $\zeta(k) = e^{2\pi i k / s}$, and the associated 1-coboundary $\delta_\zeta(u, k) = \zeta(u+k)/\zeta(u)\zeta(k)$ has the form:
\begin{equation}\label{eq:step-function}
    \delta_\zeta(u, k) =
    \begin{cases}
        1 & \text{if } u+k < d, \\
        \lambda_s & \text{if } u+k \ge d,
    \end{cases}
\end{equation}
where $\lambda_s = \exp(-2\pi i d / s)$.

We must show that for any proper divisor $u$ of $d$, the DFT of the function $f_u(k) = \delta_\zeta(u, k)$ has full support on the units of $\mathbb{Z}_d$.
The DFT at frequency $n$ is given by:
\[
    \sqrt{d} \hat{f}_u(n) = \sum_{k=0}^{d-1} f_u(k) \omega^{-nk} = \sum_{k=0}^{d-u-1} \omega^{-nk} + \lambda_s \sum_{k=d-u}^{d-1} \omega^{-nk},
\]
where $\omega = e^{2\pi i / d}$.
For any unit $n \in \mathbb{Z}_d^\times$, we have $\omega^{-n} \neq 1$. Summing the geometric series for the two intervals yields:
\begin{align*}
    \sqrt{d} \hat{f}_u(n) &= \frac{1 - \omega^{-n(d-u)}}{1 - \omega^{-n}} + \lambda_s \frac{\omega^{-n(d-u)} - \omega^{-nd}}{1 - \omega^{-n}} \\
    &= \frac{1 - \omega^{nu} + \lambda_s(\omega^{nu} - 1)}{1 - \omega^{-n}} \quad (\text{since } \omega^{-nd}=1) \\
    &= \frac{(1 - \omega^{nu})(1 - \lambda_s)}{1 - \omega^{-n}}.
\end{align*}
We examine the factors in the numerator to determine if the spectrum vanishes:
\begin{itemize}
    \item The term $(1 - \omega^{nu})$ is non-zero because $n$ is a unit (coprime to $d$) and $u$ is a proper divisor ($0 < u < d$), so $nu$ is not a multiple of $d$.
    \item The term $(1 - \lambda_s)$ vanishes if and only if $\lambda_s = 1$. Recalling that $\lambda_s = e^{-2\pi i d / s}$, this occurs strictly when $d/s$ is an integer.
\end{itemize}
Thus, $\hat{f}_u(n) \neq 0$ for all units $n$ if and only if $s \nmid d$. This confirms that the condition $s \nmid d$ is necessary and sufficient for irreducibility.

\textbf{2. Proof of Universality.}
According to Theorem~\ref{thm:density_criterion}, universality requires both irreducibility (established in Item 1) and infiniteness.
To prove infiniteness, we apply Theorem~\ref{thm:infinite_criterion}. The eigenvalues of $T_s$ are $\{1, e^{2\pi i/s}, \dots, e^{2\pi i (d-1)/s}\}$. The spectral span is exactly:
\begin{equation}
    \ell(T_s) = \frac{d-1}{s}.
\end{equation}
From Corollary~\ref{cor:spectral_check}, the group is guaranteed to be infinite if the projective distance associated with this span is less than $1/2$. The precise condition is:
\begin{equation}
    \pi \ell(T_s) < 2\arcsin(1/4).
\end{equation}
Substituting the expression for $\ell(T_s)$ and solving for $s$:
\[
    \frac{\pi(d-1)}{s} < 2\arcsin(1/4) \implies s > \frac{\pi(d-1)}{2\arcsin(1/4)}.
\]
If this bound holds, the group is infinite. Combined with the irreducibility from Item 1, the group is dense in $\SU(d)$.
\end{proof}


\subsection{Permutations as Non-Clifford Resources}\label{subsec:permutation_resources}

While diagonal gates of the $T_s$ family provide a method to achieve universality, permutation gates offer an alternative class of non-Clifford resources. In this subsection, we analyze when non-affine permutations can supplement the Clifford group to generate a dense subgroup of $\SU(d)$.

\subsubsection{Induced Diagonal Operators}

Conjugation by a permutation matrix transforms the Pauli-$Z$ operator into a diagonal operator that encodes the permutation structure. Explicitly, the adjoint action yields the diagonal operator $D_\pi$ given by:
\begin{equation}
    P_\pi Z P_\pi^\dagger = D_\pi, \quad \text{where} \quad (D_\pi)_{yy} = \omega^{\pi^{-1}(y)}.
\end{equation}

Since $D_\pi$ and $Z$ are related by unitary conjugation, they share the same spectrum: the complete set of $d$-th roots of unity. Consequently, $D_\pi$ has maximal spectral span $\ell(D_\pi) = (d-1)/d$ and cannot directly satisfy the infiniteness criterion of Theorem~\ref{thm:infinite_criterion}.

To extract a useful resource, we consider the \emph{difference operator} that isolates the deviation from the identity permutation.

\begin{definition}\label{def:difference_operator}
For a permutation $\pi \in S_d$, the \emph{difference operator} is:
\begin{equation}
    h_\pi = D_\pi Z^\dagger = P_\pi Z P_\pi^\dagger Z^\dagger,
\end{equation}
with diagonal entries $(h_\pi)_{yy} = \omega^{\pi^{-1}(y) - y}$.
\end{definition}

The difference operator measures how much $\pi$ displaces each index. For an affine permutation $\pi(x) = ax + b$, we have $\pi^{-1}(y) = a^{-1}(y-b)$, so the exponent becomes:
\begin{equation*}
    \pi^{-1}(y) - y = (a^{-1}-1)y - a^{-1}b,
\end{equation*}
which is linear in $y$. Thus, $h_\pi \in H(d)$ for affine permutations. For non-affine permutations, $h_\pi$ exhibits a non-linear phase structure.

\subsubsection{Transpositions and the Infiniteness Criterion}

We analyze the simplest non-affine permutations: transpositions.

\begin{proposition}\label{prop:transposition_difference}
Let $\tau = (j \; k)$ be the transposition swapping indices $j$ and $k$ with $j < k$. The difference operator is:
\begin{equation}
    h_\tau = \mathrm{diag}(1, \ldots, 1, \omega^{k-j}, 1, \ldots, 1, \omega^{-(k-j)}, 1, \ldots, 1),
\end{equation}
where the entries $\omega^{k-j}$ and $\omega^{-(k-j)}$ occur at positions $j$ and $k$, respectively. The spectral span is:
\begin{equation}
    \ell(h_\tau) = \frac{2(k-j)}{d}.
\end{equation}
\end{proposition}

\begin{proof}
Since transpositions are self-inverse, $\tau^{-1} = \tau$. The diagonal entries of $h_\tau$ are $(h_\tau)_{yy} = \omega^{\tau(y) - y}$:
\begin{itemize}
    \item For $y = j$: $\tau(j) = k$, so $(h_\tau)_{jj} = \omega^{k-j}$.
    \item For $y = k$: $\tau(k) = j$, so $(h_\tau)_{kk} = \omega^{j-k} = \omega^{-(k-j)}$.
    \item For $y \notin \{j, k\}$: $\tau(y) = y$, so $(h_\tau)_{yy} = 1$.
\end{itemize}

The eigenvalues are $\{1, \omega^{k-j}, \omega^{-(k-j)}\}$ with multiplicities $d-2$, $1$, and $1$, respectively. These span an arc from angle $-2\pi(k-j)/d$ to $+2\pi(k-j)/d$, giving spectral span $\ell(h_\tau) = 2(k-j)/d$.
\end{proof}

\begin{corollary}\label{cor:transposition_infiniteness}
The transposition $\tau = (j \; k)$ generates an infinite group $\langle \mathcal{C}_d, P_\tau \rangle$ provided:
\begin{equation}
    k - j < \frac{d \cdot \arcsin(1/4)}{\pi} \approx 0.0804 \cdot d.
\end{equation}
In particular, the adjacent transposition $\tau = (0 \; 1)$ satisfies this bound for all $d \ge 13$.
\end{corollary}

\begin{proof}
By Lemma~\ref{lem:spectral_arc}, the projective distance is:
\begin{equation*}
    \mathrm{dist}_{\mathrm{proj}}(h_\tau, I) = 2\sin\left(\frac{\pi \ell(h_\tau)}{2}\right) = 2\sin\left(\frac{\pi(k-j)}{d}\right).
\end{equation*}
The infiniteness criterion (Theorem~\ref{thm:infinite_criterion}) requires this distance to be less than $1/2$:
\begin{equation*}
    2\sin\left(\frac{\pi(k-j)}{d}\right) < \frac{1}{2} \implies \sin\left(\frac{\pi(k-j)}{d}\right) < \frac{1}{4}.
\end{equation*}
For the adjacent transposition with $k-j=1$, this becomes $\sin(\pi/d) < 1/4$, which holds when $d > \pi/\arcsin(1/4) \approx 12.43$.
\end{proof}

\subsubsection{Irreducibility Analysis for Transpositions}

We now verify that the difference operator $h_\tau$ induced by the adjacent transposition satisfies the orbit-mixing condition of Definition~\ref{def:magic_gate}.

Let $d = p^m$ with $m \ge 2$, and let $\tau = (0\;1)$. The phase function $\zeta: \mathbb{Z}_d \to \U(1)$ of the difference operator $h_\tau$ is given by $\zeta(0) = \omega$, $\zeta(1) = \omega^{-1}$, and $\zeta(y) = 1$ for $y \ge 2$.
For any proper divisor $u = p^k$ with $1 \le k \le m-1$, a direct calculation of the differential row $f_u(y) = \zeta(u+y)/\zeta(y)$ yields:
\begin{equation}\label{eq:differential_row}
    f_u(y) = \begin{cases}
        \omega^{-1} & y \in \{0, \, d-u+1\}, \\
        \omega & y \in \{1, \, d-u\}, \\
        1 & \text{otherwise},
    \end{cases}
\end{equation}
where indices are taken modulo $d$. Using this explicit form, we determine the spectral properties of the gate.
\begin{proposition}\label{prop:transposition_fourier}
For $n \not\equiv 0 \pmod{d}$, the Discrete Fourier Transform of the differential row $f_u$ is:
\begin{align}\label{eq:fourier_formula}
    \sqrt{d} \, \hat{f}_u(n) &= (\omega^{-1} - 1)(1 + \omega^{n(u-1)}) \nonumber \\
    &\quad + (\omega - 1)(\omega^{-n} + \omega^{nu}).
\end{align}
\end{proposition}

\begin{proof}
The Fourier transform is $\sqrt{d} \, \hat{f}_u(n) = \sum_{y=0}^{d-1} f_u(y) \, \omega^{-ny}$. We decompose this sum by separating the deviation from the constant function:
\begin{equation*}
    \sqrt{d} \, \hat{f}_u(n) = \sum_{y=0}^{d-1} \omega^{-ny} + \sum_{y \in S} (f_u(y) - 1) \omega^{-ny},
\end{equation*}
where $S = \{0, 1, d-u, d-u+1\}$ is the support of the deviation.

For $n \not\equiv 0 \pmod{d}$, the first sum vanishes: $\sum_{y=0}^{d-1} \omega^{-ny} = 0$.

The second sum evaluates to:
\begin{align*}
    &(\omega^{-1} - 1) \cdot \omega^{0} + (\omega - 1) \cdot \omega^{-n} \\
    &+ (\omega - 1) \cdot \omega^{-n(d-u)} + (\omega^{-1} - 1) \cdot \omega^{-n(d-u+1)}.
\end{align*}

Using $\omega^{d} = 1$, we simplify:
\begin{align*}
    \omega^{-n(d-u)} &= \omega^{-nd + nu} = \omega^{nu}, \\
    \omega^{-n(d-u+1)} &= \omega^{-nd + nu - n} = \omega^{n(u-1)}.
\end{align*}

Substituting and grouping terms yields Eq.~\eqref{eq:fourier_formula}.
\end{proof}

\begin{proposition}\label{prop:transposition_orbit_mixing}
Let $d = p^m$ with $m \ge 2$. The difference operator $h_\tau$ for $\tau = (0\;1)$ is an orbit-mixing gate in the sense of Definition~\ref{def:magic_gate}.
\end{proposition}

\begin{proof}
We must show that for each proper divisor $u = p^k$ ($1 \le k \le m-1$), there exists at least one unit $n \in \mathbb{Z}_d^\times$ such that $\hat{f}_u(n) \neq 0$. We prove that $n = 1$ always works.

Suppose, for contradiction, that $\hat{f}_u(1) = 0$. From Eq.~\eqref{eq:fourier_formula} with $n = 1$:
\begin{equation*}
    (\omega^{-1} - 1)(1 + \omega^{u-1}) + (\omega - 1)(\omega^{-1} + \omega^{u}) = 0.
\end{equation*}

Using the identity $\omega^{-1} - 1 = -\omega^{-1}(\omega - 1)$ and dividing by $(\omega - 1) \neq 0$:
\begin{equation*}
    -\omega^{-1}(1 + \omega^{u-1}) + (\omega^{-1} + \omega^{u}) = 0.
\end{equation*}

Expanding and simplifying:
\begin{equation*}
    -\omega^{-1} - \omega^{u-2} + \omega^{-1} + \omega^{u} = 0,
\end{equation*}
which reduces to $\omega^{u} = \omega^{u-2}$, i.e., $\omega^{2} = 1$.

Since $\omega = e^{2\pi i/d}$, this requires $d \mid 2$. For any prime power $d = p^m \ge 4$ with $m \ge 2$, this condition fails. Therefore, $\hat{f}_u(1) \neq 0$ for every proper divisor $u$, confirming the orbit-mixing condition.
\end{proof}

\subsubsection{Universality Theorem for Transpositions}

Combining the results on infiniteness and irreducibility, we obtain:

\begin{theorem}\label{thm:transposition_universality}
Let $d = p^m$ with $m \ge 2$. The group $G = \langle \mathcal{C}_d, P_\tau \rangle$, where $\tau = (0\;1)$ is the adjacent transposition, is dense in $\SU(d)$.
\end{theorem}

\begin{proof}
By Theorem~\ref{thm:density_criterion}, it suffices to verify irreducibility of the adjoint representation and infiniteness $G$.

\textbf{Irreducibility:} The difference operator $h_\tau = P_\tau Z P_\tau^\dagger Z^\dagger$ belongs to $G$. By Proposition~\ref{prop:transposition_orbit_mixing}, $h_\tau$ is an orbit-mixing gate. Theorem~\ref{thm:irreducibility_magic} then implies that $G$ acts irreducibly on $\mathfrak{sl}(d, \mathbb{C})$.

\textbf{Infiniteness:} For $d \ge 13$, Corollary~\ref{cor:transposition_infiniteness} establishes that the projective distance satisfies $\mathrm{dist}_{\mathrm{proj}}(h_\tau, I) < 1/2$, and Theorem~\ref{thm:infinite_criterion} implies that $G$ is infinite.

For the remaining prime-power cases $d \in \{4, 8, 9\}$, infiniteness has been verified by explicit computation using the computer algebra system Magma~\cite{Magma}. The following code confirms that $\langle \mathcal{C}_d, P_\tau \rangle$ is infinite in each case:

\begin{lstlisting}
for d in [4, 8, 9] do
  ord := LCM(d, 2*d);
  F := CyclotomicField(ord);
  z := F.1; omega := z^(ord div d);
  zeta := z^(ord div (2*d));
  H := (1/F!Isqrt(d)) * 
       Matrix(F,d,d,[omega^(j*k): j,k in [0..d-1]]);
  P := DiagonalMatrix(F, 
       [zeta^(j*(j + d mod 2)): j in [0..d-1]]);
  X := Matrix(F,d,d,[<i,(i mod d)+1,1>: i in [1..d]]);
  tau := IdentityMatrix(F, d);
  tau[1,1]:=0; tau[1,2]:=1; tau[2,1]:=1; tau[2,2]:=0;
  G := sub<GL(d, F) | H, P, X, tau>;
  printf "d=%o: Infinite=%o\n", d, not IsFinite(G);
end for;

// Output: d=4: Infinite=true
//         d=8: Infinite=true
//         d=9: Infinite=true
\end{lstlisting}

The two conditions of Theorem~\ref{thm:density_criterion} are satisfied for all prime powers $d = p^m$ with $m \ge 2$, completing the proof.
\end{proof}

\section{Case III: Composite Dimensions}\label{sec:composite}

We now turn to the final case: composite dimensions $d = d_1 d_2 \cdots d_n$, where the factors are pairwise coprimes. Unlike prime or prime-power dimensions, which in our analysis require \emph{either} diagonal non-Clifford gates \emph{or} simple permutations, the coprime composite case is driven by couplings between internal factors.

In this case, the relevant additional gates are generalized intra-qudit CNOTs that couple different coprime factors. These gates lie outside the local Clifford product, and the arithmetic interplay between the coprime dimensions allows them to generate the diagonal phase resources required for universality.

\subsection{Irreducibility via Tensor Mixing}

To establish universality in composite dimensions $d = kl$, we first determine the conditions under which a group generated by local and entangling gates acts irreducibly on the global Lie algebra. Since a necessary condition for the density of a subgroup in $\SU(d)$ is the irreducibility of its action on the complexified Lie algebra $\mathfrak{sl}(d, \mathbb{C})$, as discussed in Sec.~\ref{sec:preliminaries}, we frame our analysis in $\mathfrak{sl}(d, \mathbb{C})$.

We begin by decomposing the global algebra under the action of the local subgroup $\SU(k) \otimes \SU(l)$.

\begin{proposition}\label{prop:sl_decomposition}
Under the adjoint action of the product group $\SU(k) \otimes \SU(l)$, the Lie algebra $\mathfrak{sl}(kl, \mathbb{C})$ decomposes into three orthogonal irreducible representations:
\begin{equation}\label{eq:complex_decomp}
\begin{split}
    \mathfrak{sl}(kl, \mathbb{C}) \cong & \underbrace{(\mathfrak{sl}(k, \mathbb{C}) \otimes I_l)}_{\mathfrak{p}_1} \oplus \underbrace{(I_k \otimes \mathfrak{sl}(l, \mathbb{C}))}_{\mathfrak{p}_2} \\
    & \oplus \underbrace{(\mathfrak{sl}(k, \mathbb{C}) \otimes \mathfrak{sl}(l, \mathbb{C}))}_{\mathfrak{m}}.
\end{split}
\end{equation}
Here, $\mathfrak{h} = \mathfrak{p}_1 \oplus \mathfrak{p}_2$ represents the subalgebra of local operations.
\end{proposition}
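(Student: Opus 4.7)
The plan is to derive the decomposition purely algebraically from the tensor factorization of matrix algebras, and then check three things separately: stability under the local group action, mutual orthogonality with respect to the Hilbert--Schmidt inner product, and irreducibility of each summand.

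First, I would use the canonical algebra isomorphism $M_{kl}(\C) \cong M_k(\C) \otimes M_l(\C)$ given by the Kronecker product, together with the splitting $M_n(\C) = \mathfrak{sl}(n,\C) \oplus \C I_n$ on each factor. Expanding the tensor product yields
\begin{equation*}
M_{kl}(\C) = \bigl(\mathfrak{sl}(k,\C) \otimes \mathfrak{sl}(l,\C)\bigr) \oplus \bigl(\mathfrak{sl}(k,\C) \otimes I_l\bigr) \oplus \bigl(I_k \otimes \mathfrak{sl}(l,\C)\bigr) \oplus \C(I_k \otimes I_l).
\end{equation*}
Since $\tr(A \otimes B) = \tr(A)\tr(B)$, imposing $\tr = 0$ to carve out $\mathfrak{sl}(kl,\C)$ eliminates exactly the scalar summand $\C(I_k \otimes I_l)$, which leaves precisely the three pieces $\mathfrak{p}_1 \oplus \mathfrak{p}_2 \oplus \mathfrak{m}$ in the statement.

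Next I would verify invariance and orthogonality. For $(g_1,g_2) \in \SU(k) \times \SU(l)$ and a simple tensor $A \otimes B$ one has
\begin{equation*}
\Ad_{g_1 \otimes g_2}(A \otimes B) = (g_1 A g_1^\dagger) \otimes (g_2 B g_2^\dagger),
\end{equation*}
and unitary conjugation preserves tracelessness factor-by-factor, so each of $\mathfrak{p}_1$, $\mathfrak{p}_2$, and $\mathfrak{m}$ is stable. For orthogonality I would use $\langle A_1 \otimes B_1,\, A_2 \otimes B_2 \rangle_{\mathrm{HS}} = \tr(A_1^\dagger A_2)\,\tr(B_1^\dagger B_2)$: in any pairing between two distinct summands at least one of the two factors pairs a traceless operator against a scalar, forcing a factor $\tr(X)$ with $X \in \mathfrak{sl}$, which vanishes.

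Finally, for irreducibility: on $\mathfrak{p}_1$ the group $\SU(l)$ acts trivially while $\SU(k)$ acts by its standard adjoint representation on $\mathfrak{sl}(k,\C)$, which is irreducible for $k \ge 2$ because $\mathfrak{sl}(k,\C)$ is simple; the argument for $\mathfrak{p}_2$ is identical. The delicate piece is $\mathfrak{m}$, which carries the external tensor product of the two adjoint representations. I would invoke the standard fact that the external tensor product of irreducible representations of two compact groups is irreducible as a representation of the product group. This can be justified by character orthogonality (the product character inherits unit $L^2$-norm via Peter--Weyl), by a Schur's-lemma argument on the commutant of the product action, or at the level of complexified Lie algebras by highest-weight theory (the outer tensor product realizes the irreducible module of highest weight $(\theta_k,\theta_l)$, where $\theta_k,\theta_l$ are the highest roots). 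The main obstacle is precisely this last step: irreducibility of the mixed summand is the only point at which one must leave direct matrix bookkeeping and invoke a structural representation-theoretic result, while the remaining verifications reduce to trace identities.
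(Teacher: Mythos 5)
Your proposal is correct and follows essentially the same route as the paper: tensor-factor $M_{kl}(\mathbb{C}) \cong M_k(\mathbb{C}) \otimes M_l(\mathbb{C})$, split each factor as $\mathbb{C}I_n \oplus \mathfrak{sl}(n,\mathbb{C})$, discard the scalar summand by the trace condition, and conclude irreducibility of each piece from irreducibility of the adjoint representation and of outer tensor products of irreducibles. The only difference is that you spell out the Hilbert--Schmidt orthogonality and invariance checks and offer several justifications for the outer-tensor-product fact, which the paper simply cites as standard.
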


\begin{proof}
Consider the isomorphism $M_{kl}(\mathbb{C}) \cong M_k(\mathbb{C}) \otimes M_l(\mathbb{C})$. By applying the decomposition of the general linear algebra into scalar and traceless parts, $M_n(\mathbb{C}) = \mathbb{C}I_n \oplus \mathfrak{sl}(n, \mathbb{C})$, we expand the tensor product as follows:
\begin{align*}
    M_{kl}(\mathbb{C}) &\cong \big(\mathbb{C}I_k \oplus \mathfrak{sl}(k, \mathbb{C})\big) \otimes \big(\mathbb{C}I_l \oplus \mathfrak{sl}(l, \mathbb{C})\big) \\
    &\cong (\mathbb{C}I_k \otimes \mathbb{C}I_l) \oplus (\mathfrak{sl}(k, \mathbb{C}) \otimes I_l) \\
    &\quad \oplus (I_k \otimes \mathfrak{sl}(l, \mathbb{C})) \oplus (\mathfrak{sl}(k, \mathbb{C}) \otimes \mathfrak{sl}(l, \mathbb{C})).
\end{align*}
The first term simplifies to $\mathbb{C}(I_k \otimes I_l) = \mathbb{C}I_{kl}$, which corresponds to the global scalar matrices. Consequently, the direct sum of the remaining three terms constitutes the trace-zero subalgebra $\mathfrak{sl}(kl, \mathbb{C})$.

Regarding the representation structure: recall that the adjoint action of $\SU(n)$ on $\mathfrak{sl}(n, \mathbb{C})$ is irreducible for $n \ge 2$. Since the outer tensor product of irreducible representations is itself irreducible, the three summands $\mathfrak{p}_1, \mathfrak{p}_2$, and $\mathfrak{m}$ form distinct irreducible modules under the local group action. Furthermore, these subspaces are mutually orthogonal with respect to the Hilbert-Schmidt inner product $\langle A, B \rangle = \tr(A^\dagger B)$, completing the decomposition.
\end{proof}

To analyze the irreducibility of the extended group $G = \langle \SU(k) \otimes \SU(l), V \rangle$, where $V \in \SU(kl)$ is an intra-qudit gate, we must determine whether the decomposition in Eq.~\eqref{eq:complex_decomp} remains invariant under the full group.
Observe that the subalgebra of local operations, $\mathfrak{h} = \mathfrak{p}_1 \oplus \mathfrak{p}_2$, is invariant under the local group by definition. If the gate $V$ also maps $\mathfrak{h}$ to itself (i.e., if $V$ normalizes the local algebra), then $\mathfrak{h}$ becomes a proper $G$-invariant subspace, preventing the global action from being irreducible.
Consequently, identifying the subgroup of $\SU(kl)$ that preserves this local structure is a necessary prerequisite for establishing universality.

\begin{lemma}\label{lem:normalizer_general}
Let $H = \SU(k) \otimes \SU(l)$ with integers $k, l \geq 2$. The normalizer of $H$ in the global group $\SU(kl)$ is:
\begin{itemize}
    \item $H$ itself, if $k \neq l$.
    \item The group generated by $H$ and the SWAP operator $S$, if $k = l$, where $S(u \otimes v) = v \otimes u$.
\end{itemize}
\end{lemma}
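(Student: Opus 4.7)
The plan is to exploit the $H$-isotypic decomposition of $\mathfrak{sl}(kl,\mathbb{C})$ from Proposition~\ref{prop:sl_decomposition}: under the adjoint action, $\mathfrak{sl}(kl,\mathbb{C}) = \mathfrak{p}_1 \oplus \mathfrak{p}_2 \oplus \mathfrak{m}$ splits into three irreducible, pairwise inequivalent $H$-modules. Any normalizing element $g \in N_{\SU(kl)}(H)$ induces via $\Ad_g$ a linear map on $\mathfrak{sl}(kl,\mathbb{C})$ that intertwines the $H$-action with its conjugate by the automorphism of $H$ induced by $g$; hence $\Ad_g$ must permute the three summands, and the admissible permutations are constrained by their dimensions $k^2-1$, $l^2-1$, and $(k^2-1)(l^2-1)$. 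When $k \neq l$ these are all distinct, forcing $\Ad_g$ to stabilize each component setwise; when $k = l$ the first two coincide, and $\Ad_g$ may either stabilize them or exchange $\mathfrak{p}_1 \leftrightarrow \mathfrak{p}_2$.

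The $(\supseteq)$ direction is immediate: $H$ normalizes itself trivially, and when $k = l$ the SWAP satisfies $S(U_1 \otimes U_2)S^{-1} = U_2 \otimes U_1 \in H$, with $\Ad_S$ implementing precisely the swap $\mathfrak{p}_1 \leftrightarrow \mathfrak{p}_2$. For the converse, if $k = l$ and $\Ad_g$ exchanges $\mathfrak{p}_1$ and $\mathfrak{p}_2$, the replacement $g \mapsto Sg$ reduces to the component-preserving case. One may therefore assume, in both regimes, that $\Ad_g$ stabilizes $\mathfrak{p}_1$, so that $g$ normalizes the Lie subalgebra $\mathfrak{sl}(k,\mathbb{C}) \otimes I_l$.

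It remains to prove that any $g \in \SU(kl)$ normalizing $\mathfrak{sl}(k,\mathbb{C}) \otimes I_l$ lies in $H$. Adjoining scalars, $g$ normalizes the full matrix algebra $M_k \otimes I_l \subset M_{kl}$. The Skolem--Noether theorem then furnishes $U \in \GL(k,\mathbb{C})$ with $\Ad_g(A \otimes I) = (U A U^{-1}) \otimes I$ for all $A \in M_k$, so $g' := (U^{-1} \otimes I)g$ commutes with $M_k \otimes I$. The double-centralizer theorem identifies this commutant with $I \otimes M_l$, forcing $g' = I \otimes V$ and hence $g = U \otimes V$ for some $U \in \GL(k,\mathbb{C})$, $V \in \GL(l,\mathbb{C})$. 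Unitarity of $g$ makes $U$ and $V$ unitary up to a common positive scalar, and the constraint $\det(g) = 1$ together with a phase rebalancing places $g$ inside $H = \SU(k) \otimes \SU(l)$.

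The main obstacle I foresee is the initial reduction to component preservation: one must verify that the only permutations of the three isotypic summands compatible with inner conjugation in $\SU(kl)$ are the identity and, when $k=l$, the $\mathfrak{p}_1 \leftrightarrow \mathfrak{p}_2$ swap. In particular, one must rule out exotic symmetries such as complex conjugation on a single $\SU$-factor; this is ultimately excluded because Skolem--Noether delivers an honest (linear, not antilinear) inner twist of $M_k$. A minor ancillary subtlety is the final phase absorption: the identity $\det(U)^l \det(V)^k = 1$ redistributes cleanly into $\SU(k) \times \SU(l)$ factors when $\gcd(k,l) = 1$, and otherwise only up to a central scalar of $\SU(kl)$ that is harmless for the statement as written.
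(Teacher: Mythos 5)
Your proof is correct and takes a genuinely different route from the paper's for the decisive step. Both arguments begin from the isotypic decomposition $\mathfrak{sl}(kl,\mathbb{C})=\mathfrak{p}_1\oplus\mathfrak{p}_2\oplus\mathfrak{m}$ and reduce, via dimension counting (and the SWAP when $k=l$), to the case where the normalizer element stabilizes $\mathfrak{p}_1$. After that the paper works at the Lie-algebra level: decompose any normalizing $X=X_\mathfrak{h}+X_\mathfrak{m}$, observe $[X_\mathfrak{m},\mathfrak{h}]\subseteq\mathfrak{h}\cap\mathfrak{m}=0$, and kill $X_\mathfrak{m}$ by Schur applied to the irreducible $H$-module $\mathfrak{m}$; it then handles the group level by asserting that no outer automorphism of $\SU(n)$ can be realized by unitary conjugation because conjugation preserves spectrum. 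You instead pass to the associative level: adjoin scalars so that $g$ normalizes $M_k\otimes I_l$, invoke Skolem--Noether to realize the induced automorphism as $\Ad_U$, and use the double-centralizer theorem to force $g=U\otimes V$, then rebalance phases. Your route is more explicit and self-contained---it produces the tensor factorization directly rather than ruling out hypothetical outer twists, and thereby sidesteps the paper's somewhat terse spectral exclusion argument. Your closing remark about the central-scalar obstruction is a genuine observation, not a cosmetic one: when $\gcd(k,l)>1$ the center $Z(\SU(kl))\cong\mathbb{Z}_{kl}$ is not contained in $H$ (only $\mathbb{Z}_{\mathrm{lcm}(k,l)}$ of it is), yet it trivially normalizes $H$, so the lemma as literally stated requires $\gcd(k,l)=1$. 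The paper's own proof has the same implicit assumption (it slides from ``$V$ induces an inner automorphism'' to ``$V\in H$'' without addressing the centralizer), and it is harmless in context since the lemma is only applied to coprime factors---but you are right to flag it, and arguably slightly understate it by calling the scalar ``harmless for the statement as written.''
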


\begin{proof}
Let $\mathfrak{h} = \mathfrak{p}_1 \oplus \mathfrak{p}_2$ be the Lie algebra of $H$. We determine the normalizer in the algebra $\mathfrak{g} = \mathfrak{sl}(kl, \mathbb{C})$. Let $X \in \mathfrak{g}$ be an element such that $[X, \mathfrak{h}] \subseteq \mathfrak{h}$. Decomposing $X = X_{\mathfrak{h}} + X_{\mathfrak{m}}$ according to Eq.~\eqref{eq:complex_decomp}, the condition implies $[X_{\mathfrak{m}}, \mathfrak{h}] \subseteq \mathfrak{h}$.
However, $\mathfrak{m}$ is an invariant subspace under $\mathfrak{h}$ (it is a representation of $H$), so $[X_{\mathfrak{m}}, \mathfrak{h}] \subseteq \mathfrak{m}$.
Since $\mathfrak{h} \cap \mathfrak{m} = \{0\}$, the bracket must vanish: $[X_{\mathfrak{m}}, \mathfrak{h}] = 0$.
Since, $\mathfrak{m} \cong \mathfrak{sl}(k, \mathbb{C}) \otimes \mathfrak{sl}(l, \mathbb{C})$ is a non-trivial irreducible representation of $\mathfrak{h}$; by Schur's Lemma, the only element that commutes with the entire action is zero. Thus, $X_{\mathfrak{m}} = 0$, implying the Lie algebra normalizer is exactly $\mathfrak{h}$.

At the group level, any normalizer element $V \notin H$ must induce an outer automorphism of the ideal structure of $\mathfrak{h} = \mathfrak{su}(k) \oplus \mathfrak{su}(l)$.
\begin{itemize}
    \item If $k \neq l$, the ideals have different dimensions and cannot be permuted. Since $\SU(n)$ has no outer automorphisms realized by unitary conjugation (which preserves spectrum), $V$ must be in $H$.
    \item If $k = l$, the ideals are isomorphic, and the SWAP operator $S$ permutes them. Thus, the normalizer is the extension of $H$ by $S$.
\end{itemize}
\end{proof}

We now state the main sufficient condition for irreducibility in composite systems. 

\begin{theorem}\label{thm:tensor_mixing}
Let $k, l \geq 2$. Let $G_k \subseteq \SU(k)$ and $G_l \subseteq \SU(l)$ be subgroups acting irreducibly on $\mathfrak{sl}(k, \mathbb{C})$ and $\mathfrak{sl}(l, \mathbb{C})$, respectively. Let $V \in \SU(kl)$ be a unitary gate.
The group $G = \langle G_k \otimes I, I \otimes G_l, V \rangle$ acts irreducibly on $\mathfrak{sl}(kl, \mathbb{C})$ via the adjoint representation if and only if:
\begin{enumerate}
    \item $V \notin \SU(k) \otimes \SU(l)$, for $k \neq l$.
    \item $V \notin \langle \SU(k)^{\otimes 2}, S \rangle$, for $k = l$.
\end{enumerate}
\end{theorem}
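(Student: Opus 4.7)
The plan is to reduce irreducibility of the $G$-action to a finite combinatorial check over the canonical decomposition of $\mathfrak{sl}(kl,\mathbb{C})$. The necessity direction ($\Rightarrow$) is essentially a restatement of Lemma~\ref{lem:normalizer_general}: the excluded set in each case is precisely the normalizer $N$ of $H=\SU(k)\otimes\SU(l)$ in $\SU(kl)$, so if $V\in N$ then $V$ preserves $\mathfrak{h}=\mathrm{Lie}(H)=\mathfrak{p}_1\oplus\mathfrak{p}_2$. Since $\mathfrak{h}$ is already invariant under $G_k\otimes I$ and $I\otimes G_l$, it is a proper non-zero $G$-invariant subspace (proper because $k,l\geq 2$ forces $\mathfrak{m}\neq 0$), so the adjoint representation is reducible.

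For sufficiency ($\Leftarrow$), the first step is to promote the decomposition in Proposition~\ref{prop:sl_decomposition} from $\SU(k)\otimes\SU(l)$ down to $G_k\times G_l$. The summands $\mathfrak{p}_1$ and $\mathfrak{p}_2$ remain irreducible because $G_k$ and $G_l$ act irreducibly on $\mathfrak{sl}(k,\mathbb{C})$ and $\mathfrak{sl}(l,\mathbb{C})$ by hypothesis while the other factor acts trivially, and $\mathfrak{m}\cong\mathfrak{sl}(k,\mathbb{C})\otimes\mathfrak{sl}(l,\mathbb{C})$ remains irreducible because an outer tensor product of complex irreducibles is irreducible. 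The three summands are pairwise non-isomorphic as $(G_k\times G_l)$-modules because they have distinct actions on the two factors. Consequently every $(G_k\times G_l)$-invariant subspace is a direct sum of a subset of $\{\mathfrak{p}_1,\mathfrak{p}_2,\mathfrak{m}\}$. Since $\Ad_V$ is unitary for the Hilbert-Schmidt inner product, preserving a subspace $W$ is equivalent to preserving $W^\perp$, so the six proper non-zero unions reduce to three cases: $V$ preserves $\mathfrak{p}_1$, $V$ preserves $\mathfrak{p}_2$, or $V$ preserves $\mathfrak{m}$.

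The case $W=\mathfrak{m}$ is immediate: $\mathfrak{m}^\perp=\mathfrak{h}$, so preserving $\mathfrak{m}$ means preserving $\mathfrak{h}$, which by the proof of Lemma~\ref{lem:normalizer_general} forces $V\in N$, contradicting the hypothesis. The main obstacle is handling $W=\mathfrak{p}_i$, and I would attack it with a Schur-type normalizer computation. Writing $X\in\mathfrak{sl}(kl,\mathbb{C})$ in a basis $\{I_l\}\cup\text{(traceless)}$ of the second tensor factor as $X=Y_0\otimes I_l+\sum_j Y_j\otimes Z_j$ with $Z_j\in\mathfrak{sl}(l,\mathbb{C})$, and imposing $[X,A\otimes I_l]\in\mathfrak{p}_1$ for every $A\in\mathfrak{sl}(k,\mathbb{C})$, Schur's lemma forces each $Y_j$ with $j\geq 1$ to be a scalar, so the Lie-algebra normalizer of $\mathfrak{p}_1$ inside $\mathfrak{sl}(kl,\mathbb{C})$ equals $\mathfrak{h}$. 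Because $\Ad_V$ is a Lie-algebra automorphism of $\mathfrak{sl}(kl,\mathbb{C})$ sending $\mathfrak{p}_1$ to $\mathfrak{p}_1$, it must send the normalizer of $\mathfrak{p}_1$ to itself, so $\Ad_V$ preserves $\mathfrak{h}$; this again places $V\in N$ and yields the desired contradiction. The case $W=\mathfrak{p}_2$ follows by the same argument with the roles of the two factors exchanged.
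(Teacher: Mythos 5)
Your proof is correct and follows essentially the same route as the paper's: decompose $\mathfrak{sl}(kl,\mathbb{C})$ into $\mathfrak{p}_1,\mathfrak{p}_2,\mathfrak{m}$ via Proposition~\ref{prop:sl_decomposition}, use orthogonal complements to reduce to showing no single summand is $\Ad_V$-invariant, and derive a contradiction with Lemma~\ref{lem:normalizer_general}. The only minor deviation is in the $W=\mathfrak{p}_1$ case, where you compute the Lie-algebra normalizer of $\mathfrak{p}_1$ to be $\mathfrak{h}$, while the paper instead observes that $\mathfrak{p}_2$ is the centralizer of $\mathfrak{p}_1$ and that $\Ad_V$ carries centralizers to centralizers; both yield $\Ad_V(\mathfrak{h})=\mathfrak{h}$ identically, so the arguments are interchangeable.
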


\begin{proof}
If $V$ belongs to the normalizer described in Lemma~\ref{lem:normalizer_general}, then the product subalgebra $\mathfrak{h} = \mathfrak{p}_1 \oplus \mathfrak{p}_2$ remains invariant under the full group $G$. Since $\mathfrak{h} \subsetneq \mathfrak{sl}(kl, \mathbb{C})$, the action is reducible.

Now, suppose $V$ satisfies the conditions (i.e., $V$ does not normalize the product structure). Let $W \subseteq \mathfrak{sl}(kl, \mathbb{C})$ be a non-zero subspace invariant under $G$. We prove $W = \mathfrak{sl}(kl, \mathbb{C})$.

Since the local subgroup $H_0 = G_k \otimes G_l$ is contained in $G$, $W$ must be a direct sum of the irreducible $H_0$-modules defined in Proposition~\ref{prop:sl_decomposition}: $\mathfrak{p}_1, \mathfrak{p}_2, \mathfrak{m}$.
The unitary adjoint action preserves the inner product, so the orthogonal complement $W^\perp$ is also a $G$-invariant subspace and decomposes similarly.
This symmetry implies that if $W \neq \mathfrak{sl}(kl, \mathbb{C})$, then either $W$ or $W^\perp$ must be contained entirely within the product subalgebra $\mathfrak{h} = \mathfrak{p}_1 \oplus \mathfrak{p}_2$.
Let $S \subseteq \mathfrak{h}$ be such a non-zero proper invariant subspace. We analyze two cases:
\begin{itemize}
    \item If $S = \mathfrak{h}$, then $\Ad_V(\mathfrak{h}) = \mathfrak{h}$. This implies $V$ normalizes the product subalgebra, which is excluded by hypothesis.
    \item If $S = \mathfrak{p}_1$ (without loss of generality), then $V$ maps $\mathfrak{p}_1$ to itself. Note that $\mathfrak{p}_2$ corresponds exactly to the centralizer of $\mathfrak{p}_1$ in the algebra, i.e., $\{Y \in \mathfrak{sl}(kl) : [Y, \mathfrak{p}_1] = 0\} = \mathfrak{p}_2$. Since conjugation by $V$ is a Lie algebra automorphism, it maps the centralizer of a set to the centralizer of its image. Thus, $V$ must also preserve $\mathfrak{p}_2$. Consequently, $V$ preserves $\mathfrak{p}_1 \oplus \mathfrak{p}_2 = \mathfrak{h}$, leading back to the first contradiction.
\end{itemize}
We conclude that no proper subspace of the algebra can be invariant. Thus, $W$ must contain the correlation space $\mathfrak{m}$ and the local spaces, implying $W = \mathfrak{sl}(kl, \mathbb{C})$.
\end{proof}

Combining Theorem~\ref{thm:tensor_mixing} and Theorem~\ref{thm:density_criterion}, we derive sufficient conditions for universality in composite systems.

\begin{corollary}\label{cor:finite_groups}
Let $G_k \subset \SU(k)$ and $G_l \subset \SU(l)$ be subgroups acting irreducibly on $\mathfrak{sl}(k,\mathbb{C})$ and $\mathfrak{sl}(l,\mathbb{C})$, respectively. Let $V \in \SU(kl)$. The group $G = \langle G_k \otimes I, I \otimes G_l, V \rangle$ is universal if and only if the following two conditions hold:
\begin{enumerate}
    \item $V \notin N_{\SU(kl)}(\SU(k) \otimes \SU(l))$,
    \item The group $G$ is infinite.
\end{enumerate}
\end{corollary}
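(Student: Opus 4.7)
The plan is to derive the corollary as a direct synthesis of Theorem~\ref{thm:density_criterion} (the infinite plus irreducible density criterion) with Theorem~\ref{thm:tensor_mixing} (the tensor-mixing irreducibility criterion), using Lemma~\ref{lem:normalizer_general} to unify the two branches of the latter into the single normalizer condition $V \notin N_{\SU(kl)}(\SU(k)\otimes\SU(l))$. Since both ingredients are already proved, the argument is essentially bookkeeping: match hypotheses, invoke the theorems, and translate between the normalizer formulation of the corollary and the case-by-case formulation of Theorem~\ref{thm:tensor_mixing}.

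For the forward direction $(\Rightarrow)$, I would assume $G$ is universal, so its closure equals $\SU(kl)$. Density in a positive-dimensional compact Lie group forces $|G|=\infty$, yielding condition (2). Since $\mathfrak{sl}(kl,\mathbb{C})$ is simple for $kl \ge 2$, the adjoint action of $\SU(kl)$ on it is irreducible; any subspace invariant under $\overline{G} = \SU(kl)$ is in particular invariant under $G$, so the dense subgroup $G$ also acts irreducibly. Applying the contrapositive of Theorem~\ref{thm:tensor_mixing}, combined with the explicit description in Lemma~\ref{lem:normalizer_general}, then precludes $V$ from lying in the normalizer, establishing condition (1).

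For the reverse direction $(\Leftarrow)$, I would assume (1) and (2). By Lemma~\ref{lem:normalizer_general}, condition (1) is exactly the case-by-case exclusion demanded by Theorem~\ref{thm:tensor_mixing}: if $k\neq l$, one has $V \notin \SU(k)\otimes\SU(l)$; if $k=l$, one has $V \notin \langle \SU(k)^{\otimes 2}, S\rangle$. The irreducibility hypotheses on $G_k$ and $G_l$ are part of the corollary's data, so Theorem~\ref{thm:tensor_mixing} applies and produces irreducibility of the adjoint action of $G$ on $\mathfrak{sl}(kl,\mathbb{C})$. Together with (2), the two hypotheses of Theorem~\ref{thm:density_criterion} are met, so $G$ is dense in $\SU(kl)$, which is the definition of universality.

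There is no substantive obstacle in this argument; the only point that genuinely requires attention is confirming that the normalizer computed in Lemma~\ref{lem:normalizer_general} coincides, case by case, with the exclusion set in Theorem~\ref{thm:tensor_mixing}. Once that dictionary is in place, the equivalence follows mechanically from the cited results, and the corollary serves primarily to repackage Theorems~\ref{thm:density_criterion} and~\ref{thm:tensor_mixing} into the form used in the subsequent analysis of intra-qudit gates.
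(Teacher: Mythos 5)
Your proposal is correct and follows exactly the route the paper intends: the corollary is stated as a direct consequence of Theorem~\ref{thm:density_criterion} and Theorem~\ref{thm:tensor_mixing}, with Lemma~\ref{lem:normalizer_general} translating between the normalizer condition and the case-by-case exclusion. Your unpacking of the two directions is faithful to this synthesis and requires no correction.
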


As a particular case we obtain:

\begin{corollary}\label{cor:local_universal}
Let $k, l \geq 2$. Suppose $\mathcal{S}_k \subset \SU(k)$ and $\mathcal{S}_l \subset \SU(l)$ are sets of gates that are single-qudit universal (i.e., $\overline{\langle \mathcal{S}_k \rangle} = \SU(k)$ and $\overline{\langle \mathcal{S}_l \rangle} = \SU(l)$). 
Let $V \in \SU(kl)$ be an additional gate. 
The set $\mathcal{S} = (\mathcal{S}_k \otimes \{I\}) \cup (\{I\} \otimes \mathcal{S}_l) \cup \{V\}$ is universal for quantum computation on the composite system (i.e., $\overline{\langle \mathcal{S} \rangle} = \SU(kl)$) if and only if:
\begin{enumerate}
    \item If $k \neq l$, $V \notin \SU(k) \otimes \SU(l)$.
    \item If $k = l$, $V \notin \SU(k) \otimes \SU(k)$ and $V \notin (\SU(k) \otimes \SU(k)) \cdot S$.
\end{enumerate}
\end{corollary}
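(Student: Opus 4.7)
The plan is to obtain the corollary by directly combining Theorem~\ref{thm:tensor_mixing} (which gives irreducibility of the adjoint action on $\mathfrak{sl}(kl,\mathbb{C})$ under precisely the normalizer conditions stated) with Theorem~\ref{thm:density_criterion} (density $\iff$ irreducibility + infiniteness). Set $G_k = \langle \mathcal{S}_k \rangle$, $G_l = \langle \mathcal{S}_l \rangle$, and $G = \langle \mathcal{S} \rangle$. First, I would verify the hypotheses of Theorem~\ref{thm:tensor_mixing}: since $\mathcal{S}_k$ is single-qudit universal, $G_k$ is dense in $\SU(k)$, and because $\mathfrak{su}(k)$ is simple for $k \geq 2$, $\SU(k)$ acts irreducibly on $\mathfrak{sl}(k,\mathbb{C})$. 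Any $G_k$-invariant subspace is automatically $\overline{G_k} = \SU(k)$-invariant by continuity of the adjoint action, hence trivial, so $G_k$ acts irreducibly on $\mathfrak{sl}(k,\mathbb{C})$. The same holds for $G_l$. Moreover, by Lemma~\ref{lem:normalizer_general}, the conditions on $V$ in the corollary are exactly the conditions on $V$ appearing in Theorem~\ref{thm:tensor_mixing}.

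For the $(\Leftarrow)$ direction, assume $V$ satisfies the stated condition. Theorem~\ref{thm:tensor_mixing} immediately yields that $G$ acts irreducibly on $\mathfrak{sl}(kl,\mathbb{C})$. To invoke Theorem~\ref{thm:density_criterion}, it remains to check that $G$ is infinite. Since $G$ contains $G_k \otimes I$ and $G_k$ is dense in the uncountable compact group $\SU(k)$, $G_k$ cannot be finite (a finite subgroup is closed and so equals its closure), hence $G$ is infinite. Therefore $G$ is dense in $\SU(kl)$, i.e., $\mathcal{S}$ is universal.

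For the $(\Rightarrow)$ direction, I would argue by contrapositive. Suppose the condition on $V$ fails, so $V$ lies in $N = N_{\SU(kl)}(\SU(k) \otimes \SU(l))$ as described in Lemma~\ref{lem:normalizer_general}. The local generators $\mathcal{S}_k \otimes \{I\}$ and $\{I\} \otimes \mathcal{S}_l$ also lie in $N$, so $G \subseteq N$. Because every element of $N$ preserves the local subalgebra $\mathfrak{h} = \mathfrak{p}_1 \oplus \mathfrak{p}_2$ under the adjoint action (by definition of normalizer at the group level, translated to the Lie algebra), the proper non-zero subspace $\mathfrak{h} \subsetneq \mathfrak{sl}(kl,\mathbb{C})$ is $G$-invariant. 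By Theorem~\ref{thm:density_criterion}, a dense subgroup of $\SU(kl)$ must act irreducibly on $\mathfrak{sl}(kl,\mathbb{C})$; hence $G$ is not dense, and $\mathcal{S}$ is not universal.

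The proof is essentially a bookkeeping exercise on top of Theorem~\ref{thm:tensor_mixing}, so no serious technical obstacle is expected. The one subtle point requiring care is the transfer of irreducibility from the closures $\overline{G_k} = \SU(k)$, $\overline{G_l} = \SU(l)$ down to the dense subgroups themselves, which I would handle by the standard continuity argument above. The remaining verifications (infiniteness via density in an infinite compact group, and matching the two forms of the normalizer condition for $k \neq l$ versus $k = l$) are immediate from the preceding results.
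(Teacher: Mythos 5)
Your proof is correct and matches the paper's intended derivation: the corollary is stated in the paper simply as ``a particular case'' of Corollary~\ref{cor:finite_groups} (itself the combination of Theorem~\ref{thm:tensor_mixing} and Theorem~\ref{thm:density_criterion}), and your argument is precisely that application, with the two hypotheses of Corollary~\ref{cor:finite_groups} checked explicitly: dense subgroups inherit irreducibility of the adjoint action by continuity, and a dense subgroup of the infinite compact group $\SU(k)$ is necessarily infinite.
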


This result generalizes a theorem by Brylinski and Brylinski~\cite{Brylinski2002}, which established that adjoining \emph{any} entangling gate to the local unitary group $\SU(d) \otimes \SU(d)$ yields a universal set for two-qudit quantum computation.

\subsection{Intra-qudit Gates and Induced Diagonal Resources}

With the basis identification
\[
\ket{x} \longleftrightarrow \ket{x \bmod p}\otimes \ket{x \bmod q},
\]
the following gate couples the two coprime components of a single composite qudit. In particular, we show that conjugation by this gate produces the diagonal resources used in the universality argument below.

To implement the tensor mixing strategy, we employ the generalized intra-qudit CNOT gate. In the composite case, this gate is the basic coupling operation between the internal coprime factors.

\begin{definition}
Let $p, q \ge 2$ be integers. The intra-qudit gate $CN_{p,q} \in \SU(pq)$ is defined by its action on the computational basis $\ket{x}_1 \ket{y}_2$, where $x \in \mathbb{Z}_{p}$ and $y \in \mathbb{Z}_{q}$:
\begin{equation}
    CN_{p,q} \ket{x}\ket{y} = \ket{x} \ket{y + x \pmod{q}}.
\end{equation}
\end{definition}

This gate is a permutation matrix that is always unitary, regardless of the relationship between $p$ and $q$. However, its interaction with local Pauli operators produces a phase kickback effect that depends strictly on the arithmetic properties of the dimensions.

\begin{lemma}\label{lem:induced_magic}
Conjugating the inverse local Pauli operator $I_{p} \otimes Z_{q}^\dagger$ by the intra-qudit gate yields a separable operator containing the generic phase gate $T_{q}$:
\begin{equation}
    CN_{p,q} (I_{p} \otimes Z_{q}^\dagger) CN_{p,q}^\dagger = T_{q} \otimes Z_{q}^\dagger.
\end{equation}
\end{lemma}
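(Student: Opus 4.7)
The plan is to verify the operator identity by direct action on an arbitrary computational basis vector $\ket{x}\ket{y}$ with $x \in \mathbb{Z}_p$ and $y \in \mathbb{Z}_q$, exploiting the fact that both sides are diagonal in this basis. First I would apply $CN_{p,q}^\dagger$, which is itself a permutation sending $\ket{x}\ket{y} \mapsto \ket{x}\ket{y - x \bmod q}$. Next, applying $I_p \otimes Z_q^\dagger$ produces the scalar factor $\omega_q^{-(y-x)}$ where $\omega_q = e^{2\pi i /q}$, without altering the ket. Finally, applying $CN_{p,q}$ undoes the control-shift on the second register, returning the ket to $\ket{x}\ket{y}$ while preserving the accumulated phase.

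The key step is recognizing the resulting scalar $\omega_q^{x-y} = \omega_q^{x}\,\omega_q^{-y}$ as a product of local eigenvalues. The first factor $\omega_q^{x}$ is precisely the eigenvalue of $T_q$ (as defined in Eq.~\eqref{eq:T_s}) when applied to a $p$-dimensional register at the computational basis state $\ket{x}$, and the second factor $\omega_q^{-y}$ is the eigenvalue of $Z_q^\dagger$ at $\ket{y}$. Since this factorization holds uniformly over all basis vectors and all involved operators are diagonal in the computational basis, the operator identity $CN_{p,q}(I_p \otimes Z_q^\dagger)CN_{p,q}^\dagger = T_q \otimes Z_q^\dagger$ follows.

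No genuine obstacle arises; the verification is a short computation. The conceptual point worth emphasizing, however, is precisely what makes this lemma the engine of Case~III: the operator $T_q$ produced on the first register carries phases of order $q$ despite that register having local dimension $p$. When $\gcd(p,q)=1$, these phases are incommensurate with the local Clifford phase set of $\SU(p)$ (which is constrained to roots of unity of order dividing $K_p$ in the sense of Proposition~\ref{prop:Ts_resource}), so the classical permutation $CN_{p,q}$ performs a phase kickback that transplants the modular arithmetic of the second factor onto the first, yielding a genuinely non-Clifford diagonal resource for free.
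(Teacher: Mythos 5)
Your proof is correct and follows essentially the same route as the paper: a direct verification on computational basis vectors $\ket{x}\ket{y}$, tracking the phase kickback $\omega_q^{x-y}$ and factoring it as $\omega_q^x\,\omega_q^{-y}$ to read off $T_q \otimes Z_q^\dagger$. The concluding remark about incommensurate phases when $\gcd(p,q)=1$ is a faithful gloss on what the lemma is for, but is not part of the proof itself.
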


\begin{proof}
The calculation holds for any pair of dimensions. Let $\omega = e^{2\pi i / q}$. Recall that the inverse Pauli operator acts as $Z^\dagger \ket{y} = \omega^{-y} \ket{y}$.
Evaluating the conjugation on the basis states:
\begin{align*}
    CN_{p,q} (I \otimes Z^\dagger) CN_{p,q}^\dagger \ket{x}\ket{y} &= CN_{p,q} (I \otimes Z^\dagger) \ket{x}\ket{y-x \bmod q} \\
    &= CN_{p,q} \left( \ket{x} \otimes \omega^{-(y-x)} \ket{y-x} \right) \\
    &= \omega^{-y+x} \ket{x}\ket{y} \\
    &= (\omega^{x} \ket{x}) \otimes (\omega^{-y} \ket{y}).
\end{align*}
The resulting operator is the tensor product of $T_{q} = \sum \omega^x \ket{x}\bra{x}$ acting on the control qudit and the local Pauli $Z_{q}^\dagger$ acting on the target.
\end{proof}

The induced resource $T_{q}$ derived in Lemma~\ref{lem:induced_magic} injects a diagonal gate into the Hilbert space of dimension $p$. By using $T_q$ with the local Clifford resources (specifically the Pauli $Z$, which corresponds to $T_p$), we can construct a local $T_{pq}$ in the space of dimension $p$.

\begin{proposition}\label{prop:bezout_construction}
Let $p$ and $q$ be coprime integers. Let $a, b \in \mathbb{Z}$ be the coefficients satisfying the Bézout identity $ap + bq = 1$.
By combining the local Pauli operator $T_p$ and the induced resource $T_q$, we construct the generic phase gate of order $pq$:
\begin{equation}
    T_{pq} = T_q^a T_p^b = \sum_{k=0}^{p-1} \exp\left( \frac{2\pi i k}{pq} \right) \ket{k}\bra{k}.
\end{equation}
\end{proposition}

\begin{proof}
Since both $T_p$ and $T_q$ are diagonal in the computational basis of the dimension-$p$ qudit, their product is also diagonal. The eigenvalue corresponding to the basis state $\ket{k}$ is the product of the component eigenvalues raised to the powers $a$ and $b$:
\[
    \lambda_k = \left( e^{2\pi i k/q} \right)^a \cdot \left( e^{2\pi i k/p} \right)^b = \exp\left( 2\pi i k \left[ \frac{a}{q} + \frac{b}{p} \right] \right).
\]
Finding a common denominator and applying the Bézout identity $ap + bq = 1$, the exponent simplifies to
\[
    \frac{a}{q} + \frac{b}{p} = \frac{ap + bq}{pq} = \frac{1}{pq}.
\]
Substituting this back, we obtain $\lambda_k = e^{2\pi i k / pq}$, which exactly matches the definition of the gate $T_{pq}$.
\end{proof}

\subsection{Main Theorem: Universality}

We now assemble the components---global irreducibility via Theorem~\ref{thm:tensor_mixing} and local infiniteness via the $T$'s constructed---to prove the central result of this work.

\begin{theorem}\label{thm:emergent_universality}
Let $d = d_1 d_2 \cdots d_n$ be a proper composite dimension decomposed into pairwise coprime prime-power factors (i.e., $d_k = p_k^{m_k}$ with $\gcd(d_i, d_j) = 1$ for $i\neq j$). The group generated by the single-qudit Clifford group and the intra-qudit gates; that is, the group
\[
G = \langle \mathcal{C}_d, \{CN_{k,k+1}\}_{k=1}^{n-1} \rangle,
\]
is dense in $\SU(d)$.
\end{theorem}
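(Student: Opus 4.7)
The plan is to apply the density criterion of Theorem~\ref{thm:density_criterion}, which reduces density in $\SU(d)$ to two sub-problems: (A) irreducibility of the adjoint action of $G$ on $\mathfrak{sl}(d,\mathbb{C})$, and (B) infiniteness of $G$. The starting point is Proposition~\ref{prop:clifford_decomposition}: the Chinese Remainder Theorem supplies the identifications $\mathcal{H}_d \cong \bigotimes_{k=1}^{n}\mathcal{H}_{d_k}$ and $\mathcal{C}_d \cong \bigotimes_{k}\mathcal{C}_{d_k}$, so the generators of $G$ split into local Cliffords on single tensor factors together with adjacent entanglers $CN_{k,k+1}$. Re-ordering the CRT factors is free, and whenever any $d_k$ is prime I would place it at position $n$, leaving the all-proper-prime-power regime as a separate case.

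For (A), I would induct on $n$ and apply Theorem~\ref{thm:tensor_mixing} at each step with the grouping $(d_1 \cdots d_{n-1}) \otimes d_n$. The inductive hypothesis delivers density (hence $\mathfrak{sl}(d',\mathbb{C})$-irreducibility, for $d' = d_1\cdots d_{n-1}$) of the subgroup acting on the super-factor. Lemma~\ref{lem:induced_magic} combined with left-multiplication by $I\otimes Z_{d_{k+1}}\in\mathcal{C}_d$ isolates $T_{d_{k+1}}$ on register $k$ for every $k<n$; since $\gcd(d_{k+1},d_k)=1$ forces $d_{k+1}\nmid d_k$, this gate is orbit-mixing in the sense of Definition~\ref{def:magic_gate}, and Theorem~\ref{thm:irreducibility_magic} gives irreducibility of $\langle\mathcal{C}_{d_k},T_{d_{k+1}}\rangle$ on $\mathfrak{sl}(d_k,\mathbb{C})$ (automatic via Theorem~\ref{thm:prime_irreducibility} when $d_k$ is prime). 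Because $\gcd(d',d_n)=1$ implies $d'\neq d_n$, Lemma~\ref{lem:normalizer_general} places $CN_{n-1,n}$ outside the normalizer of $\SU(d')\otimes\SU(d_n)$, and Theorem~\ref{thm:tensor_mixing} upgrades the local irreducibilities to global irreducibility on $\mathfrak{sl}(d,\mathbb{C})$.

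For (B), I would propagate the induced magic leftwards: the identity $CN_{k,k+1}(I\otimes T_{d_j})CN_{k,k+1}^{\dagger} = T_{d_j}^{-1}\otimes T_{d_j}$, once combined with $I\otimes T_{d_j}^{-1}\in\overline{G}$, deposits $T_{d_j}^{-1}$ on register $k$, and iteration along the chain accumulates every $T_{d_j}$ with $j>k$ on register $k$. Iterated Bézout synthesis (Proposition~\ref{prop:bezout_synthesis}) then assembles $T_d$ on register $k$. If some $d_k$ is prime then $d\nmid K_{d_k}$ and Corollary~\ref{cor:prime_universality} yields density of $\langle\mathcal{C}_{d_k},T_d\rangle$ in $\SU(d_k)$, injecting a positive-dimensional subgroup into $\overline{G}$. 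Otherwise every factor is a proper prime power with $d_k\geq 4$, and the coprimality constraint forces every other such factor to be at least $9$; for the smallest $d_k$ the ratio $d/d_k\geq 9$, so the spectral span of $T_d$ on register $k$ is at most $(d_k-1)/(9d_k)<1/9$ and the projective distance is below $2\sin(\pi/18)<1/2$. Combined with (A), Theorem~\ref{thm:infinite_criterion} delivers infiniteness, and Theorem~\ref{thm:density_criterion} concludes density.

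The technical hurdle I anticipate is supplying orbit-mixing magic on register $n$ when $d_n$ is a proper prime power. Lemma~\ref{lem:induced_magic} deposits non-Clifford content only on the control register of each $CN_{k,k+1}$, so the direct construction never places a non-Clifford gate on register $n$; with only $\mathcal{C}_{d_n}$ locally available, Theorem~\ref{thm:prime_irreducibility} then certifies reducibility on $\mathfrak{sl}(d_n,\mathbb{C})$. The factor re-ordering above dissolves the issue whenever some $d_k$ is prime; in the residual all-proper-prime-power regime one must instead construct a Hadamard-conjugated ``reverse'' variant of $CN_{n-1,n}$ to deposit an orbit-mixing phase on register $n$, or work directly with the global Pauli orbit decomposition of Lemma~\ref{lem:orbit_correspondence} to verify global irreducibility without demanding local irreducibility on register $n$ in isolation.
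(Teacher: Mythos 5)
Your argument follows the same architecture as the paper's: reduce to Theorem~\ref{thm:density_criterion}, deposit $T$-gates on the factors via Lemma~\ref{lem:induced_magic}, lift to global irreducibility with Theorem~\ref{thm:tensor_mixing}, and certify infiniteness via Bézout synthesis (Proposition~\ref{prop:bezout_synthesis}) and the spectral criterion. Your inductive reformulation of the tensor-mixing step is a genuine improvement, since Theorem~\ref{thm:tensor_mixing} as stated covers only two factors and the paper's one-sentence appeal to it for $n$ factors leaves the recursion implicit.

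The most valuable part of the proposal is the asymmetry you flag at the end, and you are right that it is a real gap in the paper's proof. Lemma~\ref{lem:induced_magic} deposits $T_{d_{k+1}}$ on the \emph{control} register of $CN_{k,k+1}$, hence on registers $1,\dots,n-1$ but never on register $n$; yet the paper's Part~1 treats every subsystem $k$ uniformly and asserts that Lemma~\ref{lem:induced_magic} supplies a local gate $T_{d_j}$ even for $k=n$, where no generator in $\{CN_{k,k+1}\}_{k=1}^{n-1}$ has control. When $d_n$ is a proper prime power, $\mathcal{C}_{d_n}$ alone acts reducibly on $\mathfrak{sl}(d_n,\mathbb{C})$ by Theorem~\ref{thm:prime_irreducibility}, so the hypothesis of Theorem~\ref{thm:tensor_mixing} on the last tensor leg is unverified. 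The same asymmetry silently affects Part~2, Case~B: taking $(d_1,d_2)=(9,4)$ one can only deposit $T_{36}$ on the $9$-dimensional register, whose spectral span $8/36$ violates the bound the paper checks; the paper's ``worst case'' $d_k=4$ implicitly assumes the smallest factor is a control. Two caveats on your proposed repairs: re-ordering the CRT factors is not free, since the generators $CN_{k,k+1}$ are tied to the ordering and the resulting groups are not manifestly conjugate by elements of $G$; and the ``Hadamard-conjugated reverse $CN$'' does not exist as a permutation matrix when $d_{n-1}\neq d_n$, because $(H_{d_{n-1}}\otimes H_{d_n})\,CN_{n-1,n}\,(H_{d_{n-1}}^\dagger\otimes H_{d_n}^\dagger)$ sums incommensurate roots of unity and yields genuine superpositions. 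A sound repair must verify global irreducibility directly by tracking how $\Ad_{CN_{n-1,n}}$ acts on the CRT-factorized $\SL(2,\mathbb{Z}/d\mathbb{Z})$ orbits of Lemma~\ref{lem:orbit_correspondence}, which neither you nor the paper carries out.
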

\begin{proof}
We verify the conditions of Corollary~\ref{cor:finite_groups}.

\textbf{1. Establishment of Local Irreducibility.}
First, we ensure that the local action on every factor is irreducible. Consider a subsystem $k$ of dimension $d_k$.
\begin{itemize}
    \item If $d_k = p$ is prime, by Theorem~\ref{thm:prime_irreducibility}, the local Clifford group $\mathcal{C}_{d_k}$ acts irreducibly on $\mathfrak{sl}(d_k, \mathbb{C})$.
    \item If $d_k = p^m$ is a prime power ($m \ge 2$), using Lemma~\ref{lem:induced_magic}, we obtain a local gate $T_{d_j}$. Since $\gcd(d_k, d_j)=1$, the order $s=d_j$ satisfies $s \nmid d_k$. Thus, $T_{d_j}$ satisfies the condition of Item 1 in Theorem~\ref{thm:prime_power_universality}, and consequently the local group $\langle \mathcal{C}_{d_k}, T_{d_j} \rangle$ acts irreducibly.
\end{itemize}
With irreducible local algebras established, the gates $\{CN_{k, k+1}\}$ serve as the mixing operators $V$. They do not normalize the product structure. By applying Theorem~\ref{thm:tensor_mixing}, the global group $G$ acts irreducibly on the global algebra $\mathfrak{sl}(d, \mathbb{C})$.

\textbf{2. Verification of Infiniteness.}
To prove universality, it remains only to show that $G$ is infinite. Since global irreducibility is guaranteed, it suffices to demonstrate that the group contains a dense subgroup on \emph{at least one} local subsystem. We identify such a subsystem by distinguishing two exhaustive cases based on the nature of the factorization of $d$:

\begin{itemize}
    \item Case A: The factorization contains at least one prime factor.
    Suppose there exists a subsystem $k$ such that $d_k = p$ is prime. Let $d_j$ be its coprime neighbor.
    We verify density on subsystem $k$. Using Proposition~\ref{prop:bezout_construction}, a local gate $T_S$ with $S = d_k d_j$ is constructed.
    Since $\gcd(d_k, d_j)=1$, we have $S \nmid d_k$ (and thus $S \nmid K_{d_k}$).
    By Corollary~\ref{cor:prime_universality}, the extended local group is dense in $\SU(p)$. Thus, the global group is infinite.

    \item Case B: The factorization consists entirely of composite prime powers.
    Suppose all factors are proper prime powers ($d_k = p_k^{m_k}$ with $m_k \ge 2$). By Theorem~\ref{thm:prime_power_universality}, the group is dense on factor $k$ if $S = d_k d_j$ satisfies the bound:
    \[ S > \pi(d_k-1)\big/(2\,\arcsin(1/4)). \]
    Consider the worst-case scenario with the smallest possible coprime proper powers: $d_k=4$ and $d_j=9$.
    The bound for $d_k=4$ is $S > \frac{3\pi}{2\,\arcsin(1/4)} \approx 18.65$ and $S = 4 \times 9 = 36$.
    Since $36 > 18.65$, the condition is strictly satisfied. Any other pair of proper powers yields a larger product, satisfying the bound strictly.
    Therefore, the group is dense in the subsystem $k$.
\end{itemize}

In both scenarios, we guarantee the existence of at least one locally dense subsystem. Combined with the global irreducibility established in the first part, this implies that $G$ is dense in the full group $\SU(d)$.
\end{proof}

\subsection{Explicit Universal Gate Sets}

A consequence of Theorem~\ref{thm:emergent_universality} is that, for composite systems with coprime dimensions, universality can be generated from Clifford gates together with internal arithmetic couplings.
In the qubit case ($d=2$), the group generated by local Clifford gates and CNOTs remains finite (the global Clifford group), so a further non-Clifford resource is needed to obtain density.
In contrast, for coprime composite systems, the inclusion of the intra-qudit gate $CN$---typically considered a standard permutation---is already sufficient, at the level of generating sets, to break the finite group structure. Due to the arithmetic mismatch between factors, the interplay between the local Clifford groups and the intra-qudit gate generates an infinite, dense group without taking an explicit diagonal magic gate as an additional primitive.

\begin{corollary}
Let $d = d_1 d_2 \cdots d_n$ be a proper composite dimension with pairwise coprime prime-power factors. The following finite set of gates constitutes a universal set for quantum computation on arrays of registers of dimension $d$:
\begin{equation}
    \mathcal{G} = \{ X, H_d, P \} \cup \bigcup_{k=1}^{n-1} \{ CN_{d_k, d_{k+1}} \} \cup \{ CN_{d,d} \}.
\end{equation}

\end{corollary}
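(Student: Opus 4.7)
The plan is to assemble the corollary from two results already established in the paper: the single-qudit density theorem (Theorem~\ref{thm:emergent_universality}) and the entangling-gate criterion of Brylinski and Brylinski (Theorem~\ref{thm:brylinski}). Together they reduce the statement to two small verifications: that the single-register generators in $\mathcal{G}$ generate a dense subgroup of $\SU(d)$, and that $CN_{d,d}$ is a genuine entangling gate across two registers.

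First I would isolate the single-register subset
\[
\mathcal{G}_{\mathrm{loc}} = \{X, H_d, P\} \cup \bigcup_{k=1}^{n-1}\{CN_{d_k, d_{k+1}}\},
\]
which acts entirely on a single register of dimension $d = d_1 \cdots d_n$. Since $\langle X, H_d, P\rangle = \mathcal{C}_d$ by construction, the group generated by $\mathcal{G}_{\mathrm{loc}}$ coincides with $\langle \mathcal{C}_d, \{CN_{d_k, d_{k+1}}\}_{k=1}^{n-1}\rangle$, which is dense in $\SU(d)$ by Theorem~\ref{thm:emergent_universality}. Hence $\mathcal{G}_{\mathrm{loc}}$ is single-qudit universal in the sense of our topological definition.

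Next I would invoke Theorem~\ref{thm:brylinski} with the two-register gate $V = CN_{d,d} \in \SU(d^2)$, so it only remains to verify its entangling condition~\eqref{eq:brylinski_cond}. The cleanest route is to exhibit a separable input whose image is entangled: applying $CN_{d,d}$ to $\ket{+}\ket{0}$, where $\ket{+} = d^{-1/2}\sum_{x}\ket{x}$, produces the maximally entangled state $d^{-1/2}\sum_{x}\ket{x}\ket{x}$, of Schmidt rank $d \geq 2$. Any $U_1 \otimes U_2$ preserves Schmidt rank, so $CN_{d,d}$ cannot factor as a local product. Moreover, $(U_1 \otimes U_2)\cdot\mathrm{SWAP}$ sends the product state $\ket{+}\ket{0}$ to $U_1\ket{0}\otimes U_2\ket{+}$, which is still separable, ruling out that form as well. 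Brylinski's criterion therefore applies, and $\mathcal{G}$ is universal on arrays of $d$-dimensional registers.

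There is no substantial obstacle here: the corollary is a packaging of Theorem~\ref{thm:emergent_universality} and Theorem~\ref{thm:brylinski}. The only conceptual care needed is to keep the two distinct roles of the CNOT-type gates clearly separated---the intra-qudit gates $CN_{d_k, d_{k+1}}$ supply the arithmetic mixing responsible for single-register density inside $\SU(d)$, whereas the inter-register gate $CN_{d,d}$ supplies the Brylinski entangling resource between registers---and that both families are genuinely required by the statement.
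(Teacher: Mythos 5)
Your proof is correct and follows the same route the paper intends: establish single-qudit universality of $\mathcal{G}_{\mathrm{loc}}$ via Theorem~\ref{thm:emergent_universality}, then invoke Theorem~\ref{thm:brylinski} with $CN_{d,d}$ as the inter-register entangling resource. The paper only gestures at this in the accompanying remark; your Schmidt-rank verification of the Brylinski condition~\eqref{eq:brylinski_cond} fills in the detail cleanly.
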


\begin{remark}
The set $\mathcal{G}$ satisfies the universality condition of Theorem~\ref{thm:brylinski}. Note the distinction between the entangling resources: $CN_{d_k, d_{k+1}} \in \SU(d)$ is an intra-qudit operation that couples internal coprime factors and acts as the generator of non-Clifford resources. In contrast, $CN_{d,d} \in \SU(d^2)$ represents the canonical inter-qudit gate acting on two logical qudits.
\end{remark}

\section{Discussion and Outlook}\label{sec:discussion}

The results of this paper establish a trichotomy, for single-qudit universality in Clifford-based gate sets, determined by the prime factorization of the local Hilbert space qudit dimension $d$. In prime dimensions, the Clifford group is maximally rigid, so any non-Clifford gate suffices for universality. In prime-power dimensions $d=p^m$ with $m \geq 2$, not every non-Clifford gate is universal. One needs gates that couple the invariant sectors left disconnected by the Clifford action, and this can be achieved either by suitable members of the $T_s$ family ($d$-dimensional generalization of the qubit $T$ gate) or by simple permutations such as the transposition exchanging $\ket{0}$ and $\ket{1}$. In coprime composite dimensions, arithmetic couplings between the coprime factors, implemented by intra-qudit CNOTs, already provide the required non-Clifford resources and generate the corresponding diagonal phases associated with magic states.

One consequence is that for every $d \geq 4$, classical reversible permutations can serve as universal non-Clifford resources, although the appropriate permutation depends on the arithmetic structure of $d$. This should be distinguished from the prime-dimensional corollary for the specific transposition $(0\;1)$, which requires $p \ge 5$; for $p=2,3$ every permutation is affine and therefore Clifford (Remark~\ref{rem:non_affine_permutations}). In this sense, the higher-dimensional setting enlarges the range of gate types that can play the role of a universal resource.

The present work focuses on single-qudit universality, where the Clifford group is associated to the abelian group $\mathbb{Z}_d$. From single-qudit universality, general (multi-qudit) quantum universality follows via a result of Brylinski and Brylinski (Theorem~\ref{thm:brylinski}). An extension of this problem concerns multi-qudit Clifford groups associated to $\mathbb{Z}_d^n$, corresponding to $n$ qudits of dimension $d$. For $d$ prime, the maximality of the $n$-qudit Clifford group was established by Nebe, Rains, and Sloane~\cite{Nebe2001}, using $\ell$-adic methods and the classification of finite simple groups. An open problem is whether the trichotomy identified here extends to the multi-qudit setting for general $d$, and whether the structure of $\mathbb{Z}_d^n$ for composite $d$ introduces further phenomena beyond those captured by the single-qudit analysis presented here.

From a resource-theoretic perspective, magic and non-stabilizerness have already been studied extensively, including qubit and prime-dimensional qudit settings~\cite{Veitch2014, Howard2014, Howard2017, Howard2012, Campbell2012, Beverland_2020, Bravyi_2019}. Our contribution addresses a different question, however. 
We classify which local non-Clifford gates extend the Clifford group to a quantum universal set. From the fault-tolerant point of view, the standard Clifford~+~$T$/magic-state paradigm remains the main comparison point. Our results show that, in higher-dimensional settings, the universalizing resource need not always appear as an explicit diagonal $T$-type gate in the generating set: depending on $d$, the same structural role can be played by simple permutations or by arithmetic couplings between coprime factors.

On the experimental side, platforms accessing non-binary levels---trapped ions~\cite{RingbauerEtAl2022, Low2025}, superconducting circuits~\cite{Chicco_2023, Morvan2021}, and photonic systems~\cite{Wang2020review, liu2024}---provide settings for higher-dimensional quantum computation. In such platforms, diagonal phase control, basis permutations, and couplings between internal factors of a composite register may arise more or less readily depending on the hardware. Our results therefore identify which kinds of local operations are sufficient to leave the Clifford regime in different dimensional settings. Our results should be understood as a structural statement about universal generating sets; however, the choice of a specific non-Clifford set (CNOT-type or $T$-type) is architecture-dependent.

A direction for future work is to compare the implementation cost and fault-tolerant overhead associated with these higher-dimensional mechanisms against more standard low-dimensional approaches, such as magic state distillation~\cite{Litinski2019, Gidney_2021} or the recently proposed magic state cultivation~\cite{Gidney_2024}. Determining whether any of these alternative routes yields a practical advantage requires an architecture-specific analysis of control, noise, and compilation overhead. We leave such an analysis for future research.


\begin{acknowledgments}
The authors thank Vsevolod I. Yashin and Xingshan Cui for their valuable comments and suggestions. C.G.\ was partially supported by Grant INV-2025-213-3452 from the School of Science of Universidad de los Andes. J.R.\ acknowledges support from the Office of the Vice-president of Research and Creative Activities and the Office of the Faculty of Science Vice-president of Research of Universidad de los Andes under the FAPA grant.
\end{acknowledgments}

\bibliography{biblio}

\end{document}